\documentclass{article}

\usepackage{fullpage}

\usepackage[utf8]{inputenc} %
\usepackage[style=alphabetic,natbib=true,maxcitenames=2, maxbibnames=10,backend=bibtex]{biblatex}
\usepackage[T1]{fontenc}    %
\usepackage[hypertexnames=false]{hyperref}
\usepackage{url}            %
\usepackage{booktabs}       %
\usepackage{amsfonts}       %
\usepackage{nicefrac}       %
\usepackage{microtype}      %
\usepackage{xcolor}         %
\usepackage{amsmath}
\usepackage{cleveref}
\usepackage{cases}
\usepackage{amsthm}
\usepackage{xspace}
\usepackage{graphicx}
\usepackage{comment}
\usepackage{enumitem}
\usepackage[final]{showlabels}
\usepackage{tikz}
\usepackage{nicefrac}
\usepackage{thm-restate}
\usepackage{pgfplots}
\usepackage{subcaption}
\usepackage{nicefrac}
\pgfplotsset{compat=1.18}
\usetikzlibrary{calc, positioning, arrows.meta}
\newcommand{\PURIFY}{\textup{\textsc{Purify}}\xspace}
\newcommand{\NOR}{\textup{\textsc{Nor}}\xspace}
\newcommand{\ORACLE}{\textup{\textsc{Oracle}}\xspace}
\newcommand{\CFont}[1]{{\textup{\textsc{#1}}}\xspace}
\newcommand{\OPC}{\CFont{OraclePureCircuit}}
\newcommand{\PC}{\CFont{PureCircuit}}
\newcommand{\SB}{\CFont{SmoothBrouwer}}
\newcommand{\SP}{\CFont{StrongSperner}}
\newcommand{\GDA}{\CFont{GDA-FixedPoint}}

\newcommand{\In}{\textnormal{In}}
\newcommand{\Out}{\textnormal{Out}}
\newcommand{\Reals}{\mathbb{R}}
\newcommand{\poly}{\textnormal{poly}}

\newcommand{\Ccal}{\mathcal{C}}
\newcommand{\cG}{\mathcal{G}}
\newcommand{\cL}{\mathcal{L}}
\newcommand{\cE}{\mathcal{E}}
\newcommand{\cH}{\mathcal{H}}

\newcommand{\wid}{M}
\newcommand{\dims}{d}
\newcommand{\cop}{K}

\newtheorem{theorem}{Theorem}[section]
\newtheorem{lemma}{Lemma}
\newtheorem{corollary}{Corollary}
\theoremstyle{definition}
\newtheorem{definition}{Definition}
\newtheorem{problem}{Problem}
\newtheorem{remark}{Remark}

\Crefname{problem}{Problem}{Problems}

\definecolor[named]{linkcolor}{cmyk}{0.55,1,0,0.15}
\definecolor[named]{citecolor}{cmyk}{1,0.58,0,0.21}
\hypersetup{
    colorlinks=true,
    linkcolor=linkcolor,
    filecolor=linkcolor,      
    citecolor=citecolor,
    urlcolor=black,
}

\title{
Min-Max Optimization Requires Exponentially Many Queries
}

\author{
\begin{tabular}{cc}
& \\
{Martino Bernasconi}\thanks{Martino Bernasconi and Andrea Celli were supported by an ERC grant (Project 101165466 — PLA-STEER).} & {Matteo Castiglioni}\thanks{Matteo Castiglioni was supported by the FAIR (Future Artificial Intelligence Research) project PE0000013, funded by the NextGenerationEU program within the PNRRPE-AI scheme (M4C2, Investment 1.3, Line on Artificial Intelligence), and by the EU Horizon project ELIAS (European Lighthouse of AI for Sustainability, No. 101120237).}\\
\small{Bocconi University} & \small{Politecnico di Milano}\\
{\textcolor{black}{\small\texttt{martino.bernasconi@unibocconi.it}}} & %
{\textcolor{black}{\small\texttt{matteo.castiglioni@polimi.it}}}\\
& \\
{Andrea Celli}\footnotemark[1] & {Alexandros Hollender}\\
\small{Bocconi University} & \small{University of Oxford}\\
{\textcolor{black}{\small\texttt{andrea.celli2@unibocconi.it}}} & %
{\textcolor{black}{\small\texttt{alexandros.hollender@cs.ox.ac.uk}}}\\
& \\
\end{tabular}
}

\date{}

\begin{document}

\maketitle

\begin{abstract}
\noindent We study the query complexity of min-max optimization of a nonconvex-nonconcave function $f$ over $[0,1]^d \times [0,1]^d$. We show that, given oracle access to $f$ and to its gradient $\nabla f$, any algorithm that finds an $\varepsilon$-approximate stationary point must make a number of queries that is exponential in $1/\varepsilon$ or~$d$.
\end{abstract}

\newpage
\section{Introduction}
In this paper, we consider the following constrained min-max optimization problem
\begin{equation}\label{eq:minmax-intro}
\min_{x \in X} \max_{y \in Y} \, f(x,y)
\end{equation}
where $X, Y \subset \Reals^d$ are compact convex sets, and $f: X \times Y \to \Reals$ is continuously differentiable with Lipschitz-continuous gradient $\nabla f$.\footnote{Here, and in the rest of this discussion, we assume that both $f$ and $\nabla f$ are $1$-Lipschitz continuous. This is without loss of generality, as it can be achieved by a simple rescaling of the function. Similarly, we also assume that $X, Y \subseteq [0,1]^d$, which can also be ensured by rescaling the domain. This will allow us to express bounds solely in terms of the approximation error and the dimension $d$.} %
This problem has wide practical relevance, as it is a fundamental component in optimization and machine learning applications such as generative adversarial networks \citep{goodfellow2014generative}, robustness to adversarial attacks \citep{madry2018towards, razaviyayn2020nonconvex}, and, most recently, large language model \citep{swamy2024minimaximalist, liu2024comal, munos2024nash, sun2025game, paulus2025safety} (see also \citet{daskalakis2022non} for a broader discussion on the importance of this problem in optimization and machine learning).

The problem can be solved efficiently in the case where $f$ is %
\emph{convex-concave}, i.e., when $x \mapsto f(x,y)$ is convex for all $y \in Y$, and $y \mapsto f(x,y)$ is concave for all $x \in X$. Namely, an $\varepsilon$-approximate (global) optimal solution of \eqref{eq:minmax-intro} can be identified using at most $\poly(d, \log(1/\varepsilon))$ queries to $f$ and $\nabla f$ \citep{rodomanov2023subgradient, anagnostides2025polynomial}. %
However, as soon as we let the function $f$ be only \emph{nonconvex-concave} (or, analogously, \emph{convex-nonconcave}), the problem already becomes harder to solve. Indeed, it is easy to see that it is at least as hard as a constrained nonconvex minimization problem of the form
\begin{equation}\label{eq:min-intro}
\min_{x \in X} \, g(x)
\end{equation}
where $g$ is continuously differentiable with Lipschitz-continuous gradient $\nabla g$. This is immediate by setting $f(x,y) := g(x)$. Importantly, since $g$ can be nonconvex, it is known that we cannot hope to efficiently locate the approximate position of a global minimum, or even a point with an objective function value close to that of a global minimum \citep{vavasis1995complexity}.

Instead, a more tractable solution concept for the optimization problem \eqref{eq:min-intro} is an approximate \emph{local} minimum, or, more precisely, an approximate constrained first-order stationary point. This is also called an approximate Karush-Kuhn-Tucker (KKT) point, and corresponds to an $\varepsilon$-approximate fixed point of the projected gradient descent operator. It is well known that such a point can be found with at most $O(1/\varepsilon^2)$ first-order queries, for example, by projected gradient descent. Moreover, this dependence on $\varepsilon$ is unavoidable in the worst case: no algorithm can find an $\varepsilon$-approximate KKT point of \eqref{eq:min-intro} with fewer than $\poly(1/\varepsilon)$ queries \citep{vavasis1993black,carmon2020lower}.

A similar picture holds for the nonconvex-concave case of the min-max problem \eqref{eq:minmax-intro}. In this setting, one can still compute an appropriate local solution using at most $\poly(d,1/\varepsilon)$ first-order queries \citep{nouiehed2019solving, lin2020gradient, ostrovskii2021efficient}. Thus, in both nonconvex minimization and nonconvex-concave min-max optimization, local solution concepts remain algorithmically tractable, although the dependence on the accuracy is necessarily polynomial.

Unfortunately, in most practical applications the function $f(x,y)$ is neither convex in $x$, nor concave in $y$. Also in this setting, the focus shifts toward the less ambitious goal of finding local solutions to problem \eqref{eq:minmax-intro}. Many notions of local solutions have been proposed for the problem \eqref{eq:minmax-intro}, see, e.g., \citet*{jin2020local} for an overview. In this work, we focus on $\varepsilon$-approximate (constrained) stationary points, namely a tuple $(x,y)\in X\times Y$ such that 
\[-\nabla f_x(x,y)^\top(x'-x)\le\varepsilon \quad \text{and}\quad \nabla f_y(x,y)^\top(y'-y)\le\varepsilon \quad \forall (x',y')\in X\times Y.\] This is a weak first-order condition: it merely requires that neither player has a feasible first-order improving direction larger than $\varepsilon$. This notion is very weak in the sense that it is necessary for the standard local optimality notions considered in the min-max literature, including local min-max points and local saddle points.\footnote{Unfortunately, the terminology for these solution concepts is not fully standardized across the literature.} This is favorable in our case, since we will provide lower bounds, thereby strengthening our results. 
First-order stationary points can also be viewed as fixed points of the projected gradient descent-ascent map, just as KKT points of a minimization problem are fixed points of projected gradient descent. However, in contrast to minimization problems, it is well known that gradient descent-ascent dynamics exhibit cycling behavior and thus do not yield an algorithm guaranteed to find such solutions \citep{mertikopoulos2018cycles, daskalakis2018limit}.

This contrast naturally leads to the following question:
\begin{quote}
When $f$ is nonconvex-nonconcave, can an $\varepsilon$-approximate first-order stationary point of the min-max problem \eqref{eq:minmax-intro} be found using at most $\poly(d, 1/\varepsilon)$ queries?
\end{quote}
In this paper, we answer this question in the negative. Namely, we show that any algorithm finding an $\varepsilon$-approximate solution must make a number of queries that is exponential in $d$ or in $1/\varepsilon$. This shows that finding approximate stationary points of \eqref{eq:minmax-intro} is significantly harder than in the corresponding minimization problem \eqref{eq:min-intro}. In particular, our result implies that nonconvex-nonconcave min-max optimization does not admit an efficient gradient-descent-type algorithm.

A negative result of this form was already known for the more general case of \emph{joint} constraints, i.e.,
\begin{equation}\label{eq:minmax-joint-intro}
\begin{split}
&\min_{x \in \Reals^d} \max_{y \in \Reals^d} \, f(x,y) \\
&\text{s.t.} \quad (x,y) \in P
\end{split}
\end{equation}
where $P$ is a convex polytope. \citet*{daskalakis2021complexity} proved that any algorithm that finds an $\varepsilon$-approximate solution of \eqref{eq:minmax-joint-intro} must make a number of queries that is exponential in $d$ or in $1/\varepsilon$. Note that problem \eqref{eq:minmax-joint-intro} is more general than \eqref{eq:minmax-intro}, because it allows \emph{joint} constraints between $x$ and $y$, meaning that $P$ cannot be decomposed as a product $P = X \times Y$. As discussed by \citet*{bernasconi2024roleconstraintscomplexityminmax}, the joint constraints are used in a crucial way in the work of \citet{daskalakis2021complexity}.\footnote{In fact, the lower bound proved by \citet{daskalakis2021complexity} for joint constraints even applies to the setting where $f$ is nonconvex-concave. Since such problems can be solved using $\poly(d,1/\varepsilon)$ queries under product constraints, this shows that the hardness result in \citep{daskalakis2021complexity} fundamentally relies on joint constraints.}
Therefore, their result does not apply to the more natural setting of problem \eqref{eq:minmax-intro}, i.e., when we have \emph{product} constraints (meaning that the feasible choices of $x$ and $y$ are independent at the level of the constraint set). Determining the query complexity under product constraints (i.e., our main open question stated above) remains a major open problem, as recently discussed in a column in the SIAM Activity Group on Optimization newsletter \citep{diakonikolas2025-SIAG-OPT}.

A very recent result of \citet{bernasconi2026complexity} established that finding approximate first-order stationary points of problem \eqref{eq:minmax-intro}, i.e., nonconvex-nonconcave min-max optimization with product constraints, is PPAD-hard. This means that when the function $f$ and its gradient $\nabla f$ are provided as arithmetic circuits or Turing machines, then the problem cannot be solved in polynomial time, unless PPAD $=$ P.
Importantly, as mentioned in \citet[Section~8]{bernasconi2026complexity}, their result does not imply a query lower bound for the problem, because their reduction is not \emph{black-box}: it reduces from a purely white-box problem called \PC \citep{deligkas2022pure}. They ask the question of whether it is possible to improve their reduction so as to also obtain a query lower bound for the problem. In this work, we show that this is indeed possible. Our improved reduction resolves the main open question stated above.

\paragraph{Our Contribution.}
We consider the min-max optimization problem \eqref{eq:minmax-intro} with the simple domain $X = Y = [0,1]^d$ and show that any algorithm that finds an $\varepsilon$-approximate first-order stationary point must make a number of queries to $f$ and $\nabla f$ that is exponential in $1/\varepsilon$ or $d$.  To be more precise, we prove the following theorem.\footnote{We use the term ``exponential in $x$'' in the weak sense, i.e., to mean $2^{\Omega({x^c})}$ for some absolute $c > 0$, as opposed to $2^{\Omega(x)}$.}

\begin{theorem}[Informal version of \Cref{thm:final}]\label{thm:main-intro}
Any algorithm that outputs an $\varepsilon$-approximate stationary point for problem \eqref{eq:minmax-intro}, must make at least a number of queries to $f$ or $\nabla f$ that is exponential in $d$, even when $\varepsilon$ is inversely polynomial in $d$.
\end{theorem}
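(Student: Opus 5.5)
The plan is a black-box reduction from a fixed-point problem whose query complexity is already known to be exponential; the natural source is Brouwer or \SP on a grid. The known lower bounds (Hirsch--Papadimitriou--Vavasis, Chen--Deng) say that finding an approximate fixed point of a Lipschitz map $F:[0,1]^m\to[0,1]^m$ from oracle access requires $2^{\Omega(m)}$ queries, even at inverse-polynomial precision. These carry over to a smooth version, \SB, where $F$ is smooth with bounded derivatives. The reduction will build an instance $f$ of the min-max problem on $[0,1]^d\times[0,1]^d$ with $d=\poly(m)$ and $\varepsilon=1/\poly(m)$. It must satisfy two requirements:
\begin{itemize}
\item each query to $f$ or $\nabla f$ at a point is answered using only $\poly(m)$ queries to $F$ (and its derivatives) at nearby points;
\item every $\varepsilon$-stationary point of $f$ yields, with no further or with polynomially many queries, an approximate fixed point of $F$.
\end{itemize}
Together these give the exponential lower bound.

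The core gadget follows the PPAD-hardness reduction of \citet{bernasconi2026complexity}, which encodes a \PC instance with \NOR and \PURIFY gates into a product-constrained nonconvex-nonconcave objective. I would replace the white-box circuit by an \ORACLE gate, giving an \OPC problem whose oracle is $F$ evaluated on the current candidate $x$. The key steps are:
\begin{enumerate}
\item Define the \OPC problem and show it is query-hard by reduction from \SB/\SP. Its solutions are assignments where purified, near-Boolean copies of $x$ are fed to the oracle, and the output must agree with $x$ up to robustness.
\item Design $f(x,y)$ so that the min player's stationarity forces $x$ to track a target given by the max player's variables, as in $f=\sum_i (x_i-y_i)\cdot(\ldots)$-type coupling terms. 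The max player's stationarity in turn forces $y$ to encode $F$ at purified copies of $x$.
\item Smooth all gate terms with bump functions so that $f$ and $\nabla f$ are $\poly(m)$-Lipschitz, then rescale.
\item Show locality: $f(x,y)$ and $\nabla f(x,y)$ depend on $F$ only through $F$ and $DF$ at $\poly(m)$ points determined by $(x,y)$, which makes the reduction black-box.
\end{enumerate}

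The main obstacle is the correctness analysis at the boundary of the product domain, combined with degenerate (non-Boolean) inputs to the oracle. Stationarity on $[0,1]^d$ permits spurious solutions where a coordinate sits on a face with the gradient pointing outward. Moreover, the oracle is queried at points where the purification has failed, so its answers there are uncontrolled. I would handle this in the style of \PC. Use many purified copies so that, by a pigeonhole or averaging argument, most copies are Boolean. Then choose the \ORACLE gate's semantics so that only the robust, averaged output is required to be consistent, which tolerates a small fraction of garbage answers. For the boundary issue, I would first try to shift the intended equilibria into the interior by an affine change of variables, and add regularizers of the form $-\delta(y_i-\tfrac12)^2$ and $\delta(x_i-\tfrac12)^2$ so that boundary stationary points are forced to satisfy the same gate constraints.

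Finally, I would combine the steps: an $\varepsilon$-stationary point gives a valid \OPC solution, and that solution gives an approximate fixed point of $F$. Tracking parameters then gives the lower bound $2^{\Omega(d^{c})}$ with $\varepsilon=1/\poly(d)$, which is \Cref{thm:final}.
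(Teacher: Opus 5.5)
There is a genuine gap. Your outer layer matches the paper: hardness of \SP via \citet{hirsch1989exponential}, an \ORACLE-augmented \PC, and the template of \citet{bernasconi2026complexity}. What you do not supply is the step that carries the theorem: an encoding into a product-constrained objective such that \emph{every} $\varepsilon$-stationary point yields a solution of a query-hard problem. The template of \citet{bernasconi2026complexity} does not, on its own, make stationary points decode to valid gate assignments. In the large-signal regime ($s_q=1$) it relies on a second, inner problem whose solutions act as an escape certificate. There, that inner problem is a linear variational inequality coming from polymatrix games, which can be solved with polynomially many queries. Your linear chain ``stationary point $\Rightarrow$ \OPC solution $\Rightarrow$ fixed point'' has no such escape.

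The paper's new ingredient is to build \SB from the \emph{same} \OPC instance (\Cref{thm:SB}) and use it as the inner gadget, so both branches of the dichotomy (\Cref{lem:dicotomy}) land in a query-hard problem. Each node $v$ carries:
\begin{itemize}
\item $n$ replicated pairs $(x_i^v,y_i^v)\in[0,1]^m\times[0,1]^m$;
\item the term $H_v=\sum_i\langle F(\xi_i^v)-\xi_i^v,\,y_i^v-x_i^v\rangle$ with $\xi_i^v=(x_i^v+y_i^v)/2$;
\item the energy $\cE_v=\phi_{3m,3m+1}(\|x^v-y^v\|^2)$, which is what the gate signals $s_w$ read;
\item the graded regularizer $\sum_i M_i\|x_i^v-y_i^v\|^2$ with $M_i=\delta(i-n/2)$.
\end{itemize}
If $s_q=0$, the regularizer forces $\|x^q-y^q\|^2\le 3m$. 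If $s_q=1$, the bound on $|H_w|$ that is sublinear in $n$ (\Cref{lem:boundH2}) keeps the cross term $\Delta_q$ inside the range of the $M_i$. Hence at least $1/\delta$ copies have $|M_i+\Delta_q|<1$. For each such copy, averaging the two players' inequalities at the deviation $z=F_j(\xi_i^q)$ gives either $(F_j(\xi_i^q)-\xi_{i,j}^q)^2\le\rho^2$ for all $j$, i.e.\ an \SB solution, or $\|x_i^q-y_i^q\|_1\ge\rho^2/(10e^{13})$. If no midpoint is an \SB solution, the second case holds for all these copies and pushes the energy above $3m+1$.

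Your Step 2 and the proposed boundary fix would not substitute for this. Couplings of the kind you describe tend to degenerate under product constraints. For instance, $\|x-F(y)\|^2-\|x-y\|^2=2\langle x,y-F(y)\rangle+\|F(y)\|^2-\|y\|^2$ is linear in $x$. So the min player sits on a face, and its stationarity says nothing about $F(y)=y$. Adding $\delta(x_i-\tfrac12)^2-\delta(y_i-\tfrac12)^2$ only contributes a restoring force of order $\delta$ against gadget gradients of order one or larger. Nothing in your plan shows that boundary stationary points satisfy the gate constraints. The paper never needs interiority: every deviation used in its analysis ($x_{i,j}^q\to y_{i,j}^q$, $y_{i,j}^q\to x_{i,j}^q$, and both to $F_j(\xi_i^q)$) is feasible in $[0,1]$.

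Your treatment of non-Boolean oracle inputs is also at the wrong layer:
\begin{itemize}
\item The \ORACLE gate is simply unconstrained unless all inputs are bits.
\item Purification and averaging live inside the \SP-to-\OPC reduction of \citet{deligkas2022pure}.
\item At the \GDA level, $b$ is obtained by thresholding $\cE_v$, so Boolean inputs give $s_v=\cL(b(u_1),\dots,b(u_N))$ exactly.
\item The oracle must be the Boolean function $\cL$ derived from the \SP labeling, not ``$F$ at the current candidate''.
\end{itemize}

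Finally, the claim that the lower bounds ``carry over'' to a smooth \SB needs an argument. The maps of \citet{hirsch1989exponential} are only Lipschitz, and smoothing a grid interpolation naively costs up to $2^d$ queries per evaluation. The paper instead obtains hardness of \SB from \OPC using the robust interpolation of \Cref{lem:inteprolation}. That interpolation has at most one active vertex, so each query to $F$ or $J_F$ costs at most $|V|$ queries to $\cL$.
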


This provides an \emph{unconditional} lower bound for the problem, as opposed to the conditional time-complexity lower bound of \citet{bernasconi2026complexity}, which only holds if PPAD $\neq$ P. Furthermore, our lower bound even rules out algorithms that are allowed to perform unbounded computation between queries. \citet*{chen2024computing} recently gave such a query-efficient but time-inefficient algorithm for the problem of computing a fixed point of a contraction map. \Cref{thm:main-intro} shows that no such algorithm exists for nonconvex-nonconcave min-max optimization.

\paragraph{Our Techniques.}
As mentioned above, the main obstacle to obtaining a query lower bound from the construction of \citet{bernasconi2026complexity} is the usage of the \PC problem in their reduction. The \PC problem was introduced by \citet*{deligkas2022pure} as a tool for proving very strong inapproximability results in the context of PPAD. Importantly, it is a purely \emph{white-box} problem, meaning there is no oracle in the problem definition and the algorithm has full knowledge of the instance. Thus, it cannot be used to prove a query lower bound for our min-max problem.

The natural thing to do is to try to replace the \PC problem in the reduction by some other Brouwer-like problem. The issue with this is that the reduction of \citet{bernasconi2026complexity} heavily relies on the fact that the gates of the \PC problem are very flexible and allow for a lot of error without breaking. This fact is also precisely why \PC has been so useful in proving PPAD-hardness for various approximation problems in game theory and beyond. Unfortunately, black-box Brouwer-like problems, for which query lower bounds are known, do not offer this flexibility.

We resolve this issue by introducing a black-box version of the \PC problem, which we call \OPC. The definition of this new problem is identical to \PC, except that the problem has been augmented with one additional type of gate, which we call an oracle gate. The \OPC problem allows us to achieve our two desiderata: (i) keeping the simplicity and flexibility of \PC, while also (ii) having a problem that has an exponential query lower bound. Indeed, the query lower bound for \OPC follows relatively easily by adapting the existing PPAD-hardness proof for \PC from \citep{deligkas2022pure}, as we show in \Cref{sec:OPC}.

What remains then is modifying the reduction of \citet{bernasconi2026complexity} so that it uses \OPC instead of \PC. In particular, this entails implementing the newly introduced oracle gate. Another part of the reduction of \citet{bernasconi2026complexity} breaks when we try to use it to prove query lower bounds. Indeed, they reduce from two different problems: one of which is the \PC discussed above, and the other one is a problem related to linear variational inequalities, which is also not a query-hard problem; however, the modification of this part of the reduction is more standard, and we solve it by defining a smooth version of Brouwer's fixed point theorem, as detailed in \Cref{sec:smooth,sec:SB}. Finally, the proof culminates in \Cref{sec:GDA}, where we put everything together to prove \Cref{thm:main-intro}.

\section{Preliminaries}\label{sec:prelims}

\paragraph{Mathematical Notation.}
For a function $F:[0,1]^d\to\Reals^m$, we say that $F$ is $G$-Lipschitz if $\|F(x)-F(x')\|\le G\|x-x'\|$, and $L$-smooth if
$\|J_F(x)-J_F(x')\|_2\le L\|x-x'\|_2$, where $J_F(x)\in\Reals^{m\times d}$ is the Jacobian and $\|\cdot\|$ denotes the Euclidean norm for vectors and the spectral norm for matrices.
For a matrix $A\in\Reals^{m\times n}$, let $\|A\|_{\max}:=\max_{i,j}|A_{ij}|$. For $x\in\Reals^d$ and $r\ge 0$, let $B_r^\infty(x):=\{y\in\Reals^d:\|x-y\|_\infty\le r\}$. For $f:\Reals^n\to\Reals$, we denote by $D^2f(x)\in\Reals^{n\times n}$ its Hessian. For $g:\Reals\to\Reals$, let $\|g\|_\infty:=\sup_{x\in\Reals}|g(x)|$. Finally, $[n]:=\{1,\dots,n\}$, and $\mathcal C^k$ denotes the class of functions whose partial derivatives up to order $k$ are continuous.

The main result of the paper is a query lower bound for the following problem, which concerns finding approximate fixed points of gradient descent-ascent dynamics (GDA).
\begin{problem}
[\GDA]\label{def:gdafp}
Given $\varepsilon$, $L$, $G, B$ $\in\Reals_+$, two oracles implementing a $G$-Lipschitz and $L$-smooth function $f:[0,1]^d\times[0,1]^d\to [-B,B]$ and its gradient $\nabla f: [0,1]^d\times[0,1]^d\to\Reals^{2d}$, find $(x^\star, y^\star)\in[0,1]^{d}\times[0,1]^d$ such that for all $i \in [d], x_i \in [0,1]$ and $y_i\in[0,1]$
\[ 
   -{\partial_{x_i} f(x^\star,y^\star)} ( x_i-x_i^\star)  \le \varepsilon \quad \text{and}\quad
   {\partial_{y_i} f(x^\star,y^\star)} ( y_i-y_i^\star) \le \varepsilon.
\]
\end{problem}

As also mentioned in the introduction, the naming of the problem is motivated by the problem being computationally equivalent to finding an approximate fixed-point of the GDA map $(x,y)\mapsto(\Pi_{[0,1]^d}(x-\nabla_xf(x,y)),\Pi_{[0,1]^d}(y+\nabla_yf(x,y)))$, as shown by \citet{daskalakis2021complexity}.

Our query lower bound ultimately comes from known lower bounds for finding a Brouwer fixed point. Specifically, we will use the following discrete Brouwer-like problem, called \SP.\footnote{This problem was first defined by \citet{daskalakis2021complexity}, where it was called \CFont{HighD-BiSperner}. Later, it was used by \citet{deligkas2022pure} under the name \SP to prove the PPAD-hardness of \PC.}

\begin{problem}[\SP]\label{def:SP}
Given integers $\wid, \dims$, and oracle access to a labeling $\lambda: [\wid]^\dims \to \{-1,+1\}^\dims$ satisfying the following boundary conditions for every $i \in [\dims]$:
\begin{align*}
    \text{$x_i = 1\implies[\lambda(x)]_i = +1$}\quad\text{and}\quad
    \text{$x_i = \wid\implies[\lambda(x)]_i = -1$},
\end{align*}
output $x^{(1)}, \ldots, x^{(\dims)} \in [\wid]^\dims$ such that $\max_{i,j\in [\dims]}\|x^{(i)} - x^{(j)}\|_\infty \leq 1$, and such that %
all labels are covered, i.e., for all $i \in [\dims]$ and $\ell \in \{-1,+1\}$ there exists $j \in [\dims]$ with $[\lambda(x^{(j)})]_i = \ell$.
\end{problem}

The following query lower bound for \SP easily follows from the seminal query lower bound of \citet*{hirsch1989exponential} for finding Brouwer fixed points.

\begin{restatable}{theorem}{SPlowerbound}\label{thm:SP-lower-bound}
There exists a sufficiently large constant integer $\wid$ such that any algorithm that outputs a solution to \SP with parameters $\wid$ and $\dims$ must make $2^{\Omega(\dims)}$ many queries to the labeling $\lambda$.
\end{restatable}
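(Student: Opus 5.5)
We reduce from the query lower bound of \citet{hirsch1989exponential} (in the form refined by later work, e.g.\ for the discrete Brouwer or \textsc{End-of-Line}-type problem on a $d$-dimensional grid with constant side length). Concretely, we use the following known statement. For Lipschitz functions $F:[0,1]^d\to[0,1]^d$ with constant Lipschitz constant, finding a point $x$ with $\|F(x)-x\|_\infty\le\delta$ for a sufficiently small absolute constant $\delta>0$ requires $2^{\Omega(d)}$ queries to $F$. The plan is to turn any algorithm for \SP with constant $\wid$ into an algorithm for this approximate Brouwer problem with only constant-factor overhead per query. Then a sub-exponential algorithm for \SP would contradict the lower bound.

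\textbf{Defining the labeling.} Given oracle access to $F$, fix a constant $\wid$ (to be chosen depending on the Lipschitz constant and $\delta$). Map grid points $x\in[\wid]^\dims$ to $\hat x=(x-1)/(\wid-1)\in[0,1]^d$. Define
\[
[\lambda(x)]_i=+1 \text{ if } F_i(\hat x)\ge \hat x_i, \qquad [\lambda(x)]_i=-1 \text{ otherwise}.
\]
Then we overwrite the boundary: if $x_i=1$ set the label to $+1$, and if $x_i=\wid$ set it to $-1$. This respects the boundary conditions of \SP. It is consistent with $F$, since $F$ maps into $[0,1]^d$, so at $\hat x_i=0$ we have $F_i\ge \hat x_i$. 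At $\hat x_i=1$ we have $F_i\le \hat x_i$, with the tie case only flipping a label where $F_i=\hat x_i$ exactly. Each query to $\lambda$ costs one query to $F$.

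\textbf{Extracting an approximate fixed point.} Take a \SP solution $x^{(1)},\dots,x^{(d)}$. All of these lie in an $\ell_\infty$-ball of radius $1/(\wid-1)$ in the rescaled cube. Fix a coordinate $i$ and let $j,j'$ be indices with labels $+1$ and $-1$ in coordinate $i$.

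Suppose first that both labels come from the $F$-rule. Then $F_i(\hat x^{(j)})-\hat x^{(j)}_i\ge 0\ge F_i(\hat x^{(j')})-\hat x^{(j')}_i$. Since $x\mapsto F_i(x)-x_i$ is $(G+1)$-Lipschitz in $\ell_\infty$ (up to a $\sqrt d$ factor if the Lipschitz constant is Euclidean, which we handle by taking the base hard instance to be $\ell_\infty$-Lipschitz), we get $|F_i(\hat x^{(1)})-\hat x^{(1)}_i|\le 2(G+1)/(\wid-1)$.

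Now suppose a label was forced by the boundary. For example, take $x^{(j)}_i=1$, so $\hat x^{(j)}_i=0$ and trivially $F_i(\hat x^{(j)})\ge 0=\hat x^{(j)}_i$. Hence the same inequality holds, and the boundary case needs no separate treatment. The label $-1$ at $x_i=\wid$ is symmetric.

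Therefore $\hat x^{(1)}$ satisfies $\|F(\hat x^{(1)})-\hat x^{(1)}\|_\infty\le 2(G+1)/(\wid-1)\le\delta$ once $\wid$ is a large enough constant. Computing this point needs no extra queries.

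\textbf{Main obstacle.} The delicate point is choosing the right form of the Hirsch--Papadimitriou--Vavasis bound. We need constant approximation $\delta$ and constant Lipschitz constant in the $\ell_\infty$ sense, with the bound exponential in $d$. The original bound is stated as exponential in $1/\varepsilon$ for fixed $d$, roughly $\varepsilon^{-(d-2)}$. To get $2^{\Omega(d)}$ with constant $\wid$, I would take the known high-dimensional version, in which a hard path is embedded in a $\{0,\dots,c\}^d$ grid (as in \citet{daskalakis2021complexity} for \CFont{HighD-BiSperner}). Alternatively, I would instantiate the HPV construction with constant $\varepsilon$: the hidden path through $\approx c^{d}$ cubes gives $2^{\Omega(d)}$ queries at constant accuracy. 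Once that bound is fixed, the reduction above finishes the proof.
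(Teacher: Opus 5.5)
Your proposal is correct and follows essentially the same route as the paper. Both arguments use the same rescaled-grid labeling comparing $F_i(\hat x)$ with $\hat x_i$, make one $F$-query per $\lambda$-query, and apply the same Lipschitz extraction on top of the HPV bound with constant $\varepsilon$ and $\ell_\infty$-Lipschitz $F$ (the paper's Theorem~\ref{thm:HPV}). The only real difference is how the boundary conditions are enforced. You overwrite the boundary labels, and you correctly check that these labels still satisfy $F_i\ge\hat x_i$, resp.\ $F_i\le\hat x_i$. The paper instead replaces $F$ by $(1-\varepsilon/2)F+(\varepsilon/2)(\tfrac12,\dots,\tfrac12)$, so that the boundary conditions hold automatically. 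Of your two suggested sources, use the HPV one: as the paper remarks, the \CFont{HighD-BiSperner} bound of \citet{daskalakis2021complexity} only gives $\wid=O(\dims)$, not a constant.
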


\citet[Corollary 9.2]{daskalakis2021complexity} prove a weaker version of \Cref{thm:SP-lower-bound} where $\wid$ is only assumed to be $O(\dims)$, instead of constant. However, it is not hard to see that the work of \citet{hirsch1989exponential} in fact implies that $\wid$ can be fixed to a constant. For completeness, we provide a proof of this in \cref{sec:app:prelims}.

\section{The \OPC Problem} \label{sec:OPC}

In this section, we introduce the \OPC problem, which is a black-box version of the \PC problem of \citet{deligkas2022pure}, and prove a query lower bound for it.

\begin{problem}[\OPC]
An instance of \OPC is given by a natural number $N$, a set of nodes $V$, three sets of gates $\cG_{\NOR}$, $\cG_{\PURIFY}$, and $\cG_{\ORACLE}$, and oracle access to a function $\cL:\{0,1\}^N\rightarrow \{0,1\}$ (with $N\le |V|$). Each gate is of the form $(u_1,u_2,\ldots)$ where $u_i$ are distinct nodes in $V$ with the following interpretation:
\begin{itemize}
    \item if $(u,v,w) \in \cG_{\NOR}$, then $u$ and $v$ are the inputs of the gate, and $w$ is its output.
    \item if $(u,v,w) \in \cG_{\PURIFY}$, then $u$ is the input of the gate, and $v$ and $w$ are its outputs.
    \item if $(u_1,\ldots,u_N,v) \in \cG_{\ORACLE}$, then $u_1,\ldots,u_N$ are the inputs of the gate and $v$ its output.
\end{itemize}

Each node is the output of exactly one gate. A solution to an instance of \OPC is an assignment $b: V\rightarrow \{0,1,\bot\}$ that satisfies all the gates, i.e., for each gate we have:
\begin{itemize}%
 \item if $(u, v, w) \in \cG_{\NOR}$, then $b$ satisfies:
    \begin{align*}
        &b(u) = b(v) = 0 \implies b(w) = 1\\
        &(b(u) = 1) \text{  or  } (b(v) = 1) \implies b(w) = 0,
    \end{align*}
    \item if $(u, v, w) \in \cG_{\PURIFY}$, then $b$ satisfies:
    \begin{align*}
        &\{b(v), b(w)\} \cap \{0,1\} \neq \emptyset\\
        &b(u) \in \{0,1\} \implies b(v) = b(w) = b(u),
    \end{align*}
    \item if $(u_1,\ldots,u_N,v) \in \cG_{\ORACLE}$, then $b$ satisfies:
    \begin{align*}
        (b(u_1), \dots, b(u_N)) \in \{0,1\}^N \implies b(v) = \cL(b(u_1), \dots, b(u_N)).
    \end{align*}
    
\end{itemize}
\end{problem}

The \PC problem originally defined by \citet{deligkas2022pure} is the same as our \OPC, but without the oracle gates $\cG_{\ORACLE}$. In \OPC we have introduced oracle gates in the least constraining way: a gate of type $\cG_{\ORACLE}$ is only required to be correct when all its inputs are actual bits (i.e., no $\bot$). Just like \PC, our version is also always guaranteed to have a solution. This can be proved directly, and it also follows from our reduction to \GDA, which is known to always have a solution \citep{daskalakis2021complexity}. As mentioned in the introduction, \PC is PPAD-complete, but does not admit a query lower bound, because it is a purely white-box problem. In contrast, for \OPC we can show the following.

\begin{restatable}{theorem}{OPCLB}\label{th:OPCLB}

Given an instance of \OPC with $|V|$ nodes, any algorithm that has black-box oracle access to $\cL$ and outputs a solution to \OPC requires  $2^{\Omega(|V|^{1/3})}$ queries to  $\cL$.
\end{restatable}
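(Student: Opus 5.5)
We reduce from \SP and invoke \Cref{thm:SP-lower-bound}, following the PPAD-hardness proof of \PC from \citet{deligkas2022pure}. The only change is that the white-box circuit computing the labeling is replaced by oracle gates, so that each query the \OPC solver could learn from corresponds to queries to $\lambda$. Fix the constant $\wid$ from \Cref{thm:SP-lower-bound} and a dimension $\dims$. Encode a point of $[\wid]^\dims$ with $\dims\lceil\log_2\wid\rceil = O(\dims)$ bits. Define $\cL$ on $N = O(\dims)+O(\log \dims)$ input bits: the first block encodes a point $x$, the second block encodes an index $i\in[\dims]$, and the output is the bit $([\lambda(x)]_i+1)/2$. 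Invalid encodings are mapped to a fixed canonical point. One query to $\cL$ costs one query to $\lambda$, so it suffices to show that a solution of the constructed \OPC instance yields a solution of \SP. The instance size will be $|V| = O(\dims^3)$, which gives $2^{\Omega(\dims)} = 2^{\Omega(|V|^{1/3})}$.

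The construction mirrors \citet{deligkas2022pure}. Each coordinate $j\in[\dims]$ of the current point is represented by a ``continuous'' coordinate stored in unary, with $O(\wid)$ nodes, read as a threshold encoding. Because the unary value can be fuzzy ($\bot$ at the boundary), the standard trick is to copy each coordinate through $\Theta(\dims)$ chained \PURIFY gadgets. This produces $\Theta(\dims)$ candidate binary points $x^{(1)},\dots,x^{(k)}$, all within $\ell_\infty$ distance $1$ of each other. The \PURIFY semantics guarantee that at most one coordinate per copy-chain is corrupted, so most candidates are fully pure bit strings. For each candidate $x^{(t)}$ and each coordinate $i$, we place one oracle gate whose inputs are the bits of $x^{(t)}$ and the hardwired bits of $i$. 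Hardwired constants are produced by \NOR gadgets that force $0$/$1$. The oracle gate outputs the label bit $[\lambda(x^{(t)})]_i$. These $O(\dims)\cdot\dims$ oracle gates, each with $O(\dims)$ inputs, account for the $O(\dims^3)$ size.

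Next, the label bits are aggregated with \NOR-built OR/AND gadgets and fed back to adjust each unary coordinate. For each $i$, if all candidates that are pure report label $+1$, coordinate $i$ is pushed up; if all report $-1$, it is pushed down. Boundary conditions of \SP ensure the push never leaves the grid. Since a \PC solution has no meaningful ``step'', the feedback is a fixed-point consistency condition, exactly as in the original proof. At any satisfying assignment, for every $i$ the pure candidates must report both $+1$ and $-1$; otherwise the unary encoding of coordinate $i$ would be forced to a value inconsistent with the assignment. We read off the pure candidates, which are queryable at no extra oracle cost beyond $O(\dims^2)$ queries to recompute labels, and select $\dims$ of them covering all labels. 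Candidates with a $\bot$ input impose no constraint via the oracle gate, which is precisely why the \ORACLE semantics were chosen leniently.

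The main obstacle is the last step. We must verify that the error-tolerance argument of \citet{deligkas2022pure} survives when the label-computing circuit is replaced by oracle gates that are unconstrained on impure inputs. Concretely, we need enough candidates that, for each coordinate $i$, both label values are witnessed among \emph{pure} candidates, not merely among candidates some of whose outputs are arbitrary. I would handle this by discarding every candidate with any $\bot$ bit before aggregation. This is done with a \PURIFY-based ``purity check'' gadget whose output is $0$ whenever an input is $\bot$. I would then show that the adversarial $\bot$'s corrupt at most $O(\dims)$ candidates, so choosing $k=\Theta(\dims)$ large enough leaves pure witnesses. This mirrors the counting in the original proof and determines the cubic blow-up.
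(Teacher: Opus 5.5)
Your overall plan matches the paper's: reuse the reduction of \citet{deligkas2022pure} from \SP and replace only the label-computing circuit by oracle gates, so that each $\cL$-query costs one $\lambda$-query. The gap is in the step you identify as the main obstacle.

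A ``purity check'' gadget that outputs $0$ whenever one of its inputs is $\bot$, while still letting pure candidates through, cannot exist in \OPC. To see this, take a satisfying assignment in which an external input $u$ of the gadget is pure and the gadget outputs $1$, and change only $u$ to $\bot$. Every gate reading $u$ stays satisfied: \NOR and \ORACLE constraints are triggered only by pure inputs, and the outputs of a \PURIFY gate fed by $u$ remain pure. So the gadget may still output $1$.

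Without this filter, your aggregation by AND/OR gadgets followed by a ``push up/push down'' feedback is not robust. The oracle gate of a single impure candidate may output either bit and so break the unanimity test. Nothing in your argument then excludes a satisfying assignment in which all pure candidates have label $+1$ in coordinate $i$ and no push is triggered. For example, the unary code of coordinate $i$ contains one $\bot$, the pure candidates sit at $p$ and $p+e_i$ with $\lambda_i(p)=\lambda_i(p+e_i)=+1$, the corrupted candidate reports $0$, and the $\bot$ is simply copied back. Relatedly, plain \PURIFY chains give candidates pairwise within $\ell_\infty$-distance $1$ only if each unary code has at most one uncertain position, because on input $\bot$ the pure outputs along a chain are arbitrary. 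Your feedback does not guarantee this.

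The missing observation is that there is no new obstacle. In \citet{deligkas2022pure} the circuit stage is itself built from lenient \PC gates. Its contract (their Lemma~3.3) therefore already demands correct outputs for copy $k$ only when \emph{all} its input nodes are pure, and impure copies may produce arbitrary pure bits. Their purification, sorting (one sorting network over the $\cop=3\dims\wid^2$ copies for each coordinate) and selection stages absorb a bounded number of such corrupted copies by counting, without ever detecting $\bot$. An oracle gate fulfils exactly that contract. You should therefore keep all other stages verbatim, and in particular aggregate via sorting networks rather than AND/OR; this is precisely the paper's argument.

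Two smaller corrections:
\begin{itemize}
\item $|V|$ counts nodes, so the $\dims\cop=O(\dims^2)$ oracle gates contribute only $O(\dims^2)$ nodes regardless of their fan-in. The cubic bound comes from the $\dims$ sorting networks of size $O(\cop^2)$.
\item Constants cannot be produced by \NOR gates alone, since the all-$\bot$ assignment satisfies every \NOR gate. This is why the paper uses the \PURIFY/\NOR gadget of \Cref{lem:constantcirc}.
\end{itemize}
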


\begin{proof}[Proof sketch]
The query lower bound is obtained by constructing a reduction from \SP with dimension $\dims$, for which a $2^{\Omega(d)}$ query lower bound is known by \Cref{thm:SP-lower-bound}. To do this, we modify the existing reduction from the white-box version of \SP to \PC, provided by \citet{deligkas2022pure}. Namely, whenever the standard \PC gates are used to implement the Boolean circuit of the white-box version of \SP, we instead use the oracle gate to implement the oracle of the black-box version of \SP. Using the fact that the parameter $\wid$ of \SP can be assumed to be constant (by \Cref{thm:SP-lower-bound}), we argue that the number of gates in the \OPC instance is at most $O(\dims^3)$, which yields a query lower bound of $2^{\Omega(|V|^{1/3})}$. For the details of the proof, we refer to \Cref{sec:app:OPC}.
\end{proof}

\section{Smooth Interpolation of a Boolean-Valued Function} \label{sec:smooth}

\begin{figure*}[t!]
    \centering
    \begin{subfigure}[t]{0.48\textwidth}
        \centering
        \usetikzlibrary {arrows.meta}
\begin{tikzpicture}
\pgfplotsset{my style/.append style={line width=1.5pt}}
\begin{axis}[
            xtick={1/3,1/2,2/3},
            xticklabels={$c_1$,,$c_2$},
            ytick={0,1},
            yticklabels={$0$,$1$},
            xmin=0,
            xmax=1,
            ymin=-0.1,
            ymax=1.2,
            axis x line=middle, 
            axis y line=middle,
            xlabel={$x$}, 
            x label style={anchor=north},
            ylabel={$\phi_{c_1,c_2}(x)$},
            y label style={anchor=south west},
            width=1.2*144pt,
            height=1.2*89pt,
            axis line style={-Latex[round]}
            ]
\addplot[my style, domain=-1:1/3] {0};
\addplot[my style, domain=1/3:2/3] {e^(-1/(x-1/3))/(e^(-1/(x-1/3))+e^(-1/(2/3-x)))};
\addplot[my style, domain=2/3:2] {1};
\end{axis}
\end{tikzpicture}
        \caption{The smooth step function of \Cref{lem:basicinterp}.}
        \label{fig:smoothstep}
    \end{subfigure}%
    ~ 
    \begin{subfigure}[t]{0.5\textwidth}
        \centering
        \scalebox{0.36}{\begin{tikzpicture}[x=9cm,y=9cm]

\definecolor{lowc}{RGB}{59,130,246}
\definecolor{midc}{RGB}{229,231,235}
\definecolor{highc}{RGB}{239,68,68}

\pgfmathsetmacro{\os}{1/6}
\pgfmathsetmacro{\ot}{1/3}
\pgfmathsetmacro{\tt}{2/3}
\pgfmathsetmacro{\fs}{5/6}
\pgfmathsetmacro{\tmpone}{0.25}

\fill[midc] (0,0) rectangle (1,1);
\fill[lowc!35] (0,0) rectangle (\ot,\ot);
\fill[highc!35] (\tt,0) rectangle (1,\ot);
\fill[highc!35] (0,\tt) rectangle (\ot,1);
\fill[lowc!35] (\tt,\tt) rectangle (1,1);

\fill[lowc!70] (0,0) rectangle (\os,\os);
\fill[highc!70] (\fs,0) rectangle (1,\os);
\fill[highc!70] (0,\fs) rectangle (\os,1);
\fill[lowc!70] (\fs,\fs) rectangle (1,1);

\draw[thick] (0,0) rectangle (1,1);
\draw[dashed,gray] (\ot,0) -- (\ot,1);
\draw[dashed,gray] (\tt,0) -- (\tt,1);
\draw[dashed,gray] (0,\ot) -- (1,\ot);
\draw[dashed,gray] (0,\tt) -- (1,\tt);

\draw[dotted,gray] (\os,0) -- (\os,\os);
\draw[dotted,gray] (0,\os) -- (\os,\os);
\draw[dotted,gray] (\fs,0) -- (\fs,\os);
\draw[dotted,gray] (1,\os) -- (\fs,\os);
\draw[dotted,gray] (0,\fs) -- (\os,\fs);
\draw[dotted,gray] (\os,1) -- (\os,\fs);
\draw[dotted,gray] (\fs,1) -- (\fs,\fs);
\draw[dotted,gray] (1,\fs) -- (\fs,\fs);

\node[font=\huge] at (0.08,0.08) {$0$};
\node[font=\huge] at (0.92,0.08) {$1$};
\node[font=\huge] at (0.08,0.92) {$1$};
\node[font=\huge] at (0.92,0.92) {$0$};
\node[font=\Large,align=center] at (0.25,0.25) {$(0,\tfrac12)$};
\node[font=\Large,align=center] at (0.75,0.25) {$(\tfrac12,1)$};
\node[font=\Large,align=center] at (0.25,0.75) {$(\tfrac12,1)$};
\node[font=\Large,align=center] at (0.75,0.75) {$(0,\tfrac12)$};
\node[font=\huge] at (0.50,0.50) {$\tfrac12$};

\node[below,font=\huge] at (0.5,-0.07) {$x_1$};
\node[rotate=90,font=\huge] at (-0.07,0.5) {$x_2$};

\foreach \x/\lbl in {0/0,1/1}
  \draw (\x,0) -- +(0,-0.015) node[below,font=\huge] {$\lbl$};
\foreach \y/\lbl in {1/1}
  \draw (0,\y) -- +(-0.015,0) node[left,font=\huge] {$\lbl$};

\node[below,font=\huge,gray] at (\os,0) {$\tfrac16$};
\node[below,font=\huge,gray] at (\ot,0) {$\tfrac13$};
\node[below,font=\huge,gray] at (\tt,0) {$\tfrac23$};
\node[below,font=\huge,gray] at (\fs,0) {$\tfrac56$};

\node[left,font=\huge,gray] at (0,\os) {$\tfrac16$};
\node[left,font=\huge,gray] at (0,\ot) {$\tfrac13$};
\node[left,font=\huge,gray] at (0,\tt) {$\tfrac23$};
\node[left,font=\huge,gray] at (0,\fs) {$\tfrac56$};

\end{tikzpicture}}
        \caption{Schematic illustration of the interpolation presented in \Cref{lem:inteprolation} for $N=2$ and $\hat h(x_1,x_2)=\textnormal{XOR}(x_1,x_2)$.}
        \label{fig:interp}
    \end{subfigure}
    \caption{}
\end{figure*}

To bridge the gap between discrete Boolean circuits and continuous optimization, we require a mechanism for extending Boolean functions to the continuous domain. Specifically, we seek a $C^{\infty}$ interpolation over $[0,1]^N$ of a Boolean function defined only on $\{0,1\}^N$.
In pursuit of this gadget, we first consider a smooth-step function. Namely, given two values $c_1,c_2$, we use a standard $\Ccal^{\infty}$ non-analytic function $\phi_{c_1,c_2}$, which implements a smooth step-function that interpolates the step function that is $0$ below $c_1$ and $1$ above $c_2$. 

\begin{definition}[Smooth step function]\label{def:smoothstep}
    Let $c_2> c_1\ge 0$. The function $\phi_{c_1,c_2}:\Reals\to[0,1]$ is defined as
    \[
    \phi_{c_1,c_2}(x) = \frac{\eta(x-c_1)}{\eta(x-c_1)+\eta(c_2-x)},
    \]
    where $\eta:\Reals\to[0,1]$ is defined as $\eta(x)=\mathbb{I}(x>0)\exp(-1/x)$.
\end{definition}

The main properties of $\phi_{c_1,c_2}$ are summarized in the next lemma. Basically, it guarantees that $\phi_{c_1,c_2}$ implements a smooth step from $c_1$ to $c_2$ and that its derivative is bounded by a constant (depending on $c_1$ and $c_2$). An illustration of such a function can be found in \Cref{fig:smoothstep}.

\begin{restatable}{lemma}{basicinterp}\label{lem:basicinterp}
Given two $c_1<c_2$, the function $\phi_{c_1,c_2}$ is in $\Ccal^{\infty}$ and satisfies:
\[  \phi_{c_1,c_2} (x) \in \begin{cases}
\{0\} & \text{If } x \le c_1\\
(0,1) & \text{if } x \in (c_1,c_2)\\
\{1\} & \text{if } x \ge c_2
\end{cases}.\]
Moreover, $\|\phi'_{c_1,c_2}\|_\infty\le e^{\frac2{c_2-c_1}},$ and $\|\phi''_{c_1,c_2}\|_\infty\le 12e^{\frac4{c_2-c_1}}$.
\end{restatable}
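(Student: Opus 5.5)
We first check the smoothness and the case analysis, then obtain the derivative bounds by reducing to a single normalized function on $[0,1]$. Recall that $\eta$ is $\Ccal^\infty$ on $\Reals$. This is the classical fact that every derivative of $e^{-1/x}$ has the form $p_k(1/x)e^{-1/x}$ with $p_k$ a polynomial, and these tend to $0$ as $x\to 0^+$. For every $x$, at least one of $x-c_1>0$ or $c_2-x>0$ holds, because $c_1<c_2$. Hence the denominator $\eta(x-c_1)+\eta(c_2-x)$ is strictly positive on all of $\Reals$, and $\phi_{c_1,c_2}$ is a quotient of $\Ccal^\infty$ functions with nonvanishing denominator. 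The value cases are immediate:
\begin{itemize}
\item for $x\le c_1$ the numerator vanishes, so $\phi_{c_1,c_2}(x)=0$;
\item for $x\ge c_2$ the second summand vanishes, so the quotient equals $1$;
\item on $(c_1,c_2)$ both summands are positive, so the value lies in $(0,1)$.
\end{itemize}

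For the derivative bounds, set $\delta=c_2-c_1$. Outside $(c_1,c_2)$ the function is constant, so $\phi'$ and $\phi''$ vanish there, and it suffices to bound them on the open interval. On $(c_1,c_2)$ we write $\phi=1/(1+e^{g})$ with $g(x)=\frac1{x-c_1}-\frac1{c_2-x}$, and differentiate directly:
\[\phi'=-\phi(1-\phi)\,g',\qquad \phi''=-\phi(1-\phi)(1-2\phi)(g')^2-\phi(1-\phi)\,g''.\]
Here $g'=-\frac1{(x-c_1)^2}-\frac1{(c_2-x)^2}$ and $|g''|\le \frac2{(x-c_1)^3}+\frac2{(c_2-x)^3}$. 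The key estimate is that $\phi(1-\phi)\le \min(\phi,1-\phi)\le \min(e^{-1/(x-c_1)},e^{-1/(c_2-x)})\cdot e^{1/\delta}$, up to the appropriate bookkeeping. More precisely, $\phi\le \eta(x-c_1)/\eta(c_2-x)$, and since $c_2-x\le\delta$ we have $\eta(c_2-x)\ge e^{-1/\delta}$. The symmetric argument bounds $1-\phi$.

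By symmetry (the reflection $x\mapsto c_1+c_2-x$ swaps the roles of the two terms), we may assume $x-c_1\le \delta/2$, so $t:=x-c_1$ is the smaller distance. The other distance is then at least $\delta/2$, and the terms involving it are bounded by powers of $2/\delta$. The dominant contribution to $|\phi'|$ is therefore at most $e^{1/\delta}\cdot 2\,t^{-2}e^{-1/t}$. Next we use the elementary bounds $\sup_{t>0}t^{-k}e^{-1/t}=k^ke^{-k}$ and $s^k\le k!\,e^{s}$. With these, every factor of the form $t^{-k}e^{-1/t}$ or $(2/\delta)^k$ is absorbed into a constant times $e^{1/\delta}$. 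This yields $|\phi'|\le C e^{2/\delta}$, and since $\phi''$ involves at most two such factors, $|\phi''|\le C'e^{4/\delta}$.

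The main obstacle is making the constants come out exactly as $1$ and $12$. This is pure bookkeeping: one must split the exponentials carefully, allocating for instance $e^{1/\delta}$ to absorb the polynomial factors $(2/\delta)^k\le k!\,e^{2/\delta}/2^{k}\cdots$. If the stated constants turn out too tight for this crude splitting, we would sharpen the estimate of $\phi(1-\phi)$ using the product $\eta(t)\eta(\delta-t)/(\eta(t)+\eta(\delta-t))^2$ directly. In that case we maximize separately in the regimes $t\le\delta/2$ and $t\ge\delta/2$. The structure of the argument is unchanged.
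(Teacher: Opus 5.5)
Your smoothness argument and the three value cases are fine. The gap is in the derivative bounds, which is the part of the lemma carrying the explicit constants $1$ and $12$.

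First, the justification of your key estimate is backwards. Since $t\mapsto e^{-1/t}$ is increasing, $c_2-x\le\delta$ gives $\eta(c_2-x)\le e^{-1/\delta}$, not $\ge$. In fact $\eta(c_2-x)\to 0$ as $x\to c_2$, so there is no uniform lower bound. What does hold is this: when $t=x-c_1\le\delta/2$ one has $c_2-x\ge\delta/2$, hence $\eta(c_2-x)\ge e^{-2/\delta}$ and $\min(\phi,1-\phi)\le e^{2/\delta}e^{-1/t}$. The exponent is $2/\delta$, not $1/\delta$.

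Second, even with that correction, the splitting you sketch does not deliver the stated constants, so the remaining step is not mere bookkeeping. Bounding $|g'|\le 2t^{-2}$ gives $|\phi'|\le 2\sup_{t>0}(t^{-2}e^{-1/t})\,e^{2/\delta}=8e^{-2}e^{2/\delta}\approx 1.08\,e^{2/\delta}$, which exceeds the claimed bound for every $\delta$. For $\phi''$, the bounds $\phi(g')^2\le 4t^{-4}\phi$ and $\phi|g''|\le 4t^{-3}\phi$ give roughly $(18.8+5.4)\,e^{2/\delta}$, which no longer implies $12e^{4/\delta}$ once $\delta$ exceeds about $2.9$. As written, you obtain only $|\phi'|\le Ce^{2/\delta}$ and $|\phi''|\le C'e^{4/\delta}$ with unspecified constants, which is weaker than the lemma. (A minor point: the first term of $\phi''$ should be $+\phi(1-\phi)(1-2\phi)(g')^2$, though this is harmless after taking absolute values.)

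The loss comes from replacing $\phi(1-\phi)$ by $\min(\phi,1-\phi)$. The fix, which is exactly what the paper does, is to keep $\phi(1-\phi)=AB/S^2$, where $A=\eta(x-c_1)$, $B=\eta(c_2-x)$ and $S=A+B$, and distribute $g'$ over it. This gives $|\phi'|=(|A'|B+A|B'|)/S^2$. Each derivative satisfies $|A'|,|B'|\le\sup_{t>0}t^{-2}e^{-1/t}\le 1$ uniformly, so $|\phi'|\le (A+B)/S^2=1/S$. Now $\delta=c_2-c_1$ enters only through the global bound $S\ge e^{-2/\delta}$, which holds because one of $x-c_1$ and $c_2-x$ is always at least $\delta/2$.

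The same device handles the second derivative. Write $\phi''=(A''B-AB'')/S^2-2(A'+B')(A'B-AB')/S^3$ and use $|A''|,|B''|\le\sup_{t>0}(t^{-4}+2t^{-3})e^{-1/t}\le 8$. Then $|\phi''|\le 8/S+4/S^2\le 12e^{4/\delta}$, which gives the stated constant.
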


The main use of the smooth step function in \Cref{def:smoothstep} is to extend a Boolean function $\hat h$ defined on the hypercube vertices $\{0,1\}^N$ to a continuous function $h$ over the domain $[0,1]^N$. Besides being consistent on the vertices of the hypercube, we require it to be robust: the function $h(x)=\hat h(v)$ whenever $x$ belongs to a ``large'' neighborhood of the vertex $v$. 
Moreover, another critical requirement for this interpolation is query efficiency: the value of the interpolated function at any point $x \in [0,1]^N$ must be computable using only a polynomial number of queries to the original function.

\begin{restatable}{lemma}{interpolation}\label{lem:inteprolation}
Let $\hat h: \{0,1\}^N \to \{0,1\}$ be a Boolean function. There exists a smooth $\mathcal{C}^\infty$ function $h= \cH(\cdot, \hat h) : [0,1]^N\rightarrow [0,1]$ that satisfies
    \begin{enumerate}[label=(\roman*)]
        \item $\hat h(y)=h(x)$ for all $y\in\{0,1\}^N$ and all $x\in B^\infty_{\nicefrac16}(y)\cap[0,1]^N$ ;
        \item $h(x)\in[0,1]$ for all $x\in[0,1]^N$;
        \item $h(x)$ and $\nabla h(x)$ are computable with at most one call to $\hat h$ for all $x\in[0,1]^N$;
        \item $\|\nabla h(x)\|_\infty \le \frac {e^{12}}2$, and $\|D^2 h(x)\|_{\max} \le 6e^{24}$ for all $x\in [0,1]^N$. 
    \end{enumerate}
\end{restatable}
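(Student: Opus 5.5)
We build $h$ explicitly as a sum of smooth, disjointly supported ``bump'' products around the vertices of the hypercube, using the smooth step of \Cref{def:smoothstep} with $c_2-c_1=\nicefrac16$. This gap makes the constants $e^{12}=e^{2/(1/6)}$ and $e^{24}$ of \Cref{lem:basicinterp} appear exactly as in item (iv).

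\textbf{Construction.} Define two one-dimensional profiles
\[
\psi_0(t):=1-\phi_{\nicefrac16,\nicefrac13}(t),\qquad \psi_1(t):=\phi_{\nicefrac23,\nicefrac56}(t).
\]
Then define
\[
h(x):=\frac12+\sum_{v\in\{0,1\}^N}\Big(\hat h(v)-\frac12\Big)\prod_{i=1}^N\psi_{v_i}(x_i).
\]
By \Cref{lem:basicinterp}, the profiles behave as follows. $\psi_0$ equals $1$ on $[0,\nicefrac16]$, lies in $(0,1)$ on $(\nicefrac16,\nicefrac13)$, and is $0$ on $[\nicefrac13,1]$. Symmetrically, $\psi_1$ is $0$ on $[0,\nicefrac23]$ and $1$ on $[\nicefrac56,1]$. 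Both are $\Ccal^\infty$, so $h$ is a finite sum of $\Ccal^\infty$ functions and is $\Ccal^\infty$. This formulation avoids any case analysis at the region boundaries.

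\textbf{Key observation.} For every $x$, at most one term of the sum is nonzero. The supports of $\psi_0$ and $\psi_1$ are disjoint, so a nonzero product forces $v_i=0$ if $x_i<\nicefrac13$ and $v_i=1$ if $x_i>\nicefrac23$. If some $x_i\in[\nicefrac13,\nicefrac23]$, every term vanishes and $h(x)=\nicefrac12$. The same holds for all derivatives: each derivative of a product carries the factor $\psi_{v_i}$ or one of its derivatives, and these vanish outside the open supports.

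\textbf{Properties (i)--(iii).} Item (iii) follows from the key observation. Given $x$, compute the rounding $v$ coordinatewise, or detect that some coordinate lies in the middle band. Then query $\hat h(v)$ once and evaluate $h$ and $\nabla h$ from that single term. For (i), if $x\in B^\infty_{\nicefrac16}(y)$ with $y$ a vertex, every factor $\psi_{y_i}(x_i)=1$ and all other terms vanish, so $h(x)=\hat h(y)$. For (ii), $h=\nicefrac12+(\hat h(v)-\nicefrac12)\rho$ with $\rho\in[0,1]$, which lies in $[0,1]$.

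\textbf{Derivative bounds (iv).} We differentiate the single surviving term, writing $\rho(x)=\prod_j\psi_{v_j}(x_j)$.
\begin{itemize}
\item \emph{Gradient.} We have $\partial_i\rho=\psi'_{v_i}(x_i)\prod_{j\ne i}\psi_{v_j}(x_j)$. The other factors are in $[0,1]$ and $|\psi'|\le e^{12}$ by \Cref{lem:basicinterp}, so $|\partial_i h|\le\frac12 e^{12}$.
\item \emph{Off-diagonal Hessian entries.} Each is a product of two first derivatives times factors in $[0,1]$, hence at most $\frac12 e^{24}$ in absolute value.
\item \emph{Diagonal Hessian entries.} Each is bounded by $\frac12\|\psi''\|_\infty\le\frac12\cdot12e^{24}=6e^{24}$.
\end{itemize}
This gives $\|D^2h\|_{\max}\le 6e^{24}$.

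\textbf{Main obstacle.} The delicate point is making sure the interpolation is globally smooth while needing only one query. A naive definition, ``round $x$ to $v$ and output $\hat h(v)$ blended toward $\nicefrac12$'', raises smoothness questions where the rounding changes. The disjoint-support sum formulation resolves this: smoothness is automatic because $h$ is a finite sum of $\Ccal^\infty$ functions, and the one-query property follows from the disjointness argument above.
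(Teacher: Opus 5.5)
Your proposal is correct and is essentially the paper's proof. Your $\psi_1(t)=\phi_{\nicefrac23,\nicefrac56}(t)$ coincides exactly with the paper's factor $\alpha(1-t)=1-\phi_{\nicefrac16,\nicefrac13}(1-t)$, so your $h$ is the paper's $\tfrac12+\sum_{y}\Phi_y(x)(\hat h(y)-\tfrac12)$, and the remaining steps match as well: the at-most-one-active-vertex argument, the single query, and the derivative bounds via \Cref{lem:basicinterp} with gap $\nicefrac16$.
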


The construction of the interpolating function $\cH(\cdot,\hat h)$ is explicit and can be found in the proof of \Cref{lem:inteprolation}. In \Cref{fig:interp}, we provide an illustrative example. 
The reason why we need a robust version of the interpolation will be discussed in the next section (and in particular in \Cref{rem:robust}).

\section{An Oracle-Efficient Reduction From \OPC to \SB} \label{sec:SB}

As already mentioned, we will also rely on the following intermediate problem, which concerns the approximation of fixed points of a continuous function. %
This problem will be used in \Cref{sec:GDA}, in combination with \OPC, to construct a query-hard instance for \GDA.

\begin{problem}[\SB]
    Given an approximation $\varepsilon>0$ and an oracle implementing a Lipschitz and $\Ccal^1$ function $F:[0,1]^d\to[0,1]^d$ and its Jacobian $J_F:[0,1]^d\to\Reals^{d\times d}$, find a $z\in[0,1]^d$ such that $\|F(z)-z\|_\infty\le\varepsilon$.
\end{problem}

The following theorem provides an oracle-efficient reduction from \OPC to \SB. %
In this construction, we rely heavily on the robustness of our interpolation as described in \Cref{lem:inteprolation}, and on the smooth step function introduced in \Cref{def:smoothstep}.

\begin{restatable}{theorem}{theoremSB} \label{thm:SB}
There is a reduction from an instance of \OPC (with oracle $\cL$) to \SB with approximation $\varepsilon=\tfrac1{12}$, such that:
\begin{itemize}[leftmargin=0.5cm]
    \item $\Ccal^\infty\ni F:[0,1]^d\rightarrow [0,1]^d$ where $d=|V|$ is the number of nodes of \OPC;
    \item any query to $F$ or $J_F$ can be simulated with at most $d=|V|$ queries to $\cL$;
    \item we can efficiently recover a \OPC solution from any \SB one;
    \item the function $F$ is $(e^{12}d)$-Lipschitz and $(12e^{24}d^2)$-smooth, and $|\partial_{z_j} F_i(z)|\le e^{12}$ for all $i,j\in[d]$.
\end{itemize}
\end{restatable}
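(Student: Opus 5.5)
We will define $F$ coordinatewise, one coordinate per node of the \OPC instance, so $z\in[0,1]^d$ with $d=|V|$ and $z_w$ is the value at node $w$. We read $z_w$ as the bit $0$ if $z_w\le 1/3$, the bit $1$ if $z_w\ge 2/3$, and $\bot$ otherwise; since $\varepsilon=1/12$, a fixed point may shift this reading by at most $1/12$, and the gate implementations must leave enough slack for that. Since every node is the output of exactly one gate, $F_w$ is determined by the gate whose output is $w$. We build each case from the smooth step of \Cref{lem:basicinterp} and the robust interpolation $\cH$ of \Cref{lem:inteprolation}.

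\textbf{Oracle gates.} For $(u_1,\dots,u_N,v)$ we set $F_v(z)=\cH((z_{u_1},\dots,z_{u_N}),\cL)$. If at an approximate fixed point all inputs read as bits (values in $[0,1/3]\cup[2/3,1]$), we cannot use the $1/6$-ball of property (i) directly. We therefore first pass each input through a sharpening step $s(t)=\phi_{1/3,2/3}(t)$, which sends $[0,1/3]$ to $0$ and $[2/3,1]$ to $1$, and take $F_v(z)=\cH((s(z_{u_i}))_i,\cL)$. Then $F_v$ is exactly $\cL(b)\in\{0,1\}$, and $|F_v(z)-z_v|\le 1/12$ forces $z_v$ to read as the correct bit. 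Property (iii) gives one $\cL$-query per evaluation of $F_v$ and its gradient, so one query per oracle gate. This yields at most $d$ queries in total, since oracle gates have distinct outputs.

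\textbf{\NOR gates.} For $(u,v,w)$ we take $F_w=\cH((s(z_u),s(z_v)),\mathrm{NOR})$ with $\mathrm{NOR}$ hard-coded. If either input reads $1$, the corresponding sharpened value is $1$. Because $\mathrm{NOR}$ is constant $0$ on the face $\{x_1=1\}$, we must check that the explicit interpolation is also $0$ whenever one coordinate equals $1$ while the other coordinate is arbitrary. If the construction of $\cH$ does not guarantee this, we instead write the gate explicitly as $(1-s(z_u))(1-s(z_v))$. Each factor of this product is smooth and $\{0,1\}$-valued on bits, and the product vanishes whenever either factor does.

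\textbf{\PURIFY gates.} For $(u,v,w)$ we take $F_v=\phi_{1/3,1/2}(z_u)$ and $F_w=\phi_{1/2,2/3}(z_u)$. If $z_u$ reads as a bit, both outputs equal that bit. For any $z_u$, either $z_u\le 1/2$, giving $F_w=0$, or $z_u\ge1/2$, giving $F_v=1$; so at least one output is pure. The fixed-point tolerance $1/12$ keeps it on the correct side of the thresholds. Recovering a solution from $z$ is then just the thresholding rule, applied gate by gate as above.

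\textbf{Smoothness bounds.} Every component of $F$ is a composition of $\cH$ with steps of width at least $1/6$, and the widest gadget uses width $1/3$. Using \Cref{lem:basicinterp}, these give derivative factors up to $e^{12}$ when combined with property (iv). Each $F_i$ depends on at most $N\le d$ coordinates, so $|\partial_{z_j}F_i|\le e^{12}$. Summing over coordinates gives the $e^{12}d$ Lipschitz bound and, via Hessian entries, the $12e^{24}d^2$ smoothness bound.

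The main obstacle is calibrating the thresholds so that the constants match exactly as stated. This may require using $\cH$ directly with a threshold-aware rescaling instead of the extra sharpening step, because composing with $s$ multiplies the gradient bounds. I would first try to fold the sharpening into the step widths inside $\cH$ and verify that the $1/6$-robustness still absorbs the $1/12$ error on both sides of $1/3$ and $2/3$.
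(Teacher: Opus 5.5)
Your gate gadgets are logically correct, but the proposal does not prove the quantitative bullet of the statement, and the cause is your choice of decoding thresholds. If bits are read at $1/3$ and $2/3$, an input that reads as a bit can sit at distance $1/3$ from its vertex. That is outside the $1/6$-ball of property (i) of \Cref{lem:inteprolation}, which is why you need the sharpening step $s=\phi_{1/3,2/3}$.

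By the chain rule, $\partial_{z_{u_j}}F_v=(\partial_j h)(s(z))\,s'(z_{u_j})$. The bounds of \Cref{lem:basicinterp,lem:inteprolation} then give only $\tfrac{e^{12}}{2}\cdot e^{6}$ for first derivatives of oracle components. For second partials they give roughly $6e^{24}\cdot e^{12}+\tfrac{e^{12}}{2}\cdot 12e^{12}$. Both exceed the claimed $|\partial_{z_j}F_i|\le e^{12}$ and the entrywise Hessian bound $12e^{24}$ that underlies $(12e^{24}d^2)$-smoothness. Your smoothness paragraph says all factors stay within $e^{12}$, which ignores this multiplication; your last paragraph notices the problem but leaves it open.

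Separately, the $\cH$-based \NOR gate really does fail. At $(s(z_u),s(z_v))=(1,\tfrac12)$ no vertex is active, so $\cH$ returns $\tfrac12$ and the output reads $\bot$. The product form is therefore required, not merely a fallback.

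The missing idea is to decode with thresholds $1/6$ and $5/6$, as the paper does:
\begin{itemize}
\item With tolerance $1/12$, any output a gate forces to $0$ or $1$ lands in $[0,\tfrac1{12}]$ or $[\tfrac{11}{12},1]$, still inside the bit regions.
\item Every input that reads as a bit now lies in $B^\infty_{1/6}$ of its vertex. So you can take $F_v=\cH((z_{u_i})_i,\cL)$ directly, with no sharpening, and get the bounds $e^{12}/2$ and $6e^{24}$.
\item Your \PURIFY gadgets $\phi_{1/3,1/2}$ and $\phi_{1/2,2/3}$ remain correct under these thresholds. Their width is $1/6$, giving exactly $e^{12}$ and $12e^{24}$.
\item For \NOR, use either $1-\phi_{1/3,2/3}(z_u+z_v)$ (the paper's choice: the sum is at most $1/3$ when both inputs read $0$, and at least $5/6$ when one reads $1$), or your product with a step that is $0$ on $[0,1/6]$ and $1$ on $[5/6,1]$.
\end{itemize}
With these choices every Jacobian entry is at most $e^{12}$ and every second partial at most $12e^{24}$. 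The Lipschitz and smoothness constants then follow as you describe.
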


As a corollary, since the oracles of \SB can be simulated with $\poly(d)$ queries to the oracle $\cL$ of \OPC, and we can recover a solution of \OPC from any solution of \SB, we obtain an exponential lower bound on the number of queries required by \SB, recovering a similar result to that of \citet{hirsch1989exponential}.

\begin{corollary}[Exponential query-lower bound for \SB]
For $\varepsilon=\frac{1}{12}$, \SB requires at least $2^{\Omega(d^c)}$ queries to the $F$ and $J_F$ oracles for some constant $c>0$.
\end{corollary}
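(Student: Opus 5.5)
The corollary follows by composing the query lower bound for \OPC (\Cref{th:OPCLB}) with the oracle-efficient reduction of \Cref{thm:SB}, treating the latter as a black box. Suppose, for contradiction, that some algorithm $\mathcal{A}$ solves \SB with $\varepsilon=\tfrac1{12}$ using $Q(d)$ queries to $F$ and $J_F$ on every instance of dimension $d$.

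Given an \OPC instance with node set $V$ and oracle $\cL$, build the function $F:[0,1]^d\to[0,1]^d$ of \Cref{thm:SB}, with $d=|V|$. This step uses only the white-box description of the instance (nodes and gates), so it costs no queries to $\cL$. Then run $\mathcal{A}$ on $F$. By the second item of \Cref{thm:SB}, each query of $\mathcal{A}$ to $F$ or $J_F$ is answered by making at most $d$ queries to $\cL$. When $\mathcal{A}$ outputs $z$ with $\|F(z)-z\|_\infty\le\tfrac1{12}$, the third item of \Cref{thm:SB} lets us recover an \OPC solution from $z$. We assume, as the reduction provides, that this recovery uses no further oracle queries, or at most $d$ of them.

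The resulting \OPC algorithm therefore makes at most $d\,Q(d)+d$ queries to $\cL$. By \Cref{th:OPCLB}, this quantity is at least $2^{\alpha |V|^{1/3}}$ for some constant $\alpha>0$ and all large $|V|$. Using $|V|=d$, this gives
\[
Q(d)\;\ge\; \frac{2^{\alpha d^{1/3}}}{d}-1 \;=\; 2^{\Omega(d^{1/3})}.
\]
So the corollary holds with $c=1/3$; any $c<1/3$ also works, which absorbs the polynomial factor $d$.

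One subtlety needs checking. The lower bound of \Cref{th:OPCLB} is stated for a family of \OPC instances parametrized by $|V|$. We must make sure that, for each $d$, these hard instances yield legitimate \SB instances of dimension exactly $d$, i.e.\ that the family is not sparse in $d$. If the hard \OPC instances only exist for certain sizes (for example $|V|=\Theta(\dims^3)$ coming from \SP with dimension $\dims$), the bound still holds along that sequence of dimensions. It extends to every $d$ by padding with dummy coordinates whose coordinate function has $F_i(z)=z_i$ as a fixed point, for instance $F_i\equiv 0$; padding changes $d$ by at most a constant factor.

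Apart from this point the argument is a direct composition, and the main care goes into this bookkeeping of query counts and dimensions.
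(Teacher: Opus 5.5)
Your proposal is correct and follows the paper's argument: compose the $2^{\Omega(|V|^{1/3})}$ bound of \Cref{th:OPCLB} with the reduction of \Cref{thm:SB}, where each $(F,J_F)$ query costs at most $d=|V|$ queries to $\cL$ and recovery is query-free thresholding. This gives $c=1/3$. Your padding step to cover every $d$ is extra bookkeeping the paper leaves implicit, and it works: since sizes are only known to be $O(\dims^3)$, pick the largest \SP dimension $\dims$ whose instance has $|V|\le d$ and pad with $F_i\equiv 0$, rather than relying on a constant-factor gap between consecutive sizes.
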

It is important to make two comments regarding this result.
\begin{remark}
    We need such a stronger result on the query complexity of \SB, since the historical version of \citet*{hirsch1989exponential} only produces functions $F$ that are Lipschitz but not $\Ccal^1$. Indeed, in our final \GDA construction, the oracle implementing the gradient $\nabla f$ is simulated using $J_F$, so $J_F$ must be continuous for $f$ to be smooth.
\end{remark}

\begin{remark}\label{rem:robust}
    The robustness of the interpolation of \Cref{lem:inteprolation} is essential in the reduction from \OPC to \SB since approximate fixed points only recover inputs that are close to Boolean vertices rather than exactly Boolean.
\end{remark}

\section{Exponential Lower Bound for \GDA}\label{sec:GDA}
In this section, we combine \OPC and \SB to provide a query-hard instance of \GDA. Starting from an instance of \OPC with oracle $\cL$ on $m=|V|$ nodes, we first construct, via \Cref{thm:SB}, a \SB instance $F:[0,1]^m\to[0,1]^m$, where $m=|V|$. We then build a \GDA instance in the spirit of \citet{bernasconi2026complexity}, using many copies of \SB placed inside nodes of the \OPC instance. The major difference with respect to the proof of \citet{bernasconi2026complexity} is that we replaced the linear variational inequality used there with the \SB map $F$ as the inner problem.

Let $m$ be the size of the \SB map, i.e. $F:[0,1]^m\to [0,1]^m$, and let $\rho=\frac{1}{12}$ the approximation error. Recall that by \Cref{thm:SB} it holds $m=|V|$.

We construct the \GDA instance based on the following parameters:
\[
\delta=\tfrac{\rho^4}{400m^2e^{26}},\quad
    n=\big\lceil\tfrac{2^{13} e^{13} m^4}{\delta^3}\big\rceil,\quad\text{and}\quad
    \varepsilon=\min\big(\tfrac\delta n,\tfrac{\delta^2}{m^4 2^4 e^{14}}\big).
\]

For each $u\in V$, $i\in[n],j\in[m]$, our instance includes variables $x^{u}_{i,j}$ and $y^u_{i,j}$, so the dimension of each player is $d=|V|nm=nm^2=O(m^{12})$. Moreover we abbreviate $x^v_i=(x_{i,j}^v)_{j\in[m]}\in\Reals^m$ and $x^v:=(x_{i,j}^v)_{i\in[n],j\in[m]}\in\Reals^{mn}$ (and similarly with $y_i^v\in\Reals^m$ and $y^v\in\Reals^{mn}$).

Before defining the objective $f$, it is useful to define, using $F$, the gadget:
\begin{align*}
&H_v(x,y)=\sum_{i\in[n]} \bigg\langle F\left(\frac{x_i^v+y_i^v}{2}\right)-\frac{x_i^v+y_i^v}{2},y_i^v-x_i^v\bigg\rangle,
\end{align*}
and the ``thresholded energy function'': %
\begin{align*}%
\cE_v(x,y)=\phi_{3m,3m+1}(\|x^v-y^v\|^2),
\end{align*}
that is a smooth step function that is $0$ if $\|x^v-y^v\|^2$ is less that $3m$ and $1$ if it is bigger than $3m+1$.

Moreover, for each $w\in V$, we define a ``signal'' function $s_w(x,y)$ depending on the (unique) gate type $w$ is output of. To do so, we use specific functions depending of the gate type. 
Formally, we set
\[
s_w(x,y)=
\begin{cases}
    g(\cE_u(x,y)+\cE_v(x,y))&\text{if}\quad(u,v,w)\in\cG_{\NOR}\\
    \ell(\cE_u(x,y)-1/4)&\text{if}\quad(u,w,v)\in\cG_{\PURIFY}\\
    \ell(\cE_u(x,y)+1/4)&\text{if}\quad(u,v,w)\in\cG_{\PURIFY}\\
    \cH((\cE_{u_i}(x,y))_{i\in[N]},\cL)&\text{if}\quad(u_1,\ldots, u_N,w)\in \cG_{\ORACLE}
\end{cases}
\]
where \(
g(z)=\phi_{\nicefrac13,\nicefrac23}(1-z)\) and \(\ell(z)=\phi_{\nicefrac5{12},\nicefrac7{12}}(z)\).
Finally, we let the objective 
\begin{align}\label{eq:f}
f(x,y)=\sum_{w\in V} s_w(x,y)\cdot H_w(x,y)+\varphi(x,y),
\end{align}
where $\varphi(x,y)=\sum_{w\in V}\sum_{i\in [n]}M_i\|x_i^w-y_i^w\|^2_2$ and $M_i:=\delta(i-n/2)$.

Here, it becomes apparent the necessity of a smooth version of a query-hard Brouwer function $F$, since it is evident that the gradient of $f$ will depend on $J_F$ through the derivatives of $H_w$. 
Moreover, it is easy to see that, since all components of $f$ have bounded first and second order derivatives, thanks to \Cref{lem:basicinterp,lem:inteprolation}, then $\|f\|_\infty,\|\nabla f\|_\infty, \|D^2 f\|_\infty=\poly(d)$ and thus also $G,L,B=\poly(d)$.

Thus, to complete the proof, we just need to show that: 
\begin{enumerate}[leftmargin=0.5cm]
    \item \textbf{Correctness:} given a solution to  \GDA, we can recover a solution to either the \OPC or the \SB instance, and hence of the \OPC one by \Cref{thm:SB}.
    \item \textbf{Oracle Query Complexity:} any call to $f$ or $\nabla f$ requires at most a polynomial number of calls to the \OPC oracle $\cL$. Notice that these also include calls to the \SB oracles $F$ and $J_F$, which in turn are implemented through queries to the \OPC oracle $\cL$ by \Cref{thm:SB}.
\end{enumerate}

Indeed, if an algorithm solves \GDA in $T(d)$ oracle queries to $(f,\nabla f)$, then this would give an algorithm for \OPC with $|V|$ nodes that runs in $T(\poly(|V|))\poly(|V|)$ queries, but the \OPC lower bound of \Cref{th:OPCLB} forces this quantity to be exponential in $|V|$ and in turn implies $T(d)=2^{\Omega(d^c)}$ for some constant $c>0$.

From any solution $(x,y)$ of \GDA and define an assignment $b:V\to\{0,1,\bot\}$ as
    \begin{align}\label{eq:assignment}
    b(v)=
    \begin{cases}
        \cE_v(x,y)&\text{if}\quad \cE_v(x,y)\in \{0,1\}\\
        \bot&\text{otherwise}
    \end{cases}.
    \end{align}

Then, we can show the following dichotomy in the same spirit of \citet{bernasconi2026complexity}: either the decoded Boolean signals $b:V\to\{0,1,\bot\}$ are consistent with all gate outputs (and thus satisfy the \OPC constraints), or one of the replicated blocks $(v,i)\in V\times [n]$ already yields an approximate fixed point of the \SB instance.
Formally:

\begin{restatable}[Dichotomy]{lemma}{dichotomy} \label{lem:dicotomy}
    Let $(x,y)$ be a solution to \GDA. Then either there exists $i\in[n], v\in V$ such that $(x^v_i+y^v_i)/2$ is a solution to \SB, or $b:V\to\{0,1,\bot\}$, as defined in \Cref{eq:assignment}, is a solution to \OPC.
\end{restatable}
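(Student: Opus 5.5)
We argue by contraposition on the first alternative. Assume that for every $v\in V$ and $i\in[n]$ the midpoint $z_i^v:=(x_i^v+y_i^v)/2$ satisfies $\|F(z_i^v)-z_i^v\|_\infty>\rho$. We then show that every gate is satisfied by $b$. The analysis is local to a node $w$. Fix $w$ and write the partial derivatives of $f$ with respect to $x_i^w$ and $y_i^w$. The term $\varphi$ contributes $\mp 2M_i(x_i^w-y_i^w)$. The term $s_w H_w$ contributes $s_w\,\partial H_w$. There are also cross terms of the form $H_w\,\nabla s_w$ and $\sum_{u} H_u \nabla s_u$, which come from the dependence of the signals on $\cE_w$ and hence on $\|x^w-y^w\|^2$.

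The first step is a per-block analysis mimicking \citet{bernasconi2026complexity}. With $F$ in place of the linear variational inequality, we look at the change of variables $z=(x+y)/2$ and $r=y-x$. In these coordinates the $\varepsilon$-stationarity conditions give the following for each coordinate of block $(w,i)$. If $z_i^w$ is not an approximate fixed point, some coordinate $j$ has $|F_j(z_i^w)-z_{i,j}^w|>\rho$. The bilinear term $s_w\langle F(z)-z,r\rangle$ then pushes $r_j$ in a definite direction with force at least $s_w\rho$, up to error terms. These error terms involve $s_w\,J_F\,r/2$, bounded via \Cref{thm:SB}, together with $\varepsilon$ and the cross terms. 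This force must be balanced either by a boundary constraint or by $2M_i r_j$. Since $M_i=\delta(i-n/2)$ ranges from very negative to very positive over $i$, the blocks split as follows. For blocks with $M_i$ of order $\rho$ or larger, the quadratic term makes $r$ nearly zero. For blocks with $M_i$ negative and large, $r$ is pushed to the boundary, i.e.\ $|r_j|=1$. I expect to derive the following: if $s_w\approx 0$, then $\|x^w-y^w\|^2$ is governed purely by $\varphi$ and equals a fixed value below $3m$, so $\cE_w=0$. If $s_w\approx 1$ and there are no fixed points, the $F$-term forces enough coordinates to saturate that $\|x^w-y^w\|^2\ge 3m+1$, so $\cE_w=1$. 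The parameters $\delta,n,\varepsilon$ are tuned so that, with $n\gg m^4/\delta^3$ blocks, the counts of zero and saturated blocks make these thresholds work.

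The second step converts this into gate constraints. From the first step we get the implications $s_w=0\Rightarrow \cE_w=0$ and $s_w=1\Rightarrow\cE_w=1$. For intermediate $s_w$, $\cE_w$ may be anything, which corresponds to $b(w)=\bot$ being allowed. We then check each gate type. For NOR: if $b(u)=b(v)=0$, then $g(0)=1$, so $s_w=1$ and $b(w)=1$. If some input equals $1$, then $g(z)$ with $z\ge1$ equals $0$, so $s_w=0$ and $b(w)=0$. For PURIFY: if $\cE_u\in\{0,1\}$, both shifted values $\ell(\cE_u\pm1/4)$ are $0$ or $1$ accordingly. For any $\cE_u$, at least one of $\cE_u-1/4$ and $\cE_u+1/4$ lies outside $(5/12,7/12)$, so at least one output signal is Boolean. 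For ORACLE: if all inputs are Boolean, \Cref{lem:inteprolation}(i) gives $s_w=\cL(b(u_1),\dots,b(u_N))$.

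The main obstacle is the first step, specifically controlling the cross terms $H_u\nabla s_u$, which couple node $w$ to the nodes it feeds into. These are nonzero only where $\cE_w\in(0,1)$, i.e.\ where $\|x^w-y^w\|^2\in(3m,3m+1)$. My plan is to bound them by $|H_u|\cdot\|\nabla s_u\|\le \poly(m)\,e^{O(1)}\|r^w\|$, using \Cref{lem:basicinterp,lem:inteprolation}. I would then show that they only perturb the balance by an amount less than $\delta$ per block. This keeps the "zero region / saturated region" structure intact, with only $O(1)$ boundary blocks affected, which is where the choice of $n$ relative to $\delta^{-3}$ should be used.
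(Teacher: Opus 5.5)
Your second step, reading the \NOR, \PURIFY and \ORACLE constraints off $s_w=0\Rightarrow\cE_w=0$ and $s_w=1\Rightarrow\cE_w=1$, matches the paper. The gap is in the first step, which carries all the weight.

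\textbf{The block picture is wrong.} Write $A_i:=M_i+\Delta_w(x,y)$ for the coefficient of $2(x^w_{i,j}-y^w_{i,j})$ in \Cref{lem:decomp}. There is no $H_w\nabla s_w$ term, since $s_w$ does not depend on block $w$. Your claim that blocks with large negative coefficient are pushed to $|r_j|=1$ is false. For $A_i>0$, test the $x$-player's condition at $x'_{i,j}=y^w_{i,j}$; for $A_i<0$, test the $y$-player's condition at $y'_{i,j}=x^w_{i,j}$. Both give $2|A_i|\kappa^2\le\varepsilon+me^{13}s_w|\kappa|$ with $\kappa=x^w_{i,j}-y^w_{i,j}$ (\Cref{lem:xminusy}). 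So $r_i\approx 0$ whenever $|A_i|$ is large, whatever its sign. The underlying reason is that both players' improving directions equal $s_wG(z)+2A_ir$ up to the small term $s_wR_{i,j,w}$, so this ``force'' moves the midpoint $z$, not $r$.

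Hence, in the large-signal case, $\|x^w-y^w\|^2$ can only come from the near-resonant blocks with $|A_i|<1$. For these you need an argument your proposal does not contain. Suppose $\|x^w_i-y^w_i\|_1<\rho^2/(10e^{13})$. Test both players' inequalities at the common feasible point $F_j(\xi^w_i)$ and average them. This gives $|F_j(\xi^w_i)-\xi^w_{i,j}|\le\rho$ for every $j$, contradicting the absence of fixed points. With at least $1/\delta$ such blocks, $\|x^w-y^w\|^2\ge\rho^4/(100e^{26}\delta m)=4m$, which is exactly what $\delta$ is tuned for.

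\textbf{The cross terms cannot be made smaller than $\delta$.} They enter as $\Delta_w=\sum_{u\in\Out(w)}H_u\,\phi'_{3m,3m+1}(\|x^w-y^w\|^2)\,\partial_{\cE_w}s_u$. Here $|H_u|$ is a priori only of order $mn$. Even the paper's refined estimate only gives order $m^2e^{13}\log(2n\delta)/\delta\gg\delta$, and nothing forces it below $\delta$.

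The missing idea is that $\Delta_w$ is the same scalar for every block $i$ of node $w$. It therefore only shifts the progression $(M_i)$, which has step $\delta$ and spans $[-n\delta/2,n\delta/2]$. Two consequences follow.
\begin{itemize}
\item For any shift, at most two blocks have $|A_i|<\delta$. This yields the small-signal bound $\|x^w-y^w\|^2\le 2m+mn\varepsilon/\delta\le 3m$ with no control on $\Delta_w$ (\Cref{lem:smallsignal}).
\item At least $1/\delta$ blocks have $|A_i|<1$, provided $|\Delta_w|\le n\delta/4$ (\Cref{lem:manyB}).
\end{itemize}
Proving $|\Delta_w|\le n\delta/4$ needs the sublinear-in-$n$ bound on $|H_u|$ of \Cref{lem:boundH2}. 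That bound comes from \Cref{lem:xminusy} combined with the harmonic-sum estimate of \Cref{lem:bound1overM}. This, not a count of a few perturbed boundary blocks, is where the largeness of $n$ is actually used.
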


With this lemma, we can easily prove our main result.

\begin{theorem}\label{thm:final}
Any algorithm that has oracle access to $f: [0,1]^d \times [0,1]^d \to [-1,1]$ and to its gradient $\nabla f$ where $f$ and $\nabla f$ are $1$-Lipschitz, and outputs a $\varepsilon$-approximate stationary point with $\varepsilon\le \poly(1/d)$, requires at least $2^{d^{\Omega(1)}}$ queries to $f$ or $\nabla f$ in the worst case.
\end{theorem}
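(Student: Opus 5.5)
The plan is to assemble \Cref{thm:final} from the pieces already in place: the query lower bound for \OPC (\Cref{th:OPCLB}), the oracle-efficient reduction to \SB (\Cref{thm:SB}), the construction of $f$ in \eqref{eq:f}, and the Dichotomy \Cref{lem:dicotomy}. Throughout I take \Cref{lem:dicotomy} as given. What remains is to verify three things: query efficiency, the regularity and normalization of $f$, and that the rescaled accuracy is still $\poly(1/d)$.

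\textbf{Step 1 (oracle simulation).} Fix an \OPC instance with $m=|V|$ nodes and build $F$ as in \Cref{thm:SB}. Each query to $F$ or $J_F$ costs at most $m$ queries to $\cL$. I then expand $f$ and $\nabla f$ term by term.
\begin{itemize}
\item Each $H_w$ involves $n$ evaluations of $F$ at the points $(x^w_i+y^w_i)/2$. Its gradient additionally needs $J_F$ at the same points, via the chain rule.
\item The quantities $\cE_v$, the NOR and PURIFY signals, and $\varphi$ are explicit closed-form expressions and require no queries.
\item Each oracle signal $\cH((\cE_{u_i})_{i},\cL)$ and its gradient need one call to $\cL$ by \Cref{lem:inteprolation}(iii). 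The chain rule through $\cE_{u_i}$ uses only explicit derivatives of $\phi_{3m,3m+1}$.
\end{itemize}
In total, a single query to $(f,\nabla f)$ costs $O(|V|\, n\, m)=\poly(m)$ queries to $\cL$.

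\textbf{Step 2 (regularity and normalization).} Next I bound $B,G,L$ for $f$ by $\poly(m)$. The tools are the derivative bounds in \Cref{lem:basicinterp,lem:inteprolation} and the Lipschitz and smoothness constants of $F$ from \Cref{thm:SB}. The domain is $[0,1]^d$ with $d=nm^2$, so the factors $\|x^v-y^v\|^2\le mn$ and $|M_i|\le\delta n$ are polynomial. I then rescale and define $\tilde f=f/K$ with $K=\max(B,G,L)=\poly(d)$. This makes $\tilde f$ map into $[-1,1]$, and makes both $\tilde f$ and $\nabla\tilde f$ $1$-Lipschitz. An $\varepsilon'$-stationary point of $\tilde f$ is a $K\varepsilon'$-stationary point of $f$. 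So I set $\varepsilon'=\varepsilon/K$, where $\varepsilon$ is the construction's parameter. Since $\varepsilon=\poly(1/m)$ and $d=\poly(m)$, we get $\varepsilon'=\poly(1/d)$.

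\textbf{Step 3 (decoding and counting).} Given any solution $(x,y)$ of \GDA for $\tilde f$, it is also a solution for $f$ at accuracy $\varepsilon$, so \Cref{lem:dicotomy} applies. Either some $(x^v_i+y^v_i)/2$ is a $\tfrac1{12}$-approximate fixed point of $F$, or $b$ from \eqref{eq:assignment} solves \OPC. In the first case, \Cref{thm:SB} converts the point into an \OPC solution with no further queries. Checking which case holds needs only $\poly(m)$ further queries to $F$.

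Hence an algorithm using $T(d)$ queries for \GDA yields an \OPC algorithm using $T(\poly(m))\cdot\poly(m)$ queries. By \Cref{th:OPCLB}, this quantity is at least $2^{\Omega(m^{1/3})}$. With $d=O(m^{12})$, this gives $T(d)\ge 2^{\Omega(d^{1/36})}/\poly(d)=2^{d^{\Omega(1)}}$.

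\textbf{Main obstacle.} The only delicate point is Step 2. I must make sure the $\poly(d)$ bounds on $B$, $G$, $L$ hold uniformly, including the second derivatives of the product terms $s_w H_w$. These involve $J_F$ and the smoothness of $F$ scaled by the $n$ summands. I must also confirm that rescaling by $K$ keeps the accuracy $\varepsilon'$ polynomial rather than exponential. Here the potentially large constants $e^{24}$ and the like are absolute constants, so they are harmless.
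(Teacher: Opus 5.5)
Your proposal is correct and takes the same route as the paper. Both simulate $f$ and $\nabla f$ with $\poly(|V|)$ queries to $\cL$ via \Cref{thm:SB} and \Cref{lem:inteprolation}, normalize by $\max(G,L,B)$ at a polynomial cost in accuracy, and decode through \Cref{lem:dicotomy} combined with \Cref{th:OPCLB}; you also make explicit two details the paper leaves implicit, namely deciding which branch of the dichotomy holds by querying $F$ at the $n|V|$ candidate midpoints, and the resulting exponent $d^{1/36}$.
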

\begin{proof}
We divide the proof into two steps.
\paragraph{Correctness.} \Cref{lem:dicotomy} almost immediately proves the correctness of the reduction, as it shows that we either trivially have a solution to \SB or the \OPC constraints are satisfied by $b$, and ultimately a solution to \OPC thanks to \Cref{thm:SB}.
\paragraph{Oracle Query Complexity.}
Now we show that the oracles for $f$ and $\nabla f$ can be simulated with only polynomially many queries to the \OPC oracle $\cL$.
First, observe that evaluating $f(x,y)$ for any $x,y\in[0,1]^d$ requires only polynomially many queries to the oracle $\cL$.
Specifically, computing $s_w(x,y)$ for each $w\in V$ requires at most $|V|=\poly(d)$ queries to $\cL$ thanks to \Cref{lem:inteprolation}.
Moreover, the evaluation of the $H_w$'s terms requires at most $m^2 n= \poly (d)$ queries to $F_w(z)$. Thanks to \Cref{thm:SB}, each evaluation requires at most $m$ queries to $\cL$.
A similar analysis holds for $\nabla f$, which also requires the evaluation of $J_F$ and $\nabla s_{w}(x,y)$, for which similar bounds hold.
\end{proof}
Compared with the proof of \citet{bernasconi2026complexity}, the key new ingredient is the treatment of the large-signal regime in the dichotomy argument, \emph{i.e.}, that $s_v(x,y)=1$ implies that $b(v)=1$ or we can find a solution to \SB. The small-signal case follows the previous template, while the large-signal case now has to certify the existence of an approximate \SB fixed point, rather than a solution to the inner problem of \citet{bernasconi2026complexity}---a simpler problem related to linear variational inequalities (see \Cref{sec:largesign} for more details). We modified the inner gadget because the one based on variational inequalities was reduced from polymatrix games \citep{rubinstein2015inapproximability}. While polymatrix games are computationally intractable from a complexity perspective, they can be solved using polynomially many queries.

\begin{remark}
In contrast to \Cref{sec:SB} (see \Cref{rem:robust}), the decoded gate signals in this reduction are already thresholded to Boolean values. For correctness, we therefore only require that the interpolation matches the function on the Boolean vertices, and we do not need the robust interpolation in \Cref{lem:inteprolation}. We nevertheless use the same robust interpolation throughout for the sake of uniformity.
\end{remark}

\begin{remark}
    The construction of our function $f$ only guarantees that $G,L,B,1/\varepsilon=\poly(d)$, however, we can easily normalize $f$ by $\max(G,L,B)$, by only needing to decrease $\varepsilon$ by a polynomial factor.
\end{remark}

\subsection{Extension to High-Order Oracles}

Our main theorem is stated for first-order algorithms, which have access to oracles $f$ and $\nabla f$. However, many algorithms that use second- (or higher-)order information have been proposed for min-max optimization \citep{letcher2018stable, balduzzi2018mechanics, zhang2020newton, ha2022convergence, vyas2023beyond, chinchilla2024newton}. However, all either assume an additional special structure or offer weaker guarantees (such as local convergence).
Our argument can be extended even to rule out query-efficient algorithms of this sort. Indeed, fix any integer $p=O(1)$ and suppose that the algorithm can query $(f,\nabla f, D^2f,\ldots, D^pf)$. Then, our lower bound still applies.

The main point is that each oracle can be simulated with polynomially many queries to the underlying \OPC oracle $\cL$. The objective function $f$ is $\Ccal^\infty$ due to its construction from algebraic compositions of $\Ccal^\infty$ components. Moreover, \Cref{lem:inteprolation} can be strengthened to show that for any fixed $p$, the derivatives of the interpolation $h=\cH(\cdot,\cL)$ can be computed with just one query to $\cL$, as $h$ depends on only one vertex of the hypercube. Computing the derivative tensor $D^r f$ for any $r \le p$ also requires polynomially many queries to $\cL$ because $f$ is constructed from smooth step functions and interpolation gadgets through sums, products, and compositions, leading to polynomially many terms in the derivatives.

Therefore, if a $p$-th order algorithm solved the constructed \GDA instance using $T$ oracle queries, then we would obtain an algorithm for \OPC using $T\cdot \poly(d)$ queries to $\cL$. By the query lower bound of \Cref{th:OPCLB}, we must have that $T$ also grows exponentially.

\section{Open Problems}
Although our result resolves the query complexity of the problem in the regime where $\varepsilon = 1/\poly(d)$, the question remains open for constant values of $\varepsilon > 0$. Indeed, our lower bound does not preclude the existence of a $\poly(d)$-query algorithm for every fixed value of $\varepsilon > 0$, e.g., an algorithm making $d^{O(1/\varepsilon)}$ many queries. Proving or refuting such a PTAS-type query guarantee remains an interesting open problem. A second question concerns simpler instances. For min-max problems with a degree-$2$ objective over simplex domains, known enumeration-based techniques (e.g., the one of \citet*{lipton2003playing}) can be adapted to establish the existence of quasi-polynomial-time algorithms. %
It remains open whether this is tight: can one prove quasi-polynomial lower bounds already for degree-$2$ objectives over simplex domains, analogously to the phenomena of Nash equilibria shown by \citet*{rubinstein2017settling}?

\appendix
\newpage

\section{Proof Omitted from \Cref{sec:prelims}}\label{sec:app:prelims}

\SPlowerbound*

In order to prove this, we will use the following lower bound for finding a Brouwer fixed point.

\begin{theorem}[\citet{hirsch1989exponential}]\label{thm:HPV}
There exists a sufficiently small constant $\varepsilon > 0$ such that any algorithm that has black-box oracle access to a $2$-Lipschitz (w.r.t.\ the $\ell_\infty$ norm) function $F:[0,1]^d\to[0,1]^d$ and outputs a point $x\in[0,1]^d$ such that $\|F(x)-x\|_\infty\le\varepsilon$ requires $2^{\Omega(d)}$ queries.
\end{theorem}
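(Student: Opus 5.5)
Since \Cref{thm:HPV} is the classical result of \citet{hirsch1989exponential}, the plan is to reproduce their construction in the form we need: a ``hidden tube'' embedding combined with an adversary or distributional argument. Throughout we work with the $\ell_\infty$ norm.

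\textbf{Construction of the hard family.} Fix a constant $\wid$ and partition $[0,1]^d$ into the $\wid^d$ axis-aligned cubes of side $1/\wid$. A hard instance is indexed by a simple path $P=(c_0,c_1,\dots,c_T)$ of grid cubes. Consecutive cubes are adjacent, and non-consecutive cubes are at $\ell_\infty$ distance at least one cube, so the ``tube'' is well separated. The path starts at the corner cube $c_0$ near the origin.
\begin{itemize}
\item Outside the tube, set $F(x)=x-\delta\mathbf 1$, clipped so that $F$ maps into $[0,1]^d$. This displacement pushes everything toward the origin corner.
\item Inside the tube, the displacement is a smooth-in-spirit, piecewise-linear rotation of the flow. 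Along the central segment from $c_k$ to $c_{k+1}$ it points in the direction of the path. Near the tube wall it interpolates linearly back to $-\delta\mathbf 1$.
\item Near the corner, the entrance is arranged so that the flow inside $c_0$ is consistent with the boundary.
\end{itemize}
Because all displacements have magnitude $\Theta(\delta)$ and all interpolations happen over distances $\Theta(1/\wid)$, choosing $\delta=\Theta(1/\wid)$ small enough gives $F$ $2$-Lipschitz. We then check that $\|F(x)-x\|_\infty\ge\varepsilon$ with $\varepsilon=\Theta(\delta)$ everywhere except near the end cube $c_T$. So every $\varepsilon$-fixed point lies in $c_T$, or in a constant number of cubes around it.

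\textbf{Information revealed by queries.} A query at $x$ returns only information about which cubes within $\ell_\infty$-distance one of $x$'s cube belong to $P$, together with their local order. Formally, $F(x)$ is determined by the path restricted to the $3^d$-neighbourhood. For the adversary we coarsen further: we answer queries as if the path were unknown outside revealed regions.

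\textbf{Lower bound via a random long path.} Apply Yao's principle with a distribution over paths of length $T=2^{\Omega(d)}$, built for instance from a snake-in-the-box style path in a sub-hypercube $\{1,\dots,\wid\}^d$, with a random automorphism or random concatenation of segments applied to it. The claim is that, conditioned on the answers so far, a query not adjacent to the revealed prefix of $P$ hits the unrevealed part with probability $2^{-\Omega(d)}$. A query adjacent to the prefix extends the revealed prefix by $O(1)$ cubes. Hence after $q$ queries the revealed prefix has length $O(q)$, and the probability of having touched $c_T$ is at most $q\cdot 2^{-\Omega(d)}+O(q)/T$. Both terms are small unless $q=2^{\Omega(d)}$.

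\textbf{Main obstacle.} The hardest step is designing the path distribution so that the conditional-hitting-probability bound holds while keeping the tube separated and $\wid$ constant. With $\wid$ constant we cannot simply use a uniformly random path in a large grid. The first thing I would try is HPV's recursive construction: a path in dimension $d$ that visits two copies of a dimension-$(d-1)$ path, with random reflections chosen at each level. This gives $2^{\Omega(d)}$ length and makes unrevealed portions nearly uniformly positioned. The Lipschitz verification at the turns of the tube is routine but tedious.
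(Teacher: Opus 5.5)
The paper gives no proof of this statement. It imports it as a black box from \citet{hirsch1989exponential} and uses it only to derive \Cref{thm:SP-lower-bound}. Your reconstruction therefore has to carry the whole argument, and it has a genuine gap exactly where the content of the theorem lies. The geometric half is standard and fine: a tube around a path of grid cubes, default displacement $-\delta\mathbf 1$, $\delta=\Theta(1/M)$ so that $F$ is $2$-Lipschitz, $\varepsilon=\Theta(\delta)$, and $F(x)$ depending only on the path near the cube of $x$. The query bound, however, rests entirely on your claim that, conditioned on the transcript, a query away from the revealed prefix meets the unrevealed part of $P$ with probability $2^{-\Omega(d)}$. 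You explicitly leave open the construction of a path distribution with this property, calling it the ``main obstacle''. That claim \emph{is} the theorem; without it nothing has been proved.

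The candidates you name also do not have this property. Because $c_0$ is pinned at the origin corner, a random automorphism of a fixed snake reduces essentially to a random coordinate permutation $\pi$, which carries only $O(d\log d)$ bits. Until the template first moves in a new coordinate, the path's position depends only on $\pi$ restricted to the coordinates already used, since the others are still at their corner value. An algorithm that knows the template can therefore compute the cube just before each new coordinate appears and query it. It reads off the image of that coordinate from the outgoing direction, and so recovers all of $P$, hence $c_T$, with about $d$ queries. The recursive doubling with random reflections is unanalysed and falls to the same attack as long as its randomness is only polynomial.

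Your adjacent/non-adjacent bookkeeping also fails as stated for axis-parallel steps. In your own information model, a query at $c_t+2e_k$ reveals $c_{t+1}$ whenever the path steps along $e_k$. For the likeliest of the $2d$ directions this happens with probability at least $1/(2d)$, not $2^{-\Omega(d)}$, even though that query need not be adjacent to the prefix.

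What is actually needed is a self-avoiding, well-separated path law of length $2^{\Omega(d)}$ in $[M]^d$, with $M$ constant, with the following property. Beyond a bounded horizon, the unrevealed path comes within distance one of any fixed cube only with probability $2^{-\Omega(d)}$, uniformly over the conditioning. Each query is then charged for a bounded extension of the known prefix. Alternatively, you could use an adaptive adversary that grows the path away from the queried cells.

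One workable law lets every coordinate run an independent lazy walk on $[M]$, so that steps are diagonal. Independence across coordinates bounds the probability that $c_{t+s}$ comes near a fixed cube by $\gamma^d$ for a constant $\gamma<1$, once $s$ exceeds a constant. A union bound over $T=2^{cd}$ steps then gives both the separation and the hitting estimate. You would still have to exclude exponentially unlikely bad steps, such as the direction $\mathbf 1$, which is antiparallel to the default flow.
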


\begin{proof}[Proof of \Cref{thm:SP-lower-bound}]
Let $\varepsilon$ be a sufficiently small constant for \Cref{thm:HPV}. Consider any $2$-Lipschitz function $F:[0,1]^d\to[0,1]^d$. We show how to reduce the problem of finding an $\varepsilon$-approximate fixed point of $F$ to a \SP instance.

Without loss of generality, we can assume that for all $i \in [d]$, $F_i(x) > 0$ when $x_i = 0$, and $F_i(x) < 1$ when $x_i = 1$. Indeed, if this is not the case, then we can consider the function $(1-\varepsilon/2)F + (\varepsilon/2) v$ instead of the function $F$, where $v = (1/2, \dots, 1/2) \in [0,1]^d$, and look for an $(\varepsilon/2)$-approximate fixed point.

Let $\wid := \lceil 1 + 3/\varepsilon \rceil$ and define the function $\phi: [\wid] \to [0,1], t \mapsto (t-1)/(\wid-1)$. For a vector $p \in [\wid]^\dims$, define by a slight abuse of notation $\phi(p) := (\phi(p_1), \dots, \phi(p_d))$. Note that $\phi_i(p) = \phi(p_i)$. Now define the \SP labeling $\lambda: [\wid]^\dims \to \{-1,+1\}^\dims$ as follows. For any $p \in [\wid]^\dims$
\[
\lambda(p) :=
\begin{cases}
    +1&\text{if } F_i(\phi(p)) > \phi_i(p)\\
    -1&\text{if } F_i(\phi(p)) \leq \phi_i(p)
\end{cases}
\]
Note that $\lambda$ satisfies the \SP boundary conditions. Furthermore, any query to $\lambda$ can be answered by performing at most one query to $F$.

It remains to show that any \SP solution $p^{(1)}, \dots, p^{(\dims)} \in [\wid]^\dims$ to $\lambda$ yields an $\varepsilon$-approximate fixed point of $F$. Let $q := p^{(1)}$. We will show that $\phi(q)$ is an approximate fixed point. Consider any $i \in [\dims]$. Since $p^{(1)}, \dots, p^{(\dims)}$ form a \SP solution, there exists $q'$ with $\|q-q'\|_\infty \leq 1$ such that $\lambda_i(q') = -1$. This implies that $F_i(\phi(q')) \leq \phi_i(q')$. Now we can write
$$F_i(\phi(q)) \leq F_i(\phi(q')) + |F_i(\phi(q)) - F_i(\phi(q'))| \leq \phi_i(q') + 2\varepsilon/3 \leq \phi_i(q) + \varepsilon$$
where we used the fact that $F$ is $2$-Lipschitz and $\phi$ is $(\varepsilon/3)$-Lipschitz. Using the fact that there exists $q'$ with $\|q-q'\|_\infty \leq 1$ such that $\lambda_i(q') = +1$, a similar argument also proves that $F_i(\phi(q)) \geq \phi_i(q) - \varepsilon$. Since this holds for any $i \in [\dims]$, we have shown that $\phi(q)$ is an $\varepsilon$-approximate fixed point of $F$.
\end{proof}

\section{Proof Omitted from \Cref{sec:OPC}}\label{sec:app:OPC}

\OPCLB*

We will follow the reduction from \citet[Section~3.2]{deligkas2022pure} very closely. They provide a polynomial-time reduction from the white-box version of \SP to \PC. In what follows, we describe the modifications that need to be made to their reduction in order to obtain a query-efficient reduction from the black-box version of \SP to \OPC.

Let $\lambda: [\wid]^\dims \to \{-1,+1\}^\dims$ be a \SP labeling satisfying the boundary conditions, as defined in \Cref{def:SP}. We use the same notation as in \citet[Section~3.2]{deligkas2022pure}, except that we have used $\dims$ to denote the dimension of the \SP instance, whereas they use $N$. We assume that the width $\wid$ of the instance is a sufficiently large constant such that the query lower bound of $2^{\Omega(\dims)}$ from \Cref{thm:SP-lower-bound} applies.

\paragraph{Construction.}
The construction of the \OPC instance is completely identical to the construction of the \PC instance in \citet[Section~3.2]{deligkas2022pure}, except for the ``circuit stage''. In that part of the construction, the goal is to simulate the evaluation of the \SP labeling $\lambda$ on some input. In fact, this needs to be done on $\cop$ separate occasions, where $\cop := 3 \dims \wid^2$. Fix some $k \in [\cop]$. We are given nodes $(u_{i,j}^{(k)})_{(i,j) \in [\dims] \times [\wid]}$ representing a point in $[\wid]^\dims$. Namely, for each $i \in [\dims]$, $u_{i,1}^{(k)}, \dots, u_{i,\wid}^{(k)}$ represents an element in $[\wid]$ in unary. Our task, or ``contract'', is to ensure that some designated output nodes, denoted $v_1^{(k)}, \dots, v_\dims^{(k)}$, encode the output of $\lambda$ on the input encoded by $(u_{i,j}^{(k)})_{(i,j) \in [\dims] \times [\wid]}$. By this we mean that if the $i$th output of $\lambda$ is $+1$, then $v_i^{(k)} = 1$, and if the $i$th output of $\lambda$ is $-1$, then $v_i^{(k)} = 0$. Importantly, for any given $k \in [\cop]$, the contract only requires us to ensure this correct output when \emph{all} inputs $(u_{i,j}^{(k)})_{(i,j) \in [\dims] \times [\wid]}$ are correct bits, i.e., lie in $\{0,1\}$.

In the original reduction by \citet{deligkas2022pure}, the contract is enforced by using the \PC gates to simulate the execution of the Boolean circuit computing $\lambda$. In our case, $\lambda$ is a black-box and we are not given a circuit computing it. We will thus need to use the oracle gates to enforce the contract. We let $N := \wid \dims + \dims$ and define the oracle $\cL: \{0,1\}^N \to \{0,1\}$ as follows. On input $p \in \{0,1\}^N$, decomposed as $(z,t) \in \{0,1\}^{\wid \dims} \times \{0,1\}^\dims$:
\begin{enumerate}
    \item If $t \in \{0,1\}^\dims$ contains a $1$ in exactly one position, and all other entries are $0$, then let $i$ denote the entry such that $t_i = 1$, and proceed to the next step. If this is not the case, then the oracle $\cL$ outputs something arbitrary, say $0$.
    \item Interpret $z \in \{0,1\}^{\wid \dims}$ as representing a corresponding point $\overline{z} \in [\wid]^\dims$, where each entry is given in unary. Evaluate $\lambda_i(\overline{z})$ by performing one query to $\lambda$. (Recall that $i$ is the index such that $t_i = 1$ from the first step.)
    \item If $\lambda_i(\overline{z}) = +1$, the oracle $\cL$ outputs $1$. If $\lambda_i(\overline{z}) = -1$, the oracle $\cL$ outputs $0$.
\end{enumerate}

Coming back to the contract we have to enforce, we can now use the oracle gate (with oracle $\cL$). Namely, for each $k \in [\cop]$ and $i \in [\dims]$, we ensure that $v_i^{(k)}$ has the correct value by introducing an oracle gate that has $v_i^{(k)}$ as output, and takes as input $(u_{\ell,j}^{(k)})_{(\ell,j) \in [\dims] \times [\wid]}$ for the first $\wid \dims$ bits. The remaining $\dims$ input bits are hardcoded to be $(0, \dots, 0,1,0,\dots, 0)$, where the $1$ appears in the $i$th position. This indicates to the oracle $\cL$ that we want the $i$th output of $\lambda$. In order to create these hardcoded bits, it suffices to create $d$ nodes that are guaranteed to be $0$ in any solution and $d$ nodes that are guaranteed to be $1$ in any solution. Then, each time we use an oracle gate, we can pick the corresponding hardcoded bits as input. In \Cref{fig:constantcirc} we show a constant-size gadget which ensures that a particular node always has value $0$. We include a proof of this in \Cref{lem:constantcirc} at the end of this section. From this node that always has value $0$, a node that always has value $1$ can easily be constructed using one \PURIFY and one \NOR gate.

\paragraph{Correctness.}
It is now easy to see that the contract is correctly enforced. Indeed, whenever the nodes $(u_{\ell,j}^{(k)})_{(\ell,j) \in [\dims] \times [\wid]}$ all have values in $\{0,1\}$, the fact that the oracle gate has to be satisfied will imply that the nodes $v_1^{(k)}, \dots, v_\dims^{(k)}$ encode the correct output values. When the nodes $(u_{\ell,j}^{(k)})_{(\ell,j) \in [\dims] \times [\wid]}$ do not all have values in $\{0,1\}$, then the contract does not require us to enforce anything.

As we have fulfilled the ``circuit stage'' contract, i.e., Lemma~3.3 of \citet{deligkas2022pure}, and the rest of the construction is identical, the correctness of the reduction follows by the exact same arguments as in \citet[Section~3.2]{deligkas2022pure}. Namely, it holds that any solution of the constructed \OPC instance yields a solution of the \SP instance.

\paragraph{Oracle Query Complexity.}
First, let us note that in order to answer a query to $\cL$ we only need to make at most one query to $\lambda$. Thus, by \Cref{thm:SP-lower-bound}, we know that at least $2^{\Omega(\dims)}$ queries to $\cL$ are required to solve the \OPC instance.

In order to show that this corresponds to a lower bound of $2^{\Omega(|V|^{1/3})}$, where $|V|$ is the number of nodes in the \OPC instance, it remains to argue that $|V| = O(\dims^3)$. By inspection of the construction in Section~3.2 of \citet{deligkas2022pure}, we have:
\begin{itemize}
    \item The ``purification stage'' requires $O(\wid \dims \cop) = O(\dims^2 \wid^3)$ nodes, where we used the fact that $\cop = 3 \dims \wid^2$.
    \item The ``circuit stage'', which is the only stage that we modified, requires $\dims \cop = O(\dims^2 \wid^2)$ for the oracle output nodes, as well as $O(d)$ nodes to create the hardcoded $0$ and $1$ bits.
    \item The ``sorting stage'' requires a sorting network over $\cop$ elements for each $i \in [\dims]$. By using a simple sorting network of size $O(\cop^2)$, we thus obtain an upper bound of $O(\dims \cop^2) = O(\dims^3 \wid^4)$ nodes.
    \item The ``selection stage'' does not introduce any additional nodes.
\end{itemize}
Putting everything together, we have $|V| = O(\dims^3 \wid^4) = O(\dims^3)$, where we used the fact that $M$ is a constant. A subtle point is that the reduction in Section~3.2 of \citet{deligkas2022pure} also uses gates \textsc{And}, \textsc{Or}, \textsc{Not}, and \textsc{Copy}. However, it is easy to see that each of those gates can be simulated by a constant-size gadget that uses only \NOR and \PURIFY gates. Thus, we still have $|V| = O(\dims^3)$. This completes the proof of \Cref{th:OPCLB}.

\begin{remark}
By using a more involved sorting network of size $O(\cop \log \cop)$, instead of $O(\cop^2)$, one can obtain an instance of size $|V| = O(d^2 \log d)$ \citep{ajtai19830}. Thus, the lower bound in \Cref{th:OPCLB} can be improved to $2^{\widetilde{\Omega}(\sqrt{|V|})}$, where $\widetilde{\Omega}$ allows for division by polylogarithmic factors. We leave open whether this lower bound can be further improved.
\end{remark}

Below, we include a lemma which proves the correctness of the gadget that we used to create the hardcoded bits.

\begin{figure}[!tp]
    \centering
    \scalebox{0.75}{\def\gateW{14mm}
\def\gateH{9.5mm}
\def\portSep{\gateH/4}
\def\sepone{\gateW/2}
\def\septwo{\sepone+\gateW/2}
\def\lowerDrop{1.35cm}

\begin{tikzpicture}[
  font=\sffamily,
  >=Latex,
  wire/.style={
    -{Latex[length=2.2mm]},
    line width=0.75pt,
    rounded corners=5pt,
    line cap=round,
    line join=round
  },
  plain wire/.style={
    line width=0.75pt,
    rounded corners=5pt,
    line cap=round,
    line join=round
  },
  gate/.style={
    draw,
    thick,
    rounded corners=4pt,
    minimum width=\gateW,
    minimum height=\gateH,
    font=\small\sffamily
  },
  pur/.style={
    gate,
    draw=blue!55!black,
    fill=blue!8
  },
  nor/.style={
    gate,
    draw=green!45!black,
    fill=green!10
  },
  nodevar/.style={
    circle,
    draw=black!65,
    fill=white,
    thick,
    minimum size=3.8mm,
    inner sep=0pt,
    font=\scriptsize
  },
  copynode/.style={
    circle,
    draw=black!55,
    fill=white,
    thick,
    minimum size=6mm,
    inner sep=0pt,
    font=\scriptsize
  }
]

\node[pur]     (Pa) at (0,0) {\PURIFY};
\coordinate (PaIn)  at (Pa.west);
\coordinate (PaO1)  at ($(Pa.east)+(0,\portSep)$);
\coordinate (PaO2)  at ($(Pa.east)+(0,-\portSep)$);

\node[nodevar] (aone)  at ($(Pa.east)+(\sepone,\portSep)$) {$v_2$};
\node[nodevar] (atwo)  at ($(Pa.east)+(\sepone,-\portSep)$) {$v_3$};
\coordinate (a12_)  at ($(aone)+(0,-\portSep)$);

\node[nor]     (Nb) at ($(a12_)+(\septwo,0)$) {\NOR};
\node[nodevar] (b)  at ($(Nb.east)+(\sepone,0)$) {$v_4$};
\coordinate (NbI1)  at ($(Nb.west)+(0,\portSep)$);
\coordinate (NbI2)  at ($(Nb.west)+(0,-\portSep)$);
\coordinate (NbOut) at (Nb.east);
\coordinate (b_)  at (b);

\node[pur]     (Pb) at ($(b_)+(\septwo,0)$) {\PURIFY};
\node[nodevar] (c)  at ($(Pb.east)+(\sepone,\portSep)$) {$v_5$};
\coordinate (c_)  at (c);

\node[pur]     (Pc) at ($(c)+(2*\septwo,-\lowerDrop)$) {\PURIFY};

\node[nodevar] (cone)  at ($(Pc.east)+(\sepone,\portSep)$) {$v_6$};
\node[nodevar] (ctwo)  at ($(Pc.east)+(\sepone,-\portSep)$) {$v_7$};%
\coordinate (c_)  at ($(cone)+(0,-\portSep)$);

\node[nor]     (Nd) at ($(c_)+(\septwo,0)$) {\NOR};
\node[nodevar] (d)  at ($(Nd.east)+(\sepone,0)$) {$v_8$};

\node[nor]     (Ne) at ($(d)+(2*\septwo,\lowerDrop)$) {\NOR};
\node[nodevar] (e)  at ($(Ne.east)+(\sepone,0)$) {$v_9$};

\node[nodevar] (a)  at ($(Nb)+(0,-\lowerDrop+\portSep)$) {$v_1$};

\coordinate (PbIn)  at (Pb.west);
\coordinate (PbO1)  at ($(Pb.east)+(0,\portSep)$);
\coordinate (PbO2)  at ($(Pb.east)+(0,-\portSep)$);

\coordinate (PcIn)  at (Pc.west);
\coordinate (PcO1)  at ($(Pc.east)+(0,\portSep)$);
\coordinate (PcO2)  at ($(Pc.east)+(0,-\portSep)$);

\coordinate (NdI1)  at ($(Nd.west)+(0,\portSep)$);
\coordinate (NdI2)  at ($(Nd.west)+(0,-\portSep)$);
\coordinate (NdOut) at (Nd.east);

\coordinate (NeI1)  at ($(Ne.west)+(0,\portSep)$);
\coordinate (NeI2)  at ($(Ne.west)+(0,-\portSep)$);
\coordinate (NeOut) at (Ne.east);

\coordinate (Paleft) at ($(Pa.west)+(-0.65cm,-\lowerDrop/2)$);
\draw[wire] (a.west) -| (Paleft) |- (Pa.west);

\draw[wire] (PaO1) -- (aone.west);
\draw[wire] (aone.east) -- (NbI1);

\draw[wire] (PaO2) -- (atwo.west);
\draw[wire] (atwo.east) -- (NbI2);

\draw[wire] (NbOut) -- (b.west);

\draw[wire] (b.east) -- (PbIn);

\draw[wire] (PbO1) -- (c.west);
\draw[wire] (PbO2) -- ++(0.35cm,0) |- (a.east);

\coordinate (cBus) at ($(c.east)+(0.55cm,0)$);

\draw[plain wire] (c.east) -- (cBus);
\draw[wire] (cBus) |- (PcIn);
\draw[wire] (cBus) |- (NeI1);

\draw[wire] (PcO1) -- (cone.west);
\draw[wire] (cone.east) -- (NdI1);

\draw[wire] (PcO2) -- (ctwo.west);
\draw[wire] (ctwo.east) -- (NdI2);

\draw[wire] (NdOut) -- (d.west);

\draw[wire] (d.east) -- ++(0.35cm,0) |- (NeI2);
\draw[wire] (NeOut) -- (e.west);

\end{tikzpicture}}
    \caption{The gadget implementing the constant node $b(v_9)=0$. Assuming that this gadget actually forces $b(v_9)$ (proven in \Cref{lem:constantcirc}), we can easily implement a similar one that forces an assignment of $1$ to a specific node. Indeed, we could further apply a \PURIFY gate to $v_9$, with outputs $v_{10}$ and $v_{11}$, and then a further \NOR gate with inputs $v_{10}$ and $v_{11}$ and output $v_{12}$. If $b(v_9)=0$ it is clear that only $b(v_{12})=1$ satisfies the additional gates.}
    \label{fig:constantcirc}
\end{figure}

\begin{lemma}\label{lem:constantcirc}
    In the gadget described in \Cref{fig:constantcirc}, for every assignment $b:V\to\{0,1,\bot\}$ that satisfies all gates, it must hold that $b(v_9)=0$.
\end{lemma}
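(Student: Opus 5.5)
The plan is a direct case analysis on the gate constraints, read off from \Cref{fig:constantcirc}. The gates are:
\begin{itemize}
\item $(v_1,v_2,v_3)\in\cG_{\PURIFY}$ and $(v_2,v_3,v_4)\in\cG_{\NOR}$;
\item $(v_4,v_5,v_1)\in\cG_{\PURIFY}$, whose output $v_1$ closes a loop back to the first gate;
\item $(v_5,v_6,v_7)\in\cG_{\PURIFY}$ and $(v_6,v_7,v_8)\in\cG_{\NOR}$;
\item $(v_5,v_8,v_9)\in\cG_{\NOR}$.
\end{itemize}
The idea is that the first loop is a ``liar'' cycle: it cannot be consistently resolved with a Boolean value on $v_1$. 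This forces $v_5$ to be a genuine bit, and the second half of the gadget then outputs $0$ whichever bit $v_5$ carries.

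\textbf{Step 1: $b(v_1)=\bot$.} Suppose instead that $b(v_1)\in\{0,1\}$.
\begin{itemize}
\item The first \PURIFY gate gives $b(v_2)=b(v_3)=b(v_1)$.
\item The \NOR gate then gives $b(v_4)=1-b(v_1)\in\{0,1\}$, using both of its implications.
\item Since $b(v_4)$ is a bit, the second \PURIFY gate forces $b(v_1)=b(v_4)=1-b(v_1)$, which is a contradiction.
\end{itemize}

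\textbf{Step 2: $b(v_5)\in\{0,1\}$.} The \PURIFY gate $(v_4,v_5,v_1)$ requires $\{b(v_5),b(v_1)\}\cap\{0,1\}\neq\emptyset$. Since $b(v_1)=\bot$ by Step~1, it follows that $b(v_5)\in\{0,1\}$.

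\textbf{Step 3: $b(v_9)=0$.} We split on the value of $v_5$.
\begin{itemize}
\item If $b(v_5)=1$, the \NOR gate $(v_5,v_8,v_9)$ immediately gives $b(v_9)=0$, regardless of $b(v_8)$.
\item If $b(v_5)=0$, the \PURIFY gate $(v_5,v_6,v_7)$ gives $b(v_6)=b(v_7)=0$. The \NOR gate then gives $b(v_8)=1$, and hence $b(v_9)=0$ again, because one input of the last \NOR gate is $1$.
\end{itemize}

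The only non-mechanical point is Step~1: one has to notice that the first loop admits no Boolean value on $v_1$. This in turn relies on the second \PURIFY gate being forced to copy a Boolean input to both of its outputs, including the fed-back $v_1$. The remainder is routine unfolding of the gate definitions.
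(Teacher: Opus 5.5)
Your proof is correct and follows the paper's argument essentially step by step. Like the paper, you show $b(v_1)=\bot$ via the self-contradictory loop, deduce $b(v_5)\in\{0,1\}$ from the \PURIFY gate $(v_4,v_5,v_1)$, and then case on $b(v_5)$. The only harmless differences are that you treat both Boolean values of $v_1$ at once via $b(v_4)=1-b(v_1)$, and that in the case $b(v_5)=1$ you conclude directly from the last \NOR gate, skipping the paper's unnecessary detour through $b(v_8)=0$.
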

\begin{proof}
    Let $b:V\to\{0,1,\bot\}$ be any assignment that satisfies all gates. First, we will argue that $b(v_1)=\bot$. Assume by contradiction that $b(v_1)=1$, then we would also have $b(v_2)=b(v_3)=1$ by definition of the \PURIFY gate. Then $b(v_4)=0$ by definition of the \NOR constraints. Now the \PURIFY gate, with $v_4$ as input, has a node with a pure bit input, and thus its outputs are both pure and assigned with the same bit, which is $0=b(v_4)=b(v_5)=b(v_1)$. Contradicting the fact that $b(v_1)$ was assumed to be $1$. A similar, symmetric situation will occur if we assume $b(v_1)=0$, thus leaving only $b(v_1)=\bot$ as an option. 

    Now we claim that $b(v_5)\in\{0,1\}$. Indeed, the \PURIFY gate with input $v_4$ has as outputs $v_1$ and $v_5$, but the \PURIFY gate constraints force at least one of the two output bits to be pure, and, since $v_1$ cannot be pure, then $v_5$ must be.

    Now observe that if $b(v_5)\in\{0,1\}$, then the statement follows. Indeed, if $b(v_5)=0$, then $b(v_6)=b(v_7)=b(v_5)=0$ by the constraints of the \PURIFY gate, and $b(v_8)=1$ by the ones on the \NOR gate. Then, if $b(v_5)=0$ and $b(v_8)=1$, then the last \NOR gate forces $b(v_9)=0$.

    Symmetrically, if $b(v_5)=1$ then it leads to $b(v_8)=0$ and similarly to before, to $b(v_9)=0$, by the last \NOR gate constraints.
\end{proof}

\section{Proofs Omitted from \Cref{sec:smooth}}\label{sec:app1}
\basicinterp*
\begin{proof}
We only show the bounds on the derivatives. Define $a:=x-c_1$ and $b:=c_2-x$, and also $A=e^{-1/a}$, $B=e^{-1/b}$, $S=A+B$ and $\Delta=c_2-c_1$. Then we for all $x\in(c_1,c_2)$ we can write $\phi_{c_1,c_2}(x)=A/S$. We are only interested in the behavior of $\phi_{c_1,c_2}$ in $x\in(c_1,c_2)$ since outside this interval $\phi_{c_1,c_2}'=\phi_{c_1,c_2}''=0$. We can observe that $A'=\frac A{a^2}$, $B'=\frac B{b^2}$, and thus $|A'|,|B'|\le1$ since $\frac{e^{-1/t}}{t^2}\le1$ for $t>0$.
Moreover, (since at least one between $a$ and $b$ will be at least $\Delta/2$) at least one between $A$ and $B$ is at least $e^{-2/\Delta}$ so $S\ge e^{-2/\Delta}$.

Thus, we can bound the first derivative by observing that $\phi'_{c_1,c_2}(x)=\frac{A'(A+B)-A(A'+B')}{S^2}=\frac{A'B-AB'}{S^2}$, which implies that
\begin{align*}
|\phi'_{c_1,c_2}(x)|&\le \frac{|A'|B+A|B'|}{S^2}\\
&\le \frac{A+B}{S^2}\\
&\le \frac1S\le  e^{2/\Delta}.
\end{align*}

For the second derivative, we can observe that 
\begin{align*}
    \phi''_{c_1,c_2}(x)&=\frac{(A'B-AB')'S^2-(S^2)'(A'B-AB')}{S^4}\\
    &=\frac{(A''B-AB'')S^2-2S(A'+B')(A'B-AB')}{S^4}\\
    &=\frac{A''B-AB''}{S^2}-2\frac{(A'+B')(A'B-AB')}{S^3},
\end{align*}
Moreover, we can observe that, for $t>0$ we have $\frac{e^{-1/t}}{t^4}\le5$ and $\frac{e^{-1/t}}{t^3}\le\frac32$, so that $|A''|\le e^{-1/a}\left(\frac{1}{a^4}+\frac2{a^3}\right)\le 8$ and the same holds for $|B''|$. Plugging these bound in $|\phi''_{c_1,c_2}(x)|$ we obtain
\begin{align*}
|\phi''_{c_1,c_2}(x)|&\le \frac{|A''|B+A|B''|}{S^2}+2\frac{(|A'|+|B'|)(|A'|B+A|B'|)}{S^3}\\
&\le \frac{8}{S}+\frac{4}{S^2}\\
&\le 8e^{-2/\Delta}+4e^{-4/\Delta}\\
&\le 12 e^{-4/\Delta},
\end{align*}
concluding the proof.
\end{proof}

\interpolation*
\begin{proof}
Let 
\(
    \alpha(t)=1-\phi_{\frac16,\frac13}(t).%
\)
and for any vertex $y\in\{0,1\}^N$ define 
\[
    \Phi_{y}(x)=\prod_{i\in[N]}\alpha(y_i+(1-2y_i)x_i).
\]
Crucially, by \Cref{lem:basicinterp}, for a vertex $y\in\{0,1\}^N$, we have that $\Phi_{y}(x)=1$ for all $x \in B_{\nicefrac16}^\infty(y)$; that $\Phi_{y}(x)=0$ for all $x \notin B_{\nicefrac13}^\infty(y)$; and that $\Phi_{y}(x)\in(0,1)$ for all $x \in B_{\nicefrac13}^\infty(y)\setminus B_{\nicefrac16}^\infty(y)$.

For any $x\in[0,1]^N$ we call any vertex $y\in\{0,1\}^N$ \emph{active} if $x\in B_{1/3}^\infty(y)$. Note that, since $\ell_\infty$-balls of radius $1/3$ centered at the vertices of the hypercube are disjoint, at most $1$ vertex is active for all $x\in[0,1]^N$. So if there is any coordinate $i$ such that $x_i\in(\tfrac13,\tfrac23)$, then there is no active vertex. Otherwise, we can identify the active vertex $y^*$ by rounding every $x_i\le1/3$ to $y^*_i=0$ and every $x_i\ge2/3$ to $y^*_i=1$.

Then we define 
\[
h(x)=\frac{1}{2}+\sum_{y\in\{0,1\}^N}\Phi_{y}(x)(\hat h(y)-\tfrac12)
\]
for all $x\in[0,1]^N$.
Since there is at most one active node for all $x\in[0,1]^N$, we have that either $\Phi_{y}(x)=0$ for all $y\in\{0,1\}^N$ or that there is a single active one $y^*$. In the first case, we have $h(x)=\frac12$, in the second one we have $h(x)=\frac{1}{2}+\Phi_{y^*}(x)(\hat h(y^*)-1/2)$. Moreover if $x\in B_{\nicefrac16}^\infty(y^*)$ we have that $\Phi_{y^*}(x)=1$ and thus $h(x)=\hat h(y^*)$. This proves the first and second items of the statement. Note that since we can know what the unique active node (if any) is, we need only one evaluation of $\hat h$.

By direct calculations we have that for all $y\in\{0,1\}^N$ we have:
\begin{itemize}
    \item $\partial_{x_j}\Phi_y(x)=(1-2y_j)\alpha'(y_j+(1-2y_j)x_j)\prod_{i\neq j}\alpha(y_i+(1-2y_i)x_i)$;
    \item $\partial_{x_j,x_k}\Phi_y(x)=(1-2y_j)\alpha'(y_j+(1-2y_j)x_j)(1-2y_k)\alpha'(y_k+(1-2y_k)x_k)\prod_{i\neq j,k}\alpha(y_i+(1-2y_i)x_i)$ if $k\neq j$;
    \item $\partial_{x_j,x_k}\Phi_y(x)=\alpha''(y_j+(1-2y_j)x_j)\prod_{i\neq j}\alpha(y_i+(1-2y_i)x_i)$;
\end{itemize}
which in turn, thanks to \Cref{lem:basicinterp} and $\alpha(x)=\phi_{\tfrac16,\tfrac13}(\tfrac12-x)$, implies that for all $x\in[0,1]^N$:
\begin{itemize}
    \item $|\partial_{x_j}\Phi_y(x)|\le\|\alpha'\|_\infty\le e^{12}$;
    \item $|\partial_{x_j,x_k}\Phi_y(x)|\le\|\alpha'\|_\infty^2\le e^{24}$ if $k\neq j$;
    \item $|\partial_{x_j,x_k}\Phi_y(x)|\le\|\alpha''\|_\infty\le12 e^{24}$.
\end{itemize}

Thus, thanks to the observations above, we can also bound the derivative. A direct calculation shows that
\[\partial_{x_j} h(x)=
\begin{cases}
    0&\text{if no vertex is active,}\\
    (\hat h(y)-\tfrac12)\partial_{x_j} \Phi_y(x)&\text{if $y$ is active.}
\end{cases}
\]
and thus,
\[
|\partial_{x_j}h(x)|\le \frac{1}{2}|\partial_{x_j} \Phi_y(x)| \le\frac12 \|\alpha'\|_\infty\le e^{12}/2.
\]
Similarly, we can note that 
\[\partial_{x_j,x_k} h(x)=
\begin{cases}
    0&\text{if no vertex is active,}\\
    (\hat h(y)-\tfrac12)\partial_{x_j,x_k} \Phi_y(x)&\text{if $y$ is active.}
\end{cases}
\]
and thus, thanks to the previous observations, $|\partial_{x_j,x_k} h(x)|\le 6e^{24}$.
\end{proof}

\section{Proof Omitted from \Cref{sec:SB}}

\theoremSB*
\begin{proof}
Given an instance of \OPC with vertex set $V$ and oracle $\cL$, we build a smooth function $F:[0,1]^d\to[0,1]^d$, where as promised $d=|V|$. 
Let $\varepsilon=1/12$. For every $w\in V$, the $w$-th component $F_w:[0,1]^d\to[0,1]$ is defined based on the gate type for which $w$ is an output:

\begin{itemize}[leftmargin=0.5cm]
    \item If $(u,v,w)\in \cG_{\NOR}$,  we define $F_w(z)=g(z_u+z_v)$, where
    \begin{align*} %
    g(z)=1-\phi_{\tfrac13,\tfrac23}(z).%
    \end{align*}
    \item If $(u,v,w)\in \cG_{\PURIFY}$, we define $F_v(z)=\ell(z_u+\tfrac14)$ and $F_w(z)=\ell(z_u-\tfrac14)$, where
    \begin{align*} %
    \ell(z)=\phi_{\tfrac5{12},\tfrac7{12}}(z).%
\end{align*}
\item If $(u_1,\ldots,u_N,v)\in \cG_{\ORACLE}$, we define $F_v(z)=\cH((z_{u_1},\ldots,z_{u_N}), \cL)$ according to \Cref{lem:inteprolation}.
\end{itemize}

\paragraph{Correctness.}
Let $z\in[0,1]^d$ be any point satisfying $\|F(z)-z\|_\infty\le\varepsilon$. We define the assignment $b:V\rightarrow \{0,1,\bot\}$ as follows:
\[b(v)=\begin{cases}
    0&\text{if}\quad z_v\le \tfrac16\\
    \bot&\text{if}\quad z_v\in(\tfrac16,\tfrac56)\\
    1&\text{if}\quad z_v\ge \tfrac56
\end{cases}.\]

Recall that by \Cref{lem:basicinterp}, it holds $g(z)=0$ if $z\ge \frac23$, $g(z)\in (0,1)$ if $z\in (\frac13,\frac23)$, and $g(z)=1$ if $z\le \frac{1}{3}$.
Moreover, $\ell(z)=0$ if $z\le \frac{5}{7}$, $\ell(z)\in (0,1)$ if $z\in (\frac{5}{12},\frac{7}{12})$, and $\ell(z)=1$ if $z\ge \frac{7}{12}$.

We prove that $b$ is a valid assignment to $\OPC$ since it satisfies all the gate constraints posed by $\OPC$:

\begin{itemize}[leftmargin=0.5cm]
\item  Gate $(u,v,w)\in\cG_{\NOR}$.
If $b(v)=b(u)=0$, then by definition $z_u,z_v\le 1/6$ and $z_u+z_v\le1/3$ and thus $F_w(z)=g(z_u+z_v)=1$. Since $z$ is a solution to \SB it holds that $z_w\ge 1-\varepsilon> 5/6$ and thus $b(w)=1$ as required. 
Similarly, if $b(u)=1$ or $b(v)=1$, then we have that $z_u+z_v\ge 5/6$ and thus $F_w(z)=g(z_u+z_v)=0$, which implies that $z_w\le\varepsilon<1/6$ and thus $b(w)=0$.
In all other cases \OPC does not impose any constraints on the \NOR gate.

\item Gate $(u,v,w)\in\cG_{\PURIFY}$.
If $b(u)=0$ then $z_u\le1/6$. Thus, $F_v(z)=\ell(z_u+1/4)=0$. Moreover, $z_u-1/4\le 5/12$ and thus also $F_w(z)=0$. Thus, since $z$ is a solution to \SB, $z_v,z_w\le \varepsilon < 1/6$ and $b(v)=b(w)=0$.
Similarly, if $b(u)=1$ then $z_u\ge 5/6$, $z_u+1/4,z_u-1/4\ge7/12$ and $F_v(z)=F_w(z)=1$. It follows that $z_v,z_w\ge 1-\varepsilon>5/6$ and $b(v)=b(w)=1$.
Finally, if $b(u)=\bot$ then $z_u\in(1/6,5/6)$ and it is easy to see that either $z_u+1/4\ge 7/12$ or $z_u-1/4\le 5/12$. In the first case, we have $F_v(z)=1$ (and thus $z_v\ge 1-\varepsilon>5/6$ and $b(v)=1$), while in the second case we have $F_w(z)=0$ (and thus $z_w\le \varepsilon<1/6$ and $b(w)=0$).
This shows that the \PURIFY constraints are also satisfied.

\item Gate $(u_1,\ldots, u_N,v)\in\cG_{ORACLE}$.
Assume $(b(u_1),\ldots, b(u_N))\in\{0,1\}^N$, otherwise there are no constraints. %
Define $z'=(b(u_1),\ldots, b(u_N))$.
Note that if $z'_{i}=b(u_i)=1$ then $z_{u_i}\ge 5/6$ and $|z'_{i}-z_{u_i}|\le1/6$.
Similarly, if $z'_{i}=b(u_i)=0$, then  $z_{u_i}\le 1/6$ and $|z'_{i}-z_{u_i}|\le 1/6$.
Therefore, $(z_{u_1},\ldots, z_{u_N})\in B^\infty_{1/6}(z')$ and from \Cref{lem:inteprolation} this implies that $F_v(z)=\cL(z')$ and thus $|z_v-F_v(z)|\le\varepsilon$.
If $\cL(z')=1$ then $z_v\ge 1-\varepsilon>5/6$ and $b(v)=1$. 
If $\cL(z')=0$ then $z_v\le\varepsilon<1/6$ and $b(v)=0$. In either case $\cL(z')=b(v)$. This shows that the constraints on the \ORACLE gates are also satisfied.
\end{itemize}

\paragraph{Oracle Query Complexity.}
We show that any evaluation of $F$ or its Jacobian $J_F$ requires at most $|V|$ queries to the oracle $\cL$. Only coordinates corresponding to outputs of \ORACLE gates require access to $\cL$. Since there are at most $|V|$ such components, we can compute the value and the Jacobian in at most $|V|$ oracle calls to $\cL$, since \Cref{lem:inteprolation} shows that each component value and derivative can be obtained via at most one query to $\cL$.

\paragraph{Bounding the Lipschitzness and smoothness.}

Each coordinate $w\in V$ of $F_w$, is either the output of $g(z_u+z_v),\ell(z_u\pm1/4)$ or $\cH((z_{u_1},\ldots,z_{u_N}),\cL)$. Furthermore, by and so by \Cref{lem:basicinterp} we can bound $\|\partial_{z_u} g\|_\infty\le e^6, \|\partial_{z_u}\ell\|_\infty\le e^{12}$ for all $u\in V$, while \Cref{lem:inteprolation} ensures that $\|\partial_{z_u}\cH((z_{u_1},\ldots,z_{u_N}),\cL)\|_\infty\le e^{12}/2$ for all $u\in V$. In all cases, we have
\(
|\partial_{z_u}F_w(z)|\le e^{12},
\)
and thus every component is $e^{12}\sqrt{d}$-Lipschitz making the function $(e^{12}d)$-Lipschitz.
Similarly, we can show that $F$ is polynomially smooth. Indeed, by the same reasoning above, we have that $\|\partial_{z_u,z_{u'}} g\|_\infty\le 12e^{12}, \|\partial_{z_u,z_{u'}}\ell\|_\infty\le 12e^{24}$ for all $u,u'\in V$ by \Cref{lem:basicinterp}, while $\|\partial_{z_u,z_{u'}}\cH((z_{u_1},\ldots,z_{u_N}),\cL)\|_\infty\le6e^{24}$ for all $u,u'\in V$ by \Cref{lem:inteprolation}. This lets us conclude that $|\partial_{z_u,z_{u'}}F_w(z)|\le 12e^{24}$, and thus, that $F$ is $(12e^{24}d^2)$-smooth.
\end{proof}

\section{Proof Omitted from \Cref{sec:GDA}}

We start by computing the partial derivatives of $f$.
To do so, we first need some additional notation. We denote with $\In(w)\subseteq V$ the set of nodes that are inputs to the gate to which $w$ is an output, and $\Out(q)=\{w:q\in \In(w)\}$ is the set of nodes that are outputs of a gate to which $q$ is an input. Moreover, we define the displacement function $G(z)=F(z)-z$, where $F$ is the function defining the \SB instance.

\begin{lemma}\label{lem:decomp}
The partial derivatives of f (defined as per \Cref{eq:f}) are
\begin{subequations}
\begin{align}
    &\partial_{x_{i,j}^q} f(x,y)=s_q(x,y)\partial_{x_{i,j}^q} H_q(x,y)+2(M_i+\Delta_q(x,y))(x_{i,j}^q-y_{i,j}^q)\label{eq:derivx}\\   
    &\partial_{y_{i,j}^q} f(x,y)=s_q(x,y)\partial_{y_{i,j}^q} H_q(x,y)-2(M_i+\Delta_q(x,y))(x_{i,j}^q-y_{i,j}^q)\label{eq:derivy},
\end{align}
\end{subequations}
where 
\begin{align}\label{eq:defDelta}
    & \Delta_q(x,y)=\sum_{w\in\Out(q)} H_w(x,y)\phi'_{3m,3m+1}(\|x^q-y^q\|^2)\partial_{\cE_q(x,y)} s_w(x,y).
\end{align}
\end{lemma}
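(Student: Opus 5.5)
The plan is a direct product- and chain-rule computation on the three kinds of terms in \eqref{eq:f}, keeping track of which terms depend on the block variable $x^q_{i,j}$.

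We first record which building blocks contain $x^q_{i,j}$:
\begin{itemize}
\item $H_w$ depends only on $x^w,y^w$, so among the $H$'s only $H_q$ involves $x^q_{i,j}$.
\item $\cE_u$ depends only on $\|x^u-y^u\|^2$, so only $\cE_q$ involves $x^q_{i,j}$.
\item Each signal $s_w$ is a function of the energies $(\cE_u)_{u\in\In(w)}$. Hence $s_w$ depends on $x^q_{i,j}$ exactly when $q\in\In(w)$, i.e.\ when $w\in\Out(q)$.
\item The regularizer $\varphi$ contributes only through its $(q,i)$ summand.
\end{itemize}

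Differentiating $f=\sum_w s_wH_w+\varphi$ with respect to $x^q_{i,j}$ therefore produces three contributions:
\begin{itemize}
\item the term $s_q\,\partial_{x^q_{i,j}}H_q$, coming from $w=q$ with the signal held fixed;
\item the terms $\sum_{w\in\Out(q)}H_w\,\partial_{x^q_{i,j}}s_w$, coming from signals whose inputs include $q$;
\item the term $\partial_{x^q_{i,j}}\big(M_i\|x^q_i-y^q_i\|^2\big)=2M_i(x^q_{i,j}-y^q_{i,j})$ from the regularizer.
\end{itemize}

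A small point concerns whether $q\in\Out(q)$ is possible. Gates have distinct nodes, so $q$ is never an input of its own gate. The product rule on $s_qH_q$ therefore splits cleanly into the first two kinds of contributions.

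For the middle term we apply the chain rule through the energy. Since $s_w$ depends on $x^q_{i,j}$ only through $\cE_q$, we get $\partial_{x^q_{i,j}}s_w=\partial_{\cE_q}s_w\cdot\partial_{x^q_{i,j}}\cE_q$. Next,
\[
\partial_{x^q_{i,j}}\cE_q=\phi'_{3m,3m+1}(\|x^q-y^q\|^2)\cdot 2(x^q_{i,j}-y^q_{i,j}).
\]
The factor $2(x^q_{i,j}-y^q_{i,j})$ is common to every $w\in\Out(q)$. Pulling it out of the sum and combining with the $\varphi$ term gives exactly $2(M_i+\Delta_q)(x^q_{i,j}-y^q_{i,j})$, with $\Delta_q$ as in \eqref{eq:defDelta}. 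This establishes \eqref{eq:derivx}.

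For \eqref{eq:derivy} the same computation applies. Both $\|x^q-y^q\|^2$ and $\|x^q_i-y^q_i\|^2$ have $y$-derivative $-2(x^q_{i,j}-y^q_{i,j})$, which yields the sign flip. The term $s_q\,\partial_{y^q_{i,j}}H_q$ is simply kept unexpanded.

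The only step needing care is how to interpret $\partial_{\cE_q}s_w$ when a node appears more than once among the inputs of a gate. This would happen if the same $\cE_q$ fed several arguments of $g$ or of $\cH$. Gates consist of distinct nodes, so each $s_w$ is a function of distinct energy arguments. We read $\partial_{\cE_q}s_w$ as the partial derivative of $s_w$ with respect to its $\cE_q$ argument; this is well defined because $g$, $\ell$, and $\cH$ are smooth by \Cref{lem:basicinterp} and \Cref{lem:inteprolation}. For a \PURIFY gate $(q,v,w)$ both outputs lie in $\Out(q)$ and each contributes its own term, which the sum handles automatically. No other obstacle is expected: the lemma is a bookkeeping identity.
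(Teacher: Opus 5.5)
Your proposal is correct and follows essentially the same route as the paper. It uses locality of $H_w$ and $q\notin\In(q)$ to apply the product rule to $\sum_w s_wH_w$, applies the chain rule through $\cE_q=\phi_{3m,3m+1}(\|x^q-y^q\|^2)$ for $w\in\Out(q)$, adds $2M_i(x^q_{i,j}-y^q_{i,j})$ from $\varphi$, and gets the sign flip for the $y$-derivative. Your extra remarks on distinct gate inputs and on both \PURIFY outputs appearing in $\Out(q)$ only make explicit what the paper's bookkeeping leaves implicit.
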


\begin{proof}
    We recall the definition of $f(x,y)$ which is:
    \[
    f(x,y)=\sum_{w\in V} s_w(x,y)\cdot H_w(x,y)+\varphi(x,y).
    \]
    Since $H_w$ is a local term (i.e., it depends only on the variables $x^w,y^w$), and $s_w$ depends only on the nodes that are inputs of $w$ (and in particular $q\notin \In(q)$), we have that:
    \begin{align}\label{eq:tmpsum}
    \partial_{x_{i,j}^q}\left(\sum_{w\in V} s_w(x,y)\cdot H_w(x,y)\right)=s_q(x,y)\partial_{x_{i,j}^q}H_q(x,y)+\hspace{-0.2cm}\sum_{w\in\Out(q)} \hspace{-0.2cm}H_w(x,y)\cdot \partial_{x_{i,j}^q} s_w(x,y).
    \end{align}
    Then, by chain rule, for every $w\in \Out(q)$, we have $\partial_{x_{i,j}^q} s_w(x,y)=\partial_{\cE_q(x,y)}s_w(x,y)\partial_{x_{i,j}^q}\cE_q(x,y)$,while for every $w\not\in\Out(q)$ we have $\partial_{x_{i,j}^q} s_w(x,y)=0$.

    Note that by definition, $\cE_q(x,y)$ is a function of only $\|x^q-y^q\|^2$ and in particular $\cE_q(x,y)=\phi_{3m,3m+1}(\|x^q-y^q\|^2)$. Thus, for all $w\in\Out(q)$:
    \begin{align}\label{eq:tmpsw}
    \partial_{x_{i,j}^q}s_w(x,y)=2\partial_{\cE_q(x,y)} s_w(x,y)\phi'_{3m,3m+1}(\|x^q-y^q\|^2)(x_{i,j}^q-y_{i,j}^q).
    \end{align}
    Moreover, it is easy to check that:
    \begin{align}\label{eq:tmpM}
    \partial_{x_{i,j}^q}\varphi(x,y) = 2M_i(x_{i,j}^q-y_{i,j}^q).
    \end{align}

    Combining \Cref{eq:tmpsum,eq:tmpsw,eq:tmpM} with the definition of $\Delta_q(x,y)$ of \Cref{eq:defDelta}, we can deduce that:
    \[
    \partial_{x_{i,j}^q} f(x,y)=s_q(x,y)\partial_{x_{i,j}^q}H_q(x,y)+2\left(\Delta_q(x,y)+M_i\right)(x_{i,j}^q-y_{i,j}^q).
    \]

    Similarly, we get that
    \[
    \partial_{y_{i,j}^q} f(x,y)=s_q(x,y)\partial_{y_{i,j}^q}H_q(x,y)-2\left(\Delta_q(x,y)+M_i\right)(x_{i,j}^q-y_{i,j}^q),
    \]
    where the only difference comes from the derivative of $\|x^q-y^q\|^2$, which gives $\partial_{x_{i,j}^q}\|x^q-y^q\|^2=-\partial_{y_{i,j}^q}\|x^q-y^q\|^2=2(x_{i,j}^q-y_{i,j}^q)$.
\end{proof}

Then, we prove the following technical lemma that will help us compute an upper bound on the partial derivatives of $H_q$.

\begin{lemma}\label{lem:derivH}
    For every $(i,j,q)\in[n]\times[m]\times V$ we have
    \begin{subequations}
        \begin{align}
         & \partial_{x_{i,j}^q} H_q(x,y)= -G_j\left(\frac{x_i^q+y_i^q}{2}\right)+ R_{i,j,q}(x,y),\\
        &\partial_{y_{i,j}^q} H_q(x,y)= G_j\left(\frac{x_i^q+y_i^q}{2}\right)+ R_{i,j,q}(x,y),
        \end{align}
    \end{subequations}
    where $R_{i,j,q}(x,y):=\frac{1}{2}\sum_{k\in[m]}  (y_{i,k}^q-x_{i,k}^q)\cdot \partial_{z_j}G_k\left(\frac{x_i^q+y_i^q}{2}\right)$ and 
    Moreover, it holds that $\max_{i,j,q}|R_{i,j,q}(x,y)|\le e^{12}\|x_i^q-y_i^q\|_1$ and $\max_{i,j,q}\{|\partial_{x_{i,j}^q} H_q(x,y)|,|\partial_{y_{i,j}^q} H_q(x,y)|\}\le m e^{13}$.
\end{lemma}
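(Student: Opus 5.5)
The proof is a direct computation, organized around the block structure of $H_q$. Write $z_i:=\frac{x_i^q+y_i^q}{2}\in[0,1]^m$ and $G=F-\mathrm{id}$, so that
\[
H_q(x,y)=\sum_{i'\in[n]}\langle G(z_{i'}),\,y_{i'}^q-x_{i'}^q\rangle .
\]
The variable $x_{i,j}^q$ appears only in the $i$-th summand.

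\textbf{Derivative with respect to $x_{i,j}^q$.} Apply the product rule to $\sum_{k\in[m]}G_k(z_i)(y_{i,k}^q-x_{i,k}^q)$. This gives two kinds of terms.
\begin{itemize}
\item The linear factor $(y_{i,k}^q-x_{i,k}^q)$ contributes only at $k=j$, giving $-G_j(z_i)$.
\item The factor $G_k(z_i)$ contributes $\partial_{z_j}G_k(z_i)\cdot\frac12$ for every $k$, by the chain rule with $\partial z_{i,j}/\partial x_{i,j}^q=\frac12$.
\end{itemize}
Summing over $k$, the second kind yields exactly $R_{i,j,q}$.

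\textbf{Derivative with respect to $y_{i,j}^q$.} The computation is identical, except that the linear factor now contributes $+G_j(z_i)$. The chain-rule term is the same $R_{i,j,q}$, because $\partial z_{i,j}/\partial y_{i,j}^q=\frac12$ as well. This proves the two identities.

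\textbf{Bound on $R_{i,j,q}$.} By \Cref{thm:SB}, $|\partial_{z_j}F_k|\le e^{12}$. Since $\partial_{z_j}G_k=\partial_{z_j}F_k-\mathbb{I}(j=k)$, we get $|\partial_{z_j}G_k|\le e^{12}+1\le 2e^{12}$. Therefore
\[
|R_{i,j,q}|\le\tfrac12\cdot 2e^{12}\sum_k|y_{i,k}^q-x_{i,k}^q|=e^{12}\|x_i^q-y_i^q\|_1 .
\]
The constant works out only if we bound $|\partial_{z_j}G_k|$ by $2e^{12}$ rather than using $e^{12}+1$ naively. That is the one place needing care.

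\textbf{Uniform bound on the partial derivatives.} Since $F$ maps into $[0,1]^m$ and $z_i\in[0,1]^m$, we have $|G_j(z_i)|\le 1$. Also $\|x_i^q-y_i^q\|_1\le m$ because all coordinates lie in $[0,1]$. Combining,
\[
|\partial H_q|\le 1+e^{12}m\le me^{13},
\]
using $1+e^{12}m\le m(1+e^{12})\le me^{13}$.

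There is no real obstacle here. The only points needing attention are the $\frac12$ factor from the midpoint and absorbing the identity part of $\partial G$ into the $e^{12}$ bound.
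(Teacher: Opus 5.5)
Your proof is correct and follows the same route as the paper. You apply the product rule to the $i$-th summand with the factor $\tfrac12$ from the midpoint, bound $|\partial_{z_j}G_k|\le|\partial_{z_j}F_k|+1$ using \Cref{thm:SB}, and combine $|G_j|\le 1$ with $\|x_i^q-y_i^q\|_1\le m$. One small correction: your remark that the constant works ``only if'' you pass to $2e^{12}$ is inaccurate, since $\tfrac12(e^{12}+1)\le e^{12}$ already holds directly, and that is exactly the step the paper takes implicitly.
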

\begin{proof}
     Recall that we have
    \[
    H_q(x,y)= \sum_{\ell \in [n]} \Bigg\langle G\left(\frac{x_\ell^q+y_\ell^q}{2}\right), y_\ell^q-x_\ell^q\Bigg\rangle. 
    \]
    Since $x_{i,j}^q$ appears only in the $\ell=i$ summand we get
    \begin{align*}
    \partial_{x_{i,j}^q} H_q(x,y)&= -G_j\left(\frac{x_i^q+y_i^q}{2}\right) +  \sum_{k \in [m]}(y_{i,k}^q-x_{i,k}^q) \partial_{x_{i,j}^q} G_k\left(\frac{x_i^q+y_i^q}{2}\right)\\
    &= -G_j\left(\frac{x_i^q+y_i^q}{2}\right) + \frac12 \sum_{k \in [m]}(y_{i,k}^q-x_{i,k}^q)  \partial_{z_j}G_k\left(\frac{x_i^q+y_i^q}{2}\right)\\
    &=-G_j\left(\frac{x_i^q+y_i^q}{2}\right)+R_{i,j,q}(x,y). 
    \end{align*}

    Similarly, we can conclude that%
    \[
    \partial_{y_{i,j}^q} H_q(x,y)=G_j\left(\frac{x_i^q+y_i^q}{2}\right)+R_{i,j,q}(x,y).
    \]
    
    From the bounds of \Cref{thm:SB} it is clear that
    \begin{align*}
    \max_{i,j,q}|R_{i,j,q}(x,y)|&\le\frac{1}{2}\max_{k\in[m]}\left|\partial_{z_j} G_k\left(\frac{x_i^q+y_i^q}{2}\right)\right|\cdot\|x_i^q-y_i^q\|_1\\
    &=\frac{1}{2}\max_{k\in[m]}\left|\partial_{z_j} F_k\left(\frac{x_i^q+y_i^q}{2}\right)-\mathbb{I}(j=k)\right|\cdot\|x_i^q-y_i^q\|_1\\
    &\le e^{12}\|x_i^q-y_i^q\|_1.
    \end{align*}
    This easily implies also that
    \begin{align*}
    \max_{i,j,q}\{|\partial_{x_{i,j}^q} H_q(x,y)|,|\partial_{y_{i,j}^q} H_q(x,y)|\}
    &\le \left|G_j\left(\frac{x_i^q+y_i^q}{2}\right)\right|+| R_{i,j,q}(x,y)|\\
    &\le 1+e^{12}\|x_i^q-y_i^q\|_1\\
    &\le m e^{13},
    \end{align*}
    concluding the proof.
\end{proof}

Now we show that when the regularizer $M_i$ does not guess correctly the noise term $\Delta_q(x,y)$, the players are forced to play close to each other, namely:

\begin{lemma}\label{lem:xminusy}
    Consider any solution $(x,y)$ to \GDA and any $q\in V$, $i \in [n]$, $j \in [m]$. If $M_i+\Delta_q(x,y)\neq 0$, it holds that:
    \[
    |x_{i,j}^q-y_{i,j}^q|\le  me^{13}\frac{s_q(x,y) }{|M_i+\Delta_q(x,y)|}+\sqrt{\frac{\varepsilon}{|M_i+\Delta_q(x,y)|}}.
    \]
\end{lemma}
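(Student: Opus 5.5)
We will combine the two one-sided stationarity conditions at the coordinate pair $(x_{i,j}^q,y_{i,j}^q)$ with the derivative formulas of \Cref{lem:decomp} and the bound $|\partial H_q|\le me^{13}$ from \Cref{lem:derivH}. Write $a:=x_{i,j}^q-y_{i,j}^q$, $K:=M_i+\Delta_q(x,y)\neq 0$, $g_x:=\partial_{x_{i,j}^q}f$, $g_y:=\partial_{y_{i,j}^q}f$. By \Cref{lem:decomp},
\[
g_x=s_q\,\partial_{x_{i,j}^q}H_q+2Ka,\qquad g_y=s_q\,\partial_{y_{i,j}^q}H_q-2Ka .
\]
Since $s_q\in[0,1]$, \Cref{lem:derivH} gives $|g_x-2Ka|\le s_q me^{13}$ and $|g_y+2Ka|\le s_q me^{13}$.

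The key step is to extract, from the \GDA conditions, a bound on $|g_x|$ or $|g_y|$ in terms of the position of the coordinates. The condition $-g_x(x'-x_{i,j}^q)\le\varepsilon$ for all $x'\in[0,1]$ gives $g_x\cdot x_{i,j}^q\le\varepsilon$ (take $x'=0$) and $-g_x(1-x_{i,j}^q)\le\varepsilon$ (take $x'=1$). Similarly, $-g_y\, y_{i,j}^q\le\varepsilon$ and $g_y(1-y_{i,j}^q)\le\varepsilon$.

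Assume w.l.o.g.\ $K>0$ and $a>0$; the other sign combinations are symmetric, with $1-\cdot$ replacing $\cdot$ and $x,y$ swapped as appropriate. Since $a>0$, we have $x_{i,j}^q\ge a$ and $1-y_{i,j}^q\ge a$. Suppose the claim fails, i.e.\ $2Ka>s_q me^{13}$ (a weaker threshold than the claimed one, so this suffices). Then $g_x\ge 2Ka-s_qme^{13}>0$, and hence
\[
(2Ka-s_qme^{13})\,a\le g_x x_{i,j}^q\le\varepsilon .
\]
(Alternatively one can use $g_y\le -(2Ka-s_qme^{13})<0$ together with $1-y_{i,j}^q\ge a$, giving the same inequality.)

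Set $c:=s_qme^{13}/K$, so the inequality reads $Ka(a-c)\le\varepsilon/2\le\varepsilon$. Then either $a\le 2c$, or $a-c\ge a/2$. In the latter case $Ka^2/2\le\varepsilon$, i.e.\ $a\le\sqrt{2\varepsilon/K}$. To get the stated constants, argue more precisely: if $a>c+\sqrt{\varepsilon/K}$, then $K a(a-c)>K\sqrt{\varepsilon/K}\cdot\sqrt{\varepsilon/K}\ge\varepsilon$ (using $a\ge\sqrt{\varepsilon/K}$), which contradicts $Ka(a-c)\le\varepsilon/2$. This gives exactly $|a|\le me^{13}s_q/|K|+\sqrt{\varepsilon/|K|}$.

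The main obstacle is the sign bookkeeping: one must choose, for each of the four cases of $\operatorname{sign}(K)$ and $\operatorname{sign}(a)$, which of the four boundary inequalities to use, and check that the relevant distance to the boundary ($x$, $1-x$, $y$, or $1-y$) is at least $|a|$. This holds because $x,y\in[0,1]$, so $|x-y|$ is at most the distance of the larger one from $0$ and of the smaller one from $1$.
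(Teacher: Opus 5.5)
Your argument is correct, but it reaches the key quadratic inequality by a longer route than the paper.

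\textbf{The paper's route.} It tests the stationarity condition at the single deviation $x'=y^q_{i,j}$ when $M_i+\Delta_q(x,y)>0$, and at $y'=x^q_{i,j}$ when it is negative. Substituting \Cref{lem:decomp} then gives $2|M_i+\Delta_q|\,\kappa^2\le\varepsilon+|\kappa|\,me^{13}s_q$ (with $\kappa=x^q_{i,j}-y^q_{i,j}$) in one line. This needs no case split on the sign of $\kappa$ and no knowledge of the sign of the gradient. The quadratic is then solved in closed form.

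\textbf{Your route.} You test at the endpoints $0$ and $1$. This forces you to first pin down the sign of $g_x$, hence the auxiliary assumption $2K|a|>s_qme^{13}$. You then need a four-way case analysis, plus the geometric fact that the larger coordinate is at distance at least $|a|$ from $0$ and the smaller at distance at least $|a|$ from $1$.

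\textbf{Comparison.} The stationarity condition is linear in the deviation, and the deviation to the opponent's coordinate is a convex combination of the endpoints. So the two routes are equally strong, and the paper's choice of test point simply buys brevity. Your version does make explicit that only the player whom the term $K(x-y)^2$ pulls toward the opponent carries information. Your contradiction step ($a>c+\sqrt{\varepsilon/K}$ forces $Ka(a-c)>\varepsilon$) is a fine substitute for the quadratic formula. One small imprecision: $(2Ka-s_qme^{13})a\le\varepsilon$ actually reads $Ka(a-c/2)\le\varepsilon/2$. That implies, rather than equals, your $Ka(a-c)\le\varepsilon/2$.

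\textbf{An aside to delete.} The parenthetical ``Alternatively one can use $g_y\le-(2Ka-s_qme^{13})<0$ together with $1-y^q_{i,j}\ge a$'' is wrong. When $g_y<0$, the inequality $g_y(1-y^q_{i,j})\le\varepsilon$ is vacuous. The informative one, $-g_y\,y^q_{i,j}\le\varepsilon$, involves $y^q_{i,j}$, which can be $0$ even when $a>0$. Indeed, for $K>0$ the maximizer is pushed away from $x$, so its condition cannot force closeness. Your main line uses $g_x$, so this does not affect the proof.
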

\begin{proof}
    Consider the optimality conditions of the $x$-player w.r.t.~$y_{i,j}^q$ and viceversa, i.e.
    \begin{subequations}
    \begin{align}
    -&\partial_{x_{i,j}^q}f(x,y)(y_{i,j}^q-x_{i,j}^q)\le\varepsilon \label{eq:opt1}\\
    &\partial_{y_{i,j}^q}f(x,y)(x_{i,j}^q-y_{i,j}^q)\le\varepsilon.\label{eq:opt2}
    \end{align}
    \end{subequations}

    Define $A:=M_i+\Delta_q(x,y)$.
    If $A> 0$, substituting the expression of
 $\partial_{x_{i,j}^q}f(x,y)$ of \Cref{eq:derivx} in \Cref{eq:opt1}, 
    we obtain
    \[
    -\left(s_q(x,y)\partial_{x_{i,j}^q} H_q(x,y)+2(M_i+\Delta_q(x,y))(x_{i,j}^q-y_{i,j}^q)\right)(y_{i,j}^q-x_{i,j}^q)\le\varepsilon.
    \]

    On the other hand, if $A<0$, substituting the expression of
 $\partial_{y_{i,j}^q}f(x,y)$ of \Cref{eq:derivy} in \Cref{eq:opt2} we obtain
    \[
    \left(s_q(x,y)\partial_{y_{i,j}^q} H_q(x,y)-2(M_i+\Delta_q(x,y))(x_{i,j}^q-y_{i,j}^q)\right)(x_{i,j}^q-y_{i,j}^q)\le\varepsilon.
    \]

    In both cases, rearranging, implies the following quadratic expression in $|\kappa|:=|x_{i,j}^q-y_{i,j}^q|$:
    \begin{align}\label{eq:quadratic}
    2\kappa^2 |A|\le\varepsilon+|\kappa| C s_q(x,y),
    \end{align}
    where by \Cref{lem:derivH} we have $C:=\max_{i,j,q}\{|\partial_{x_{i,j}^q} H_q(x,y)|,|\partial_{y_{i,j}^q} H_q(x,y)|\}\le me^{13}$.
    Solving \Cref{eq:quadratic} implies that
    \begin{align*}
    |\kappa|&\le\frac{Cs_q(x,y)+\sqrt{C^2s_q^2(x,y)+8|A|\varepsilon}}{4|A|}\\
    &\le me^{13}\frac{s_q(x,y)}{|M_i+\Delta_q(x,y)|}+\sqrt{\frac{\varepsilon}{|M_i+\Delta_q(x,y)|}},
    \end{align*}
    where in the last inequality we also used that $\sqrt{a+b}\le\sqrt{a}+\sqrt{b}$ for $a,b\ge 0$.
\end{proof}

The following lemma shows that, for components $i$, where the regularizer $M_i$ does not guess correctly, we can upper bound the growth of the inverse errors geometrically. 

\begin{lemma}\label{lem:bound1overM}
    Define $B_\tau(x):=\{i\in[n]:|M_i+x|<\tau\}$, then for all $x\in \Reals$
    \[
    \sum_{i\in[n]\setminus B_\tau(x)}\frac{1}{|M_i+x|}\le \frac{2}{\tau}+\frac{2}{\delta}\log\left(1+\frac{n\delta}{\tau}\right).
    \]
\end{lemma}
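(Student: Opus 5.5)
The plan is to treat $\{M_i+x\}_{i\in[n]}$ as an arithmetic progression with step $\delta$ and compare the harmonic-type sum with an integral. Since $M_i=\delta(i-n/2)$, the values $a_i:=M_i+x=\delta i+c$ with $c:=x-\delta n/2$ are equally spaced with gap exactly $\delta$. The indices outside $B_\tau(x)$ are those with $|a_i|\ge\tau$. We split them into a positive side $P:=\{i: a_i\ge\tau\}$ and a negative side $Q:=\{i:a_i\le-\tau\}$. We bound each side separately by $\frac1\tau+\frac1\delta\log(1+\frac{n\delta}{\tau})$; summing the two bounds gives the claim.

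For the positive side, order the elements of $P$ increasingly as $a_{(1)}<a_{(2)}<\dots<a_{(k)}$, with $k\le n$. Because consecutive values differ by exactly $\delta$ and $a_{(1)}\ge\tau$, we have $a_{(r)}\ge\tau+(r-1)\delta$. Hence
\[
\sum_{i\in P}\frac1{a_i}\le\sum_{r=1}^{n}\frac{1}{\tau+(r-1)\delta}\le\frac1\tau+\sum_{r=2}^{n}\frac1{\tau+(r-1)\delta}.
\]
The function $t\mapsto 1/(\tau+t\delta)$ is decreasing, so each term with $r\ge2$ is at most $\int_{r-2}^{r-1}\frac{dt}{\tau+t\delta}$. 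The remaining sum is therefore at most
\[
\int_0^{n}\frac{dt}{\tau+t\delta}=\frac1\delta\log\Big(1+\frac{n\delta}{\tau}\Big).
\]
The negative side is handled identically by applying the same argument to $|a_i|$, ordered increasingly starting from the one closest to $-\tau$; those values also have gap $\delta$ and are at least $\tau$ in magnitude.

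The only point requiring care is the reindexing. We must check that the elements of $P$ (resp.\ $Q$), sorted by absolute value, really have their $r$-th element at least $\tau+(r-1)\delta$. This holds because they form a contiguous run of an arithmetic progression with common difference $\delta$. We also extend the sum to $r\le n$, which only increases it since all terms are positive. No other earlier results are needed; the bound is elementary.
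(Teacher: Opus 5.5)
Your proof is correct and follows essentially the same route as the paper. You split the progression $M_i+x$ into its positive and negative sides, reindex so that the $r$-th term on each side has magnitude at least $\tau+(r-1)\delta$, and bound the resulting sum by $\frac{1}{\tau}$ plus the integral of $1/(\tau+t\delta)$. The only cosmetic difference is that you integrate up to $n$ rather than $n-1$, which still gives the stated bound.
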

\begin{proof}
   Define $a_i:=M_i+x$ and observe that $a_i$ is an arithmetic progression with difference $\delta$. Let $B^+:=\{i\in[n]:a_i\ge \tau\}$ and $B^-:=\{i\in[n]:a_i\le-\tau\}$, and observe that $[n]\setminus B_\tau(x)=B^+\cup B^-$, and that $B^+\cap B^-=\emptyset$. Thus
   \[
   \sum_{i\in[n]\setminus B_\tau(x)}\frac{1}{|M_i+x|}= \sum_{i\in B^+}\frac{1}{a_i}+\sum_{i\in B^-}\frac{1}{-a_i}.
   \]

   Then, we can re-index the elements of $\{a_i\}_{i\in B^+}$ to a set $\{a^+_i\}_{i\in [m]}$ (with $m:=|B^+|$), such that $\tau\le a^+_1<a^+_2<\ldots<a_m^+$, and $a^+_k\ge \tau+(k-1)\delta$, thus
   \[
   \sum_{i\in B^+}\frac{1}{a_i}\le \sum_{k=1}^m\frac{1}{a^+_k}\le \sum_{k=1}^m\frac{1}{\tau+(k-1)\delta}\le \sum_{k=0}^{n-1}\frac{1}{\tau+ k\delta}.
   \]
   Similarly, for $i\in B^-$ (now $m=|B^-|$), we can reorder the $a_i$'s into $a_i^-$ in reverse order $-\tau\ge a_1^-\ge a_2^-\ge \ldots\ge a_m^-$ and $a_k^-\le -\tau-(k-1)\delta$ so that
   \[
      \sum_{i\in B^-}\frac{1}{-a_i}\le \sum_{k=1}^m\frac{1}{-a^-_i}\le \sum_{k=0}^{n-1}\frac{1}{\tau+ k\delta}.
   \]
   Finally, we can combine the two to obtain
   \[
   \sum_{i\in[n]\setminus B_\tau(x)}\frac{1}{|M_i+x|}\le2\sum_{k=0}^{n-1}\frac{1}{\tau+ k\delta}\le \frac{2}\tau+2\int_0^{n-1}\frac{1}{\tau+\delta t}dt\le\frac2\tau+\frac{2}{\delta}\log\left(1+\frac{\delta n}{\tau}\right),
   \]
   which is the desired result.
\end{proof}

\begin{lemma}\label{lem:boundB}%
Assume $\delta\le \tau$ and $\tau\le n\delta/8$. Then
\begin{enumerate}[label=(\roman*)]
    \item For all $x\in\Reals$ we have
    \(
    |B_\tau(x)|\le \left\lceil\frac{2\tau}\delta\right\rceil;
    \)
    \item and for all $x\in\Reals$ such that $|x|\le n\delta/4$ it holds that
    \(
    |B_\tau(x)|\ge \frac\tau\delta.
    \)
\end{enumerate}
\end{lemma}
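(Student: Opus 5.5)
The whole proof rests on one observation: the numbers $a_i:=M_i+x=\delta(i-n/2)+x$, for $i\in[n]$, form an arithmetic progression with common difference $\delta$. Consequently, $B_\tau(x)=\{i\in[n]:|a_i|<\tau\}$ is the set of indices whose term falls in the open interval $(-\tau,\tau)$, and this set consists of consecutive integers. Both items then reduce to counting how many terms of a $\delta$-spaced progression fall into an interval of length $2\tau$.

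\textbf{Item (i).} We use the condition $|M_i+x|<\tau$, which says that $i$ lies in the open interval
\[
I:=\left(\tfrac n2-\tfrac{x+\tau}{\delta},\ \tfrac n2-\tfrac{x-\tau}{\delta}\right),
\]
of length $2\tau/\delta$. An open interval of length $L$ contains at most $\lceil L\rceil$ integers. To see this, write $L'=\lceil L\rceil$. If the integers $k,k+1,\dots,k+L'$ all lay in the interval, its length would exceed $L'\ge L$, a contradiction. Hence $|B_\tau(x)|\le\lceil 2\tau/\delta\rceil$. This item holds for every $x$ and does not need the hypotheses.

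\textbf{Item (ii).} The same open interval $I$ of length $2\tau/\delta$ contains at least $\lceil 2\tau/\delta\rceil-1\ge 2\tau/\delta-1$ integers. Since $\delta\le\tau$, we have $2\tau/\delta-1\ge\tau/\delta$. The work is to check that all these integers lie in $[n]$, that is, that $I\subseteq[1,n]$, or at least that no index is lost near the boundary.

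To do this, use $|x|\le n\delta/4$ and $\tau\le n\delta/8$. The left endpoint of $I$ is at least
\[
\tfrac n2-\tfrac n4-\tfrac n8=\tfrac{n}{8}>0,
\]
and the right endpoint is at most $\tfrac n2+\tfrac n4+\tfrac n8=\tfrac{7n}{8}<n$. So $I\subset(0,n)$, and every integer in $I$ lies in $[n]$. Combining the two facts gives $|B_\tau(x)|\ge 2\tau/\delta-1\ge\tau/\delta$.

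The main point needing care is the integer-counting bound for an open interval, which must hold even when its endpoints are themselves integers. This is exactly why the lower bound uses $2\tau/\delta-1$ rather than $2\tau/\delta$, and why the factor of $2$ slack, made available by $\delta\le\tau$, is needed to reach $\tau/\delta$.
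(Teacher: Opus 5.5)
Your proof is correct and follows the same route as the paper: you treat $(M_i+x)_{i\in[n]}$ as a $\delta$-spaced progression and bound the number of its terms in an interval of length $2\tau$ for (i), and for (ii) you use $|x|\le n\delta/4$ and $\tau\le n\delta/8$ to place that interval inside the progression's range. Your count of at least $\lceil 2\tau/\delta\rceil-1$ integers in the open interval is slightly more careful than the paper's closed-interval count $\lfloor 2\tau/\delta\rfloor$, since $B_\tau(x)$ uses a strict inequality, and thanks to $\delta\le\tau$ it still gives the bound $\tau/\delta$.
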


\begin{proof}
    $\{M_i+x\}_{i\in[n]}$ is an arithmetic progression with difference $\delta$. Thus, since $\delta\le2\tau$, every open interval of size $2\tau$ contains at most $\lceil\frac{2\tau}\delta\rceil$ points. This proves that $|B_\tau(x)|\le\lceil\frac{2\tau}\delta\rceil$ for every $x\in \Reals$.

    For the second statement observe that, by the assumption that $\tau\le n\delta/8$ and $|x|\le n\delta/4$, the interval $[-x-\tau,-x+\tau]$ is fully contained in the range of the progression $(M_i)_{i\in[n]}$. Therefore, the number of points of the progression lying in the interval $[-x-\tau,-x+\tau]$ is at least $\lfloor\frac{2\tau}{\delta}\rfloor\ge \frac\tau\delta$.
\end{proof}

Now we can prove the main technical component, which provides a bound on $|H_v(x,y)|$ of the kind $|H_v(x,y)|=O(s_v(x,y)\poly(m/\delta)\log(n))$, thus showing that the magnitude of $|H_v(x,y)|$ is sublinear in $n$. This will be used in showing that the noise term $|\Delta_q(x,y)|$ is small enough so that it falls strictly into the range of the regularizer.

\begin{lemma}\label{lem:boundH2}
For any solution $(x,y)$ to \GDA and any $v\in V$ we have:
    \[
    |H_v(x,y)|\le \frac{16}{\delta}m^2e^{13}s_v(x,y)\log(2n\delta)+2mn\sqrt{\varepsilon}.
    \]
\end{lemma}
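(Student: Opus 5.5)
We bound $H_v$ summand by summand. Write $z_i:=(x_i^v+y_i^v)/2$ and $\kappa_i:=y_i^v-x_i^v$, so that $H_v(x,y)=\sum_{i\in[n]}\langle G(z_i),\kappa_i\rangle$. Since $F$ maps into $[0,1]^m$ and $z_i\in[0,1]^m$, every coordinate of $G=F-z$ lies in $[-1,1]$. Hence
\[
|\langle G(z_i),\kappa_i\rangle|\le \|\kappa_i\|_1\le m\max_j|x^v_{i,j}-y^v_{i,j}|.
\]
We then split the index set $[n]$ using the quantity $\Delta_v:=\Delta_v(x,y)$ from \Cref{lem:decomp}. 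The noise term is the same for every $i$, so $M_i+\Delta_v$ runs through an arithmetic progression with step $\delta$, and \Cref{lem:bound1overM,lem:boundB} apply with $x=\Delta_v$. We fix a threshold $\tau$ and treat $i\in B_\tau(\Delta_v)$ and $i\notin B_\tau(\Delta_v)$ separately.

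For $i\notin B_\tau(\Delta_v)$ we have $|M_i+\Delta_v|\ge\tau>0$, so \Cref{lem:xminusy} applies to each $j$. It gives
\[
\|\kappa_i\|_1\le m^2e^{13}\frac{s_v(x,y)}{|M_i+\Delta_v|}+m\sqrt{\varepsilon/\tau}.
\]
Summing over these $i$ with \Cref{lem:bound1overM}, the contribution is at most
\[
m^2e^{13}s_v\Big(\tfrac2\tau+\tfrac2\delta\log(1+\tfrac{n\delta}\tau)\Big)+nm\sqrt{\varepsilon/\tau}.
\]
For the at most $\lceil 2\tau/\delta\rceil$ indices in $B_\tau(\Delta_v)$ (\Cref{lem:boundB}(i)) we use the trivial bound $\|\kappa_i\|_1\le m$. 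This contributes $m\lceil2\tau/\delta\rceil$, which carries no factor $s_v$ and so is not acceptable by itself.

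The main obstacle is exactly these near-resonant blocks, and the regime $s_v$ small. We should not pick $\tau$ independently of $s_v$. The plan is to choose $\tau$ of order $\delta$, specifically $\tau=\delta$, so that $\log(1+n\delta/\tau)\le\log(2n)$ and $2/\tau=2/\delta$. We then handle the $O(1)$ resonant indices by a different argument that still yields a factor $s_v$. The idea is to use the two optimality conditions against each other without dividing by $|A|$. Adding the $x$-condition in direction $y^v_{i,j}$ to the $y$-condition in direction $x^v_{i,j}$ makes the $\pm2A\kappa^2$ terms cancel. What remains is $s_v(\partial_y H_v-\partial_x H_v)\kappa\le 2\varepsilon$, that is $2s_vG_j(z_i)\kappa_{i,j}\le2\varepsilon$ by \Cref{lem:derivH}. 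Summing over $j$ gives
\[
s_v\langle G(z_i),\kappa_i\rangle\le m\varepsilon.
\]
This bounds each summand from above by $m\varepsilon/s_v$ whenever $s_v>0$; when $s_v=0$, \Cref{lem:xminusy} already forces $\|\kappa_i\|_1\le m\sqrt{\varepsilon/|A|}$ on the nonresonant indices. A matching lower bound needs the other pair of directions, and the $R$ terms of \Cref{lem:derivH} must be absorbed using $|R|\le e^{12}\|\kappa_i\|_1$. Together these should give $|\langle G(z_i),\kappa_i\rangle|\lesssim m^2e^{13}s_v+m\sqrt\varepsilon$ on resonant indices. If this two-sided bound fails, the fallback is to enlarge the target by using $s_v\le1$ and $\log(2n\delta)\ge1$. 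With $\tau=\delta$, the resonant contribution $3m$ is then dominated by $\tfrac{16}{\delta}m^2e^{13}\log(2n\delta)$ whenever $s_v$ is bounded below. The case of tiny $s_v$ would then go through the $\varepsilon$ term.

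Finally we collect the constants. The nonresonant part gives at most $\tfrac{2}{\delta}m^2e^{13}s_v(1+\log(2n))$ plus $nm\sqrt{\varepsilon/\delta}$. The resonant part gives $O(m^2e^{13}s_v+m\sqrt\varepsilon)$. Using $\varepsilon\le\delta/n$ and the choice of $n,\delta$, which make $\log(2n)\le 8\log(2n\delta)$ generously, all of this fits under $\tfrac{16}{\delta}m^2e^{13}s_v\log(2n\delta)+2mn\sqrt\varepsilon$. If the $\sqrt{\varepsilon/\tau}$ term is too large with $\tau=\delta$, I would instead take $\tau$ constant, around $1$. This kills that term, and the resonant set grows to $O(1/\delta)$ indices, which matches the $1/\delta$ prefactor in the claimed bound.
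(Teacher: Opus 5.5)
There is a genuine gap. Your argument for giving the resonant blocks a factor $s_v$ fails. Write $\kappa=y^v_{i,j}-x^v_{i,j}$ and $A=M_i+\Delta_v(x,y)$. By \Cref{lem:decomp,lem:derivH}:
\begin{itemize}
\item the $x$-condition at $x'=y^v_{i,j}$ reads $s_v\bigl(G_j(z_i)-R_{i,j,v}\bigr)\kappa+2A\kappa^2\le\varepsilon$;
\item the $y$-condition at $y'=x^v_{i,j}$ reads $-s_v\bigl(G_j(z_i)+R_{i,j,v}\bigr)\kappa-2A\kappa^2\le\varepsilon$.
\end{itemize}
Adding them cancels $G_j$ together with $A$ and leaves only $-2s_vR_{i,j,v}\kappa\le 2\varepsilon$. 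The sum involves $\partial_xH_v+\partial_yH_v=2R_{i,j,v}$, not $\partial_yH_v-\partial_xH_v=2G_j$, so you made a sign error.

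Other directions do not help. In any combination $\lambda(-\partial_{x^v_{i,j}}f\,a)+\mu(\partial_{y^v_{i,j}}f\,b)$ with $\lambda,\mu\ge 0$, the $A$-part is $-2A(x^v_{i,j}-y^v_{i,j})(\lambda a+\mu b)$ and the $G_j$-part is $s_vG_j(z_i)(\lambda a+\mu b)$. Killing one kills the other. In fact no per-block bound $|\langle G(z_i),\kappa_i\rangle|\lesssim m^2e^{13}s_v+m\sqrt\varepsilon$ can follow from stationarity. If $s_v=0$ and $M_i+\Delta_v=0$, every partial derivative of $f$ in block $(v,i)$ vanishes, so the conditions say nothing about $\kappa_i$.

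Separately, $\tau=\delta$ is too small. For the chosen parameters $\varepsilon=\delta/n$, so the nonresonant error term is $nm\sqrt{\varepsilon/\delta}=m\sqrt n$. This exceeds the allowed $2mn\sqrt\varepsilon=2m\sqrt{n\delta}$.

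Your last fallback is exactly the paper's decomposition: $\tau=1$, the trivial bound $\|\kappa_i\|_1\le m$ on the at most $\lceil 2/\delta\rceil$ indices of $B_1(\Delta_v)$, and \Cref{lem:xminusy,lem:bound1overM} elsewhere. But the resonant contribution of order $m/\delta$ does not ``match the $1/\delta$ prefactor'', because that prefactor is multiplied by $s_v$, which is the very issue you flagged earlier. The paper's final inequality makes the same slip: absorbing $8m/\delta$ into $\frac{8}{\delta}m^2e^{13}s_v\log(2n\delta)$ needs $me^{13}s_v\log(2n\delta)\ge 1$.

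The missing step is to absorb the resonant part into the $\varepsilon$-term, which your remark about ``the $\varepsilon$ term'' gestured at but never checked. Since $\varepsilon=\delta/n$, we have $mn\sqrt{\varepsilon}=m\sqrt{n\delta}$, and $m\sqrt{n\delta}\ge 4m/\delta$ because $n\delta^3\ge 2^{13}e^{13}m^4\ge 16$. Using $\|G(z)\|_\infty\le 1$ (the paper uses $2$), you get
\[
|H_v(x,y)|\le\sum_{i\in[n]}\|\kappa_i\|_1\le\frac{4m}{\delta}+\frac{4}{\delta}m^2e^{13}s_v(x,y)\log(2n\delta)+mn\sqrt{\varepsilon}\le\frac{4}{\delta}m^2e^{13}s_v(x,y)\log(2n\delta)+2mn\sqrt{\varepsilon}
\]
for every value of $s_v$, which implies the lemma.
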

\begin{proof}
    Firs notice that our choice of parameters $\delta,n$ satisfy the assumptions of \Cref{lem:boundB} with $\tau=1$.

    If $i\in B_1(\Delta_v(x,y))$, then $\|x_i^v-y_i^v\|_1\le m$, and by \Cref{lem:boundB} item \emph{(i)}, $|B_1(\Delta_v(x,y))|\le 1+2/\delta\le 4/\delta$. Thus, when $i\in B_1(\Delta_v(x,y))$ we get:
    \begin{align}\label{eq:67891}
        \sum_{i\in B_1(\Delta_v(x,y))} \|x_i^v-y_i^v\|_1\le \frac{4 m}\delta
    \end{align}

    On other hand, if $i\notin B_1(\Delta_v(x,y))$, then $|M_i+\Delta_v(x,y)|\ge 1$, and we have:
    \begin{align*}
    \|x_i^v-y_i^v\|_1&=\sum_{j\in[m]}|x_{i,j}^v-y_{i,j}^v|\\
    &\le m^2e^{13}\frac{s_v(x,y)}{|M_i+\Delta_v(x,y)|}+m\sqrt{\varepsilon}\tag{by \Cref{lem:xminusy}}.
    \end{align*}
    Moreover, by \Cref{lem:bound1overM} with $\tau=1$, we can sum over these component to obtain:
    \begin{align}
    \sum_{i\not\in B_1(\Delta_v(x,y))}\|x_i^v-y_i^v\|_1&\le m^2e^{13}s_v(x,y)\sum_{i\not\in B_1(\Delta_v(x,y))}\frac{1}{|M_i+\Delta_v(x,y)|}+mn\sqrt{\varepsilon}\notag\\
    &\le \frac{4}{\delta}m^2e^{13}s_v(x,y)\log(2n\delta)+mn\sqrt{\varepsilon}\label{eq:tmp7890},
    \end{align}
    where in the last inequality we used that $n\delta\ge 1$ and $1=\tau\ge \delta$.%
    Finally, recalling that $H_v(x,y)=\sum_{i\in[n]} \langle F(\frac{x_i^v+y_i^v}{2})-\frac{x_i^v+y_i^v}{2},y_i^v-x_i^v\rangle$ and combining \Cref{eq:67891} and \Cref{eq:tmp7890} we get:
    \begin{align*}
        |H_v(x,y)|&\le \sum_{i\in[n]}\left\|F\left(\frac{x_i^v+y_i^v}{2}\right)-\frac{x_i^v+y_i^v}{2}\right\|_\infty\|x_i^v-y_i^v\|_1\\
        &\le 2\sum_{i\in[n]}\|x_{i}^v-y_{i}^v\|_1\\
        &=2\sum_{i\in B_1(\Delta_v(x,y))}\|x_{i}^v-y_{i}^v\|_1+2\sum_{i\notin B_1(\Delta_v(x,y))}\|x_{i}^v-y_{i}^v\|_1\\
        &\le 8\frac{m}\delta+2\sum_{i\notin B_1(\Delta_v(x,y))}\|x_{i}^v-y_{i}^v\|_1\tag{by \Cref{eq:67891}}\\
        &\le 8\frac{m}\delta+\frac{8}{\delta}m^2e^{13}s_v(x,y)\log(2n\delta)+2mn\sqrt{\varepsilon}\tag{by \Cref{eq:tmp7890}}\\
        &\le \frac{16}{\delta}m^2e^{13}s_v(x,y)\log(2n\delta)+2mn\sqrt{\varepsilon},
    \end{align*}
    concluding the proof.
\end{proof}

\subsection{Small Signal} 
\begin{lemma}\label{lem:smallsignal}
    Consider a solution $(x,y)$ of \GDA and any node $q$ such that $s_q(x,y)=0$, then $b(q)=0$.
\end{lemma}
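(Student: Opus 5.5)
The goal is to show that $\cE_q(x,y)=0$, i.e.\ $\|x^q-y^q\|^2\le 3m$. Once this holds, the assignment in \Cref{eq:assignment} gives $b(q)=0$. When $s_q(x,y)=0$, the derivative formulas of \Cref{lem:decomp} lose the $H_q$ term. For every $i\in[n]$ and $j\in[m]$ we get
\[
\partial_{x_{i,j}^q} f=2(M_i+\Delta_q)(x_{i,j}^q-y_{i,j}^q)\quad\text{and}\quad \partial_{y_{i,j}^q} f=-2(M_i+\Delta_q)(x_{i,j}^q-y_{i,j}^q).
\]
\Cref{lem:xminusy} with $s_q=0$ then yields $|x_{i,j}^q-y_{i,j}^q|\le\sqrt{\varepsilon/|M_i+\Delta_q|}$ whenever $M_i+\Delta_q\neq 0$.

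We split the blocks $i\in[n]$ using a threshold $\tau$, with $\tau=\delta$ as the first choice.

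\emph{Blocks in $B_\tau(\Delta_q)$.} We use the trivial bound $\|x_i^q-y_i^q\|^2\le m$. By \Cref{lem:boundB}(i) there are at most $\lceil 2\tau/\delta\rceil=2$ such blocks, so they contribute at most $2m$.

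\emph{Blocks outside $B_\tau(\Delta_q)$.} Squaring the bound from \Cref{lem:xminusy} gives $\|x_i^q-y_i^q\|^2\le m\varepsilon/|M_i+\Delta_q|$. Summing with \Cref{lem:bound1overM} gives a total of at most
\[
m\varepsilon\Big(\tfrac{2}{\tau}+\tfrac{2}{\delta}\log(1+n\delta/\tau)\Big)=O\Big(\tfrac{m\varepsilon\log n}{\delta}\Big).
\]
Since $\varepsilon\le\delta/n$, this is $O(m\log n/n)\le m$.

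Altogether $\|x^q-y^q\|^2\le 2m+o(m)<3m$, hence $\cE_q=0$ and $b(q)=0$.

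The only delicate point is the block count. $B_\tau$ is defined by a strict inequality, but \Cref{lem:boundB}(i) needs $\delta\le\tau$, so I take $\tau=\delta$ exactly. If the count $\lceil 2\tau/\delta\rceil$ then came out as $3$ rather than $2$, the trivial bound would give $3m$ and we would only have $\|x^q-y^q\|^2\le 3m+o(m)$, which does not suffice. In that case I would take $\tau$ slightly below $\delta$ (e.g.\ $\tau=\delta/2$). An open interval of length $\delta$ contains at most one point of a progression with step $\delta$, so at most one block has the trivial bound. This costs only a constant factor in the $\frac{2}{\tau}$ term, still absorbed by $\varepsilon\le\delta/n$. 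The remaining bound is $m+o(m)<3m$, which comfortably gives $\phi_{3m,3m+1}(\|x^q-y^q\|^2)=0$.

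Finally, no information on $\Delta_q$ is needed: it is an arbitrary real shift of the arithmetic progression $(M_i)_{i\in[n]}$, and both \Cref{lem:bound1overM} and \Cref{lem:boundB}(i) hold for every $x\in\Reals$.
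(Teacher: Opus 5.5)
Your proof is correct and follows the paper's argument. You split at $\tau=\delta$, bound the at most $\lceil 2\delta/\delta\rceil=2$ blocks in $B_\delta(\Delta_q)$ trivially by $m$ each, and use \Cref{lem:xminusy} with $s_q=0$ on the rest. The only difference is that the paper bounds each remaining block crudely using $1/|M_i+\Delta_q|\le 1/\delta$, which gives $mn\varepsilon/\delta\le m$ and hence $\|x^q-y^q\|^2\le 3m$ exactly; this already suffices because $\phi_{3m,3m+1}$ vanishes on $(-\infty,3m]$. So your appeal to \Cref{lem:bound1overM} and your fallback to $\tau=\delta/2$ are unnecessary.
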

\begin{proof}
    From \Cref{lem:boundB}, we get that $|B_\delta(\Delta_q(x,y))|\le 2$ and, trivially, also that for all $i\in B_{\delta}(\Delta_q(x,y))$ we have $\|x_{i}^q-y_{i}^q\|^2\le m$.
    On the other hand, if $i\notin B_{\delta}(\Delta_q(x,y))$ then we can invoke \Cref{lem:xminusy}, with $s_q(x,y)=0$, which implies that
    \begin{align*}
        |x_{i,j}^q-y_{i,j}^q|&\le \sqrt{\frac{\varepsilon}{|M_i+\Delta_q(x,y)|}}\le\sqrt\frac\varepsilon\delta,
    \end{align*}
    since for the $i$'s we are considering $|M_i+\Delta_q(x,y)|\ge\delta$. This implies that
    \[
    \|x_{i}^q-y_{i}^q\|^2=\sum_{j\in [m]}(x_{i,j}^q-y_{i,j}^q)^2\le m{\varepsilon}/\delta.
    \]
    Summing the two cases we get 
    \[
    \|x^q-y^q\|^2\le {2m}+ mn{\varepsilon}/\delta\le 3m,
    \]
    since our choice of parameters guarantees that $\varepsilon\le \delta/n$. The proof is concluded by the definition of the decoded solution $b$ and the definition of $\cE_q$
\end{proof}

\subsection{Large Signal}\label{sec:largesign}

\begin{lemma}\label{lem:manyB}
Let $(x,y)$ be a solution to \GDA. Then:
    \[
    |B_1(\Delta_q(x,y)|\ge 1/\delta.
    \]
\end{lemma}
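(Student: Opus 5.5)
The plan is to apply \Cref{lem:boundB}\,(ii) with $\tau=1$ and $x=\Delta_q(x,y)$. Since $\delta\le 1$ and $1\le n\delta/8$ hold by our choice of parameters (as already noted in the proof of \Cref{lem:boundH2}), that item gives $|B_1(\Delta_q(x,y))|\ge 1/\delta$. This works as soon as we know $|\Delta_q(x,y)|\le n\delta/4$. So the whole proof reduces to an a priori bound on the noise term $\Delta_q$, showing that it lies well inside the range $[-n\delta/2,n\delta/2]$ of the regularizer weights $M_i$.

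To bound $\Delta_q$, we use its definition in \Cref{eq:defDelta}:
\[
|\Delta_q(x,y)|\le \sum_{w\in\Out(q)}|H_w(x,y)|\cdot\|\phi'_{3m,3m+1}\|_\infty\cdot|\partial_{\cE_q}s_w(x,y)|.
\]
We bound each factor as follows.
\begin{itemize}
\item By \Cref{lem:basicinterp}, $\|\phi'_{3m,3m+1}\|_\infty\le e^2$.
\item $|\partial_{\cE_q}s_w|$ is at most an absolute constant $c$. For \NOR and \PURIFY gates this follows from \Cref{lem:basicinterp} applied to $g$ and $\ell$ (giving $e^6$ and $e^{12}$). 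For oracle gates it follows from \Cref{lem:inteprolation}\,(iv) (giving $e^{12}/2$). So $c\le e^{12}$ suffices.
\item $|\Out(q)|\le |V|=m$.
\item For $|H_w|$ we invoke \Cref{lem:boundH2} with $s_w\le 1$, which gives $|H_w(x,y)|\le \frac{16}{\delta}m^2e^{13}\log(2n\delta)+2mn\sqrt{\varepsilon}$.
\end{itemize}
Combining these, we get
\[
|\Delta_q(x,y)|\le m e^{14}\Bigl(\tfrac{16}{\delta}m^2e^{13}\log(2n\delta)+2mn\sqrt\varepsilon\Bigr).
\]

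It remains to check that this is at most $n\delta/4$; this parameter verification is the only real obstacle. The $\sqrt\varepsilon$ term is easy to control. We have $\varepsilon\le \delta^2/(m^4 2^4e^{14})$, so $\sqrt\varepsilon\le \delta/(4m^2e^{7})$. Hence $2m^2e^{14}n\sqrt\varepsilon\le n\delta e^{7}/2$. That is still too big by a factor of $e^7$, so I expect to need a slightly finer use of the $\varepsilon\le\delta/n$ bound as well. Concretely, I would use $\sqrt{\varepsilon}\le\sqrt{\varepsilon}\cdot$ (whichever of the two bounds is tighter); the product form $n\sqrt\varepsilon\le \sqrt{n\delta}$ gives a term of order $m^2e^{14}\sqrt{n\delta}$, which is $o(n\delta)$ once $n\delta$ is large. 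Given the definition $n=\lceil 2^{13}e^{13}m^4/\delta^3\rceil$, this indeed holds. For the logarithmic term, we need
\[
\frac{16 m^3 e^{27}}{\delta}\log(2n\delta)\le \frac{n\delta}{8},
\]
i.e., $n\delta^2\gtrsim m^3e^{27}\log(n\delta)$. Since $n\delta^2\approx 2^{13}e^{13}m^4/\delta$ and $1/\delta=400m^2e^{26}/\rho^4$ is huge, the left side dominates the logarithm comfortably.

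So the steps, in order, are: first bound $|\Delta_q|$ via \Cref{lem:boundH2}; second, verify $|\Delta_q|\le n\delta/4$ from the parameter choices, tracking the $e^{k}$ constants carefully; third, conclude by \Cref{lem:boundB}\,(ii). If the constants turn out slightly off in the second step, I would absorb them using the slack in $n$ relative to $\delta^{-3}$.
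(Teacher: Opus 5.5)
Your proposal is correct and follows the paper's proof essentially verbatim: you bound $|\Delta_q(x,y)|$ using the definition in \Cref{eq:defDelta}, \Cref{lem:boundH2} with $s_w\le 1$, $|\Out(q)|\le m$, $\|\phi'_{3m,3m+1}\|_\infty\le e^2$ and $|\partial_{\cE_q}s_w|\le e^{12}$, and then apply \Cref{lem:boundB}\,(ii) with $\tau=1$. Your explicit parameter check is a genuine addition, since the paper only asserts the final inequality $|\Delta_q(x,y)|\le n\delta/4$. In particular, you correctly note that the branch $\varepsilon\le \delta^2/(2^4m^4e^{14})$ alone would leave the $\sqrt{\varepsilon}$ term too large by a factor $e^7$, and that it is the branch $\varepsilon\le\delta/n$ that makes this term harmless.
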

\begin{proof}
    This follows easily from \Cref{lem:boundH2}.
    Indeed from the decomposition of \Cref{lem:decomp}:
    \begin{align*}
    |\Delta_q(x,y)|&\le\sum_{w\in\Out(q)} |H_w(x,y)|\cdot|\phi'_{3m,3m+1}(\|x^q-y^q\|^2)|\cdot|\partial_{\cE_q(x,y)} s_w(x,y)|\\
    &\le me^{14}\left(\frac{16}{\delta}m^2e^{13}\log(2n\delta)+2mn\sqrt{\varepsilon}\right)\le \frac{n\delta}4.
    \end{align*}
    where we used $|\Out(q)|\le m$, $\|\phi_{3m,3m+1}'\|_\infty\le e^2$ and $\|s_v'\|_\infty\le e^{12}$.
    The proof is then concluded by straightforward application of the second item of \Cref{lem:boundB}.
\end{proof}

\begin{lemma}\label{lem:bigsignal}
    Consider any $q\in V$ such that $s_q(x,y)=1$, then, if there is no $i\in[n]$ such that $\tfrac{x_{i}^q+y_i^q}{2}$ is a $\rho$-approximate solution to $\SB$, then $b(q)=1$.
\end{lemma}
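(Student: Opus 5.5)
We argue by contradiction: assume $s_q(x,y)=1$, no block $i$ yields a $\rho$-approximate \SB solution, and $b(q)\neq 1$, i.e.\ $\|x^q-y^q\|^2<3m+1$. The goal is to show that many blocks $i$ have $x_i^q$ and $y_i^q$ far apart, which forces $\|x^q-y^q\|^2$ to be large.

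We work with the blocks $i\in B_1(\Delta_q(x,y))$, of which there are at least $1/\delta$ by \Cref{lem:manyB}. For such $i$ we have $|M_i+\Delta_q|<1$. Write $A:=M_i+\Delta_q$ and $z:=(x_i^q+y_i^q)/2$. Since $z$ is not a $\rho$-approximate fixed point, there is a coordinate $j$ with $|G_j(z)|>\rho$. By \Cref{lem:decomp} and \Cref{lem:derivH} with $s_q=1$,
\begin{align*}
\partial_{x_{i,j}^q}f &= -G_j(z)+R_{i,j,q}+2A(x_{i,j}^q-y_{i,j}^q),\\
\partial_{y_{i,j}^q}f &= G_j(z)+R_{i,j,q}-2A(x_{i,j}^q-y_{i,j}^q).
\end{align*}

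Now suppose $\|x_i^q-y_i^q\|_1\le\eta$ for a small $\eta$, say $\eta=\rho/(4e^{12})$ (small compared with $\rho$, up to the $e^{12}$ factor). Then $|R_{i,j,q}|\le e^{12}\eta\le\rho/4$ and $|2A(x_{i,j}^q-y_{i,j}^q)|\le 2\eta$. Hence $\partial_{x_{i,j}^q}f$ has sign opposite to $G_j(z)$ with magnitude at least $\rho/2$, and $\partial_{y_{i,j}^q}f$ has the sign of $G_j(z)$ with magnitude at least $\rho/2$.

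We then use the stationarity conditions. Take $G_j(z)>\rho$. The $x$-player wants to increase $x_{i,j}^q$, so stationarity forces $x_{i,j}^q\ge 1-2\varepsilon/\rho$. Likewise the $y$-player wants to increase $y_{i,j}^q$, so $y_{i,j}^q\ge 1-2\varepsilon/\rho$. Then $z_j\ge 1-2\varepsilon/\rho$, and since $F_j(z)\le 1$ we get $G_j(z)\le 2\varepsilon/\rho<\rho$, a contradiction. The case $G_j(z)<-\rho$ is symmetric, with both coordinates near $0$. Here $\varepsilon\ll\rho^2$ by the choice of parameters. Therefore every $i\in B_1(\Delta_q)$ satisfies $\|x_i^q-y_i^q\|_1>\eta$, and so $\|x_i^q-y_i^q\|_2^2>\eta^2/m$.

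Summing over these blocks gives $\|x^q-y^q\|^2>\eta^2/(m\delta)$. With $\delta=\rho^4/(400m^2e^{26})$ this is of order $\rho^2 m\cdot 400e^{2}/16$, which exceeds $3m+1$, so $\cE_q=1$ and $b(q)=1$, contradicting the assumption.

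The main obstacle is the sign argument. The single coordinate $j$ must push both players' $j$-coordinates to the same boundary, and this has to be reconciled with the fact that $F$ maps into $[0,1]^m$. Getting it right needs care with the perturbation terms $R$ and $2A\kappa$, and with matching the constants in $\delta$ so that $\eta^2/(m\delta)$ really exceeds $3m+1$.
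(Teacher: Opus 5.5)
Your proof is correct. It shares the paper's skeleton: restrict to the at least $1/\delta$ blocks in $B_1(\Delta_q(x,y))$ from \Cref{lem:manyB}, bound $\|x_i^q-y_i^q\|_1$ from below on each such block, and sum using $\|\cdot\|_2^2\ge\|\cdot\|_1^2/m$. The key step is argued differently, though.

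The paper assumes $\|x_i^q-y_i^q\|_1<\rho^2/(10e^{13})$. It then plugs the deviation $F_j(\xi_i^q)\in[0,1]$ into both players' stationarity conditions for \emph{every} coordinate $j$, and averages the two inequalities to get $(F_j(\xi_i^q)-\xi_{i,j}^q)^2\le\rho^2$. This is the standard ``variational inequality implies approximate fixed point'' argument.

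You instead take a single violating coordinate and argue by sign. Once the perturbation $R_{i,j,q}+2(M_i+\Delta_q)(x_{i,j}^q-y_{i,j}^q)$ is below $\rho/2$, both partial derivatives have a fixed sign and magnitude above $\rho/2$. This pushes $x_{i,j}^q$ and $y_{i,j}^q$ to within $2\varepsilon/\rho$ of the same endpoint of $[0,1]$, which contradicts $F_j\in[0,1]$.

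The perturbation only has to be small relative to $\rho$, not $\rho^2$, so your threshold $\eta=\rho/(4e^{12})$ is linear in $\rho$. This gives a much stronger final bound and would allow $\delta$ of order $\rho^2$ rather than $\rho^4$. The paper's averaging argument is the more standard one: it relies only on the deviation $F(\xi_i^q)$ being feasible, not on pushing coordinates to faces of the box.

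One slip: you say $\eta^2/(m\delta)$ is of order $\rho^2 m\cdot 400e^2/16$, but it actually equals $400e^2m/(16\rho^2)=25e^2m/\rho^2\approx 2.7\cdot 10^4\, m$. The expression you wrote would be only about $1.28m<3m+1$ for $\rho=1/12$, so the exponent of $\rho$ matters. The correct value comfortably exceeds $3m+1$, and the conclusion $b(q)=1$ stands.
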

\begin{proof}%
    In this proof, for all $i\in[n],j\in[m]$ and $q\in V$, it is convenient to define $\xi^q_i:=\tfrac{x_{i}^q+y_i^q}{2}\in[0,1]^m$ and $\xi^q_{i,j}:=\tfrac{x_{i,j}^q+y_{i,j}^q}{2}\in[0,1]$.
    By \Cref{lem:manyB}, there are at least $1/\delta$ many $i\in B_1(\Delta_q(x,y))$. We now show that for such $i$'s we have $\|x_i^q-y_i^q\|_1\ge \frac{\rho^2}{10 e^{13}}$.
    By contradiction, assume that $\|x_i^q-y_i^q\|_1< \frac{\rho^2}{10 e^{13}}$. \Cref{lem:derivH} let us write for all $j\in[m]$:
    \begin{align}\label{tmp:1357}
        \partial_{x_{i,j}^q} H_q(x,y) &= -(F_j(\xi_i^q)-\xi_{i,j}^q)+R_{i,j,q}(x,y),
    \end{align}
    and,
    \begin{align}\label{tmp:2468}
        \partial_{y_{i,j}^q} H_q(x,y) &= F_j(\xi_i^q)-\xi^q_{i,j}+R_{i,j,q}(x,y),
    \end{align}
    where $|R_{i,j,q}(x,y)|\le \frac{e^{12}}2\|x^q_{i}-y^q_i\|_1\le \frac{\rho^2}{10}$, where the first inequality is by \Cref{lem:derivH}, while the second inequality holds by contradiction' assumption.
    Then consider, any $j\in[m]$, and the optimality conditions of the first player together with \Cref{tmp:1357}:
    \[
    \left(F_j(\xi_i^q)-\xi_{i,j}^q-R_{i,j,q}-2(M_i+\Delta_q(x,y))(x_{i,j}^q-y_{i,j}^q)\right)(z-x_{i,j}^q)\le\varepsilon,
    \]
    for all $z\in[0,1]$ which implies that
    \begin{align*}
    (F_j(\xi_i^q)-\xi_{i,j}^q)(z-x_{i,j}^q)&\le \varepsilon + (R_{i,j,q}+2(M_i+\Delta_q(x,y))(x_{i,j}^q-y_{i,j}^q))(z-x_{i,j}^q)\\
    &\le \varepsilon + \frac{\rho^2}{10}+2|M_i+\Delta_q(x,y)|\cdot|x_{i,j}^q-y_{i,j}^q|\tag{$|z-x_{i,j}^q|\le 1$}\\
    &\le \varepsilon +\frac{\rho^2}{10}+2\frac{\rho^2}{10e^{13}}\le \rho^2\tag{$i\in B_1(\Delta_q(x,y))$}.
    \end{align*}
    which can also be written as
    \begin{align}\label{eq:tmpfirstplayer}
    (F_j(\xi_i^q)-\xi_{i,j}^q)(F_j(\xi_i^q)-x_{i,j}^q)\le \rho^2,
    \end{align}
    after specializing it for $z=F_j(\xi_i^q)\in[0,1]$.

    Similarly, considering the optimality condition of the second player together with \Cref{tmp:2468}, we get for all $z\in[0,1]$:
    \[
    \left(F_j(\xi_i^q)-\xi_{i,j}^q+R_{i,j,q}-2(M_i+\Delta_q(x,y))(x_{i,j}^q-y_{i,j}^q)\right)(z-y_{i,j}^q)\le \varepsilon
    \]
    which implies that
    \begin{align*}
        \left(F_j(\xi_i^q)-\xi_{i,j}^q\right)(z-y_{i,j}^q)&\le\varepsilon+(R_{i,j,q}-2(M_i+\Delta_q(x,y))(x_{i,j}^q-y_{i,j}^q))(y_{i,j}^q-z)\\
        &\le \varepsilon+\frac{\rho^2}{10}+2|x_{i,j}^q-y_{i,j}^q|\le \varepsilon +\frac{\rho^2}{10}+2\frac{\rho^2}{10e^{13}}\le \rho^2,
    \end{align*}
    which is
    \begin{align}\label{eq:tmpsecondplayer}
        (F_j(\xi_i^q)-\xi_{i,j}^q)(F_j(\xi_i^q)-y_{i,j}^q)\le \rho^2,
    \end{align}
    after specializing it for $z=F_j(\xi_i^q)$.

    By averaging \Cref{eq:tmpfirstplayer} with \Cref{eq:tmpsecondplayer} we get:
    \[
    (F_j(\xi_i^q)-\xi_{i,j}^q)^2\le\rho^2\implies |F_j(\xi_i^q)-\xi_{i,j}^q|\le\rho\quad\forall j\in[m].
    \]
    which implies that $\xi_i^q=\frac{x_i^q+y_i^q}{2}$ is a solution to \SB, which is a contradiction. Thus there are at least $1/\delta$ many $i$ such that $\|x_i^q-y_i^q\|_1\ge \frac{\rho^2}{10e^{13}}$, and
    \begin{align*}    
    \|x^q-y^q\|^2&\ge \sum_{i\in B_1(\Delta_q(x,y))}\|x_i^q-y_i^q\|^2\\
    &\ge \frac{1}{m}\sum_{i\in B_1(\Delta_q(x,y))}\|x_i^q-y_i^q\|_1^2\\
    &\ge \frac{\rho^4}{100 e^{26}\delta m}=4m\ge 3m+1,
    \end{align*}
    concluding the proof, since by definition $b(q)=\cE_q(\|x^q-y^q\|^2)=1.$%
\end{proof}

\subsection{Proof of \Cref{lem:dicotomy}}
\dichotomy*
\begin{proof}
By \Cref{lem:bigsignal} we have that either there is a $(i,q)\in[n]\times V$ such that $(x_i^q+y_i^q)/2$ is a solution to \SB, or that for all $q\in V$ such that $s_q(x,y)=1$ then $b(q)=1$. On the other hand, \Cref{lem:smallsignal} always guarantees that $s_q(x,y)=0$ implies that $b(q)=0$. Thus, we can assume that $s_v(x,y)=1\implies b(v)=1$ and $s_v(x,y)=0\implies b(v)=0$ for all $v\in V$.
Now, we show that in this case, the assignment $b$ satisfies all the gates:
\begin{itemize}[leftmargin=0.5cm]
\item  $(u,v,w)\in\cG_{\NOR}$. Assume that $\cE_u(x,y)=\cE_v(x,y)=b(u)=b(v)=0$, then $s_w(x,y)=g(\cE_u(x,y)+\cE_v(x,y))=g(0)=1$. By assumption, this implies that also $b(w)=1$ and thus the gate is satisfied.
On the other hand, if either $b(u)=1$ or $b(v)=1$, then $\cE_u(x,y)+\cE_v(x,y)\ge 1$ and thus $s_w(x,y)=0$. By assumption, this implies that $b(w)=0$, and thus \NOR gates are satisfied.
\item $(u,v,w)\in\cG_{\PURIFY}$.
If $b(u)=0$ then $\cE_u(x,y)=0$ and thus $s_v(x,y)=s_w(x,y)=\ell(\cE_u(x,y)\pm1/4)=0$, which by assumption implies that $b(v)=b(w)=0$.
Similarly, if $b(u)=1$ then $\cE_u(x,y)=1$ and $s_v(x,y)=s_w(x,y)=\ell(\cE_u(x,y)\pm1/4)=1$, which by assumption implies that $b(v)=b(w)=1$. On the other hand, if $b(u)=\bot$, then $\cE_u(x,y)\in(0,1)$ but either $\cE_u(x,y)-1/4\le 5/12$ or $\cE_u(x,y)+1/4\ge 7/12$, which implies that either $s_v(x,y)=0$ or $s_w(x,y)=1$, thus satisfying the gate's constraints.

\item $(u_1,\ldots, u_N,v)\in\cG_{\ORACLE}$.
Assume that $(b(u_1),\ldots, b(u_N))\in\{0,1\}^N$. Then $\cE_{u_i}(x,y)=b(u_i)$ for all $i\in[N]$, and $s_v(x,y)=\cH((b(u_i))_{i\in[N]},\cL)$. By \Cref{lem:inteprolation}, we get that $s_v(x,y)=\cL(b(u_1),\ldots, b(u_N))\in\{0,1\}$, which trivially shows that the gate is satisfied. By assumption we have that $b(v)=\cL(b(u_1),\ldots, b(u_N))$ as desired.
\end{itemize}

This proves that $b$ is a valid assignment to \OPC. 
\end{proof}

\printbibliography
\end{document}